\algnewcommand\algorithmicforeach{\textbf{for each}}
\newcommand{\floor}[1]{\lfloor #1 \rfloor}
\newcommand{\dsh}{{\delta_{sh}}}
\newcommand{\alg}{{a_{lg}^*}}
\newcommand{\ds}{\displaystyle}
\newcommand{\longshort}[2]{#1} 
\title{A Reallocation Algorithm for Online Split Packing of Circles}
\titlerunning{A Reallocation Algorithm for Online Split Packing of Circles} 
\author{Shunhao Oh}{School of Computing, National University of Singapore, Singapore}{ohoh@u.nus.edu}{}{}
\author{Seth Gilbert}{School of Computing, National University of Singapore, Singapore}{seth.gilbert@comp.nus.edu.sg}{}{}
\authorrunning{S. Oh and S. Gilbert} 
\subjclass{
\ccsdesc[500]{Theory of computation~Packing and covering problems},
\ccsdesc[500]{Theory of computation~Online algorithms},
\ccsdesc[500]{Theory of computation~Computational geometry}
}
\keywords{circle packing, online algorithms, dynamic resource allocation}
\begin{document}

\maketitle

\begin{abstract}
The Split Packing algorithm \cite{splitpacking_ws, splitpackingsoda, splitpacking} is an offline algorithm that packs a set of circles into triangles and squares up to critical density. In this paper, we develop an online alternative to Split Packing to handle an online sequence of insertions and deletions, where the algorithm is allowed to reallocate circles into new positions at a cost proportional to their areas. The algorithm can be used to pack circles into squares and right angled triangles. If only insertions are considered, our algorithm is also able to pack to critical density, with an amortised reallocation cost of $O(c\log \frac{1}{c})$ for squares, and $O(c(1+s^2)\log_{1+s^2}\frac{1}{c})$ for right angled triangles, where $s$ is the ratio of the lengths of the second shortest side to the shortest side of the triangle, when inserting a circle of area $c$. When insertions and deletions are considered, we achieve a packing density of $(1-\epsilon)$ of the critical density, where $\epsilon>0$ can be made arbitrarily small, with an amortised reallocation cost of $O(c(1+s^2)\log_{1+s^2}\frac{1}{c} + c\frac{1}{\epsilon})$.
 \end{abstract}


\section{Introduction}
\label{section:introduction}
A common class of problems in data structures requires handling a sequence of online requests. In a dynamic resource allocation problem, one has to handle a sequence of allocation and deallocation requests. A good allocation algorithm would run fast, and allocate items in a way that uses as few resources as possible to store the items.

Classical solutions disallow the algorithm from reallocating already-placed items except during deletion. In our problem, the algorithm is allowed to reallocate already-placed items with additional cost. Instead of bounding the running time of the algorithm, we focus on trying to bound the reallocation costs required to handle the sequence of requests, while minimising the amount of space required to handle all the requests.


We look at online circle packing, where we try to dynamically pack a set of circles of unequal areas into a unit square while allowing reallocations. Our work builds on insights from the Split Packing papers by Fekete, Morr, and Scheffer \cite{splitpacking_ws, splitpackingsoda, splitpacking}. The Split Packing papers derive the critical density $a$ of squares and obtuse triangles (triangles with an internal angle of at least $90^\circ$). The critical of density a region is the largest value $a$ such that any set of circles with total area at most $a$ can be packed into the region. To simplify our discussion, we sometimes refer to the critical density of a region as the ``capacity'' of the region.

As shown in the paper \cite{splitpackingsoda, splitpacking}, the critical density for a square is equal to the combined area of the two circles in the configuration in Figure \ref{fig:splitsquare_before}. The critical density for an obtuse triangle is the area of its incircle. The Split Packing Algorithm, presented in the paper, is an offline algorithm that packs circles into squares and obtuse triangles to critical density.
Figure \ref{fig:split_packing_examples} shows example circle packings produced by the Split Packing Algorithm and our Online Split Packing Algorithm.

\begin{figure}[!h]
  \centering
    \begin{subfigure}[b]{.4\linewidth}
      \centering
      \includegraphics[width=.6\linewidth]{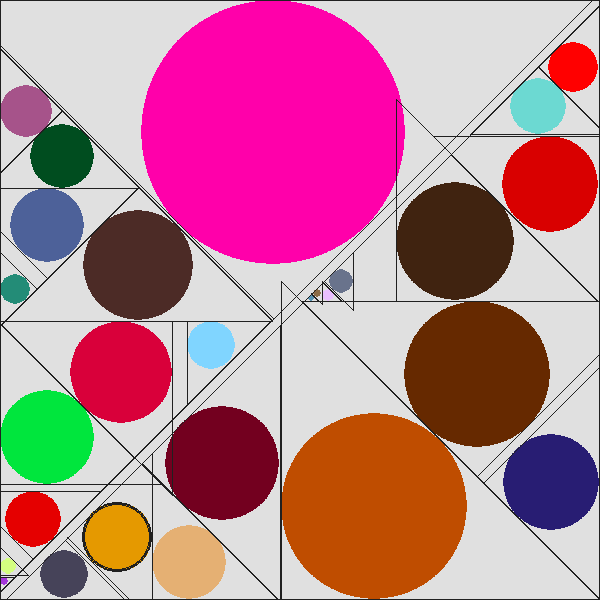}
      \caption{Split Packing}
      \label{fig:offline_split_packing}
    \end{subfigure}%
    \begin{subfigure}[b]{.4\linewidth}
      \centering
      \includegraphics[width=.6\linewidth]{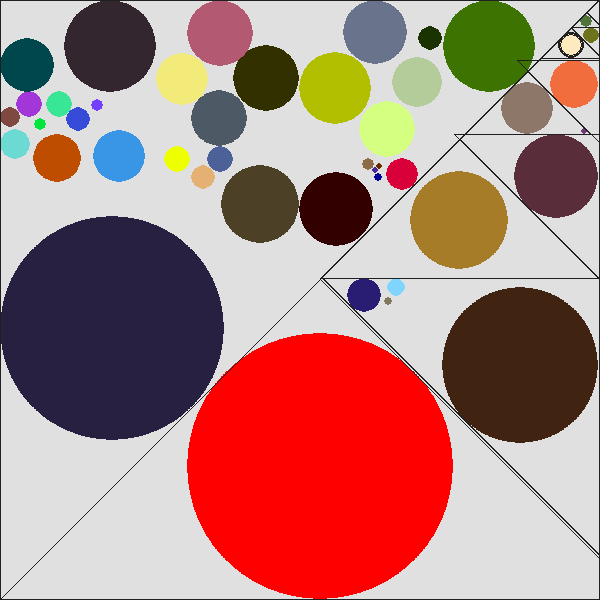}
      \caption{Online Split Packing}
      \label{fig:online_split_packing}
    \end{subfigure}%
  \caption{Sample packings produced by the two algorithms.}
  \label{fig:split_packing_examples}
\end{figure}

\subsection{The Problem of Online Circle Packing}
In online circle packing, we are given a fixed region, and an online sequence of insertion and deletion requests for circles of varying sizes into the region. In between requests, all circles need to be packed within the bounds of the region such that no two circles overlap. While traditionally circles are not to be moved once placed, in our setting, at any point of time, the algorithm is allowed to reallocate sets of circles, at a cost proportional to the sum of the areas of the circles reallocated (volume cost). We do not require reallocations to be in place, meaning that when a set of circles is to be reallocated to new positions, we can see them as being simultaneously removed from the packing, then placed in their new locations. (This is in contrast to requiring circles to be moved one at a time to their new locations.) We aim to bound the total reallocation cost incurred by the algorithm throughout the packing.

An allocation algorithm is said to have a packing density of $A$ on a region if it can handle any sequence of insertion and deletion requests into the region, as long as the total area of the circles in the region at any point of time is at most $A$. The algorithm packs to critical density on a region if it attains a packing density equal to the region's capacity.


\subsection{Results}
Our main result is the Online Split Packing Algorithm, an dynamic circle packing algorithm based on the original Split Packing Algorithm. The Online Split Packing Algorithm can be used to pack circles into squares and right angled triangles.
\begin{enumerate}
\item
For insertions only into a square, the algorithm packs to critical density, with an amortised reallocation cost of $O(c(\log_2 \frac{1}{c}))$ for inserting a circle of area $c$.

\item
For insertions only into a right angled triangle of side lengths $\ell$, $s\ell$ and $\ell\sqrt{1+s^2}$, where $s \geq 1$, the algorithm can achieve critical density, with an amortised reallocation cost of $O(c(1+s^2)\log_{1+s^2}\frac{1}{c})$ for inserting a circle of area $c$.

\item
For insertions and deletions into a region (a square or a right-angled triangle) of capacity $a$, the algorithm achieves a packing density of $a(1-\epsilon)$, where $\epsilon$ can be defined to be arbitrarily small. Consequently, we add an amortised reallocation cost of $O(c\frac{1}{\epsilon})$ for inserting a circle of area $c$. In other words, we achieve an amortised reallocation cost of $O(c(\log_2 \frac{1}{c} + \frac{1}{\epsilon}))$ for squares, and $O(c((1+s^2)\log_{1+s^2}\frac{1}{c} + \frac{1}{\epsilon}))$ for right angled triangles.
\end{enumerate}

\subsection{Challenges of Online Circle Packing}

The problem of packing circles into a square is a difficult problem in general. The decision problem of whether a set of circles of possibly unequal areas can be packed into a square has been shown to be NP-hard \cite{demaine2010circle}. Given $n$ equal circles, the problem of finding the smallest square that can fit these circles only has proven optimal solutions for $n \leq 35$ \cite{locatelli2002packing}. Packing a square to critical density has only been recently solved with Split Packing in 2017.

The best known algorithm for online packing of squares into squares is given by Brubach \cite{improvedsquareintosquare}, with a packing density of $\frac{2}{5}$. By embedding circles in squares, we obtain an online circle packing algorithm with a packing density of $\frac{\pi}{4}\times\frac{2}{5} \approx 0.3142$ \cite{splitpacking}. The critical density for offline packing of circles into squares, as given by Split Packing, is $\frac{\pi}{3+2\sqrt{2}} \approx 0.5390$.

The required arrangement of circles in a tight packing is highly dependent on the distribution of circle sizes. The Split Packing Algorithm \cite{splitpackingsoda, splitpacking} packs to critical density by packing circles in a top-down, divide and conquer fashion that starts with sorting and partitioning the circles by size. Small changes to the inputs may thus result in very different packings. An example of this is shown in Figure \ref{fig:packvis}.

\begin{figure}[!h]
  \centering
    \begin{subfigure}[b]{.3\linewidth}
      \centering
      \includegraphics[width=.7\linewidth]{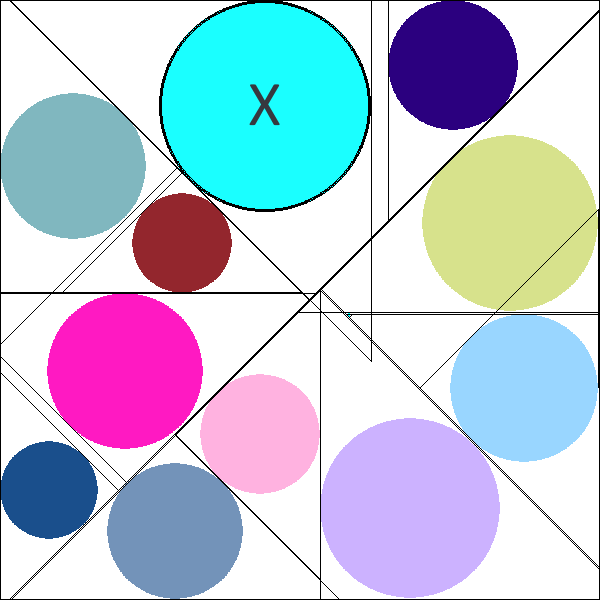}
      \label{fig:packvis_size1}
    \end{subfigure}%
    \begin{subfigure}[b]{.3\linewidth}
      \centering
      \includegraphics[width=.7\linewidth]{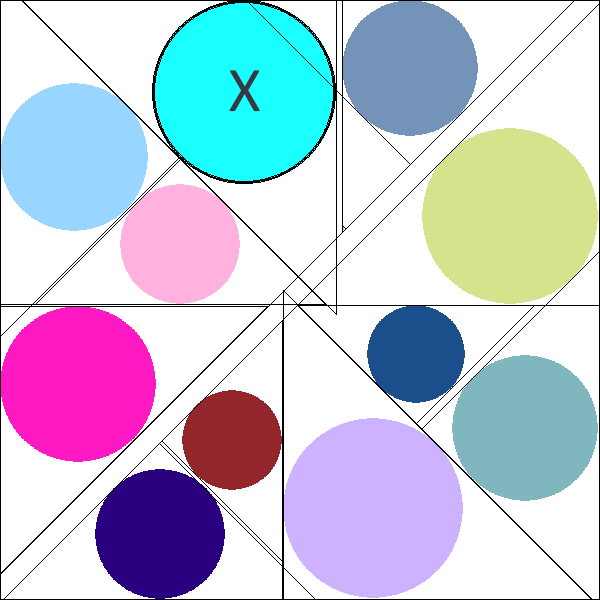}
      \label{fig:packvis_size2}
    \end{subfigure}%
    \begin{subfigure}[b]{.3\linewidth}
      \centering
      \includegraphics[width=.7\linewidth]{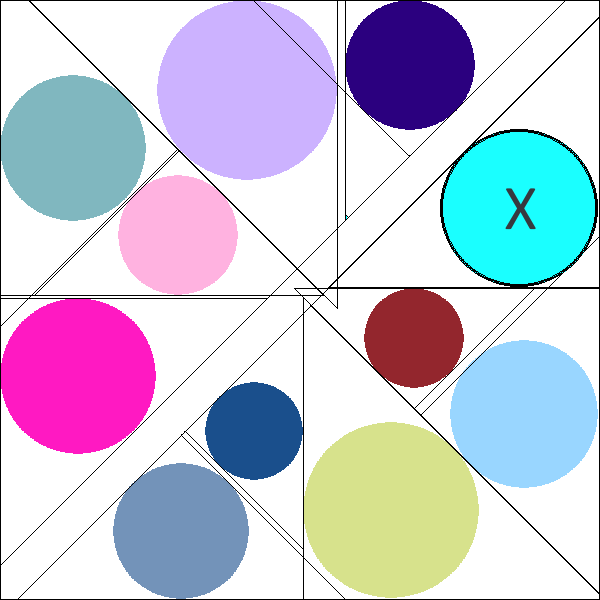}
      \label{fig:packvis_size3}
    \end{subfigure}
  \caption{Split Packing Algorithm differences when the size of the circle $X$ is altered slightly.}
  \label{fig:packvis}
\end{figure}



Our algorithm packs an online sequence of circles tightly by allowing reallocations. When packing with reallocations, we aim to repack only a part of the configuration at a time. We do this by making use of a binary tree structure similar to the original Split Packing Algorithm. By designing a new packing strategy, we maintain a tree where left children are packed tightly, while extra slack is kept in the rightmost node of each level (this refers to nodes on the right spine of the binary tree) to allow for circle movement. New circles are recursively inserted into the rightmost node of each level, where it eventually triggers a repack of an entire subtree at a certain level to allocate the circle.

\subsection{Related Work}
There are many examples of existing work in online resource allocation which do not allow items to be reallocated once placed. Online square packing into squares has been studied by Han et al. \cite{onlineremovablesquarepacking} and Fekete and Hoffman \cite{square_into_square}. Other examples are squares in unbounded strips \cite{onlinesquarepacking}, rectangle packing \cite{ontwodimensionalpacking} and multiple variants of online bin packing \cite{improvedsquareintosquare, online_multidim_bin_packing, Epstein2005, onlinecubes, onespaceboundedtwodimensional}.

Allowing reallocations can lead to better results than if reallocations were not permitted. Ivkovi{\'c} and Lloyd \cite{ivkovic2009fully} provide an algorithm for bin packing that is $\frac{5}{4}$-competitive with the best practical offline algorithms by allowing reallocations. This beats the lower bound of $\frac{4}{3}$ (proven in the same paper) if reallocations are not considered.
Berndt et al. \cite{berndt2014fully} study this problem further and achieves better upper and lower bounds.
There is also work that aims to bound reallocation costs rather than running times. Fekete et al. \cite{fekete2017efficient} study online square packing with reallocations. Bender et al. study dynamic resource allocation with reallocation costs in problems like scheduling \cite{bender2015reallocation}, memory allocation \cite{bender2014cost} and maintaining modules on an FPGA \cite{maintainingarrays}.
We also take ideas from packed memory arrays \cite{cacheobliviousbtrees, adaptivepma}.

Most existing work on circle packing focuses on the global, offline optimization problem of packing either equal or unequal circles into various shapes. Many existing results are collected in the packomania website \cite{packomania}. Many heuristic methods have also been developed for offline circle packing. An example is \cite{george1995packing}, which explores a mix of different strategies, like nonlinear mixed integer programming and genetic algorithms to fit circles of unequal sizes into a rectangle. Offline packings of equal-size spheres into a cube are looked into in \cite{spherepacking}. We refer the reader to \cite{circleliteraturereview} for a review of other circle and sphere packing methods. 

Recent work on online circle packing focuses on packing an online sequence of circles into the minimum number of square bins. Hokama et al. \cite{hokama2016bounded} provide an asymptotic competitive ratio of $2.4394$, and gives a lower bound of $2.2920$ for the problem. The upper bound was improved to $2.3536$ in \cite{lintzmayer2017online}. We have not found existing work that takes into account the possibility of reallocations for online circle packing.

\section{Background - The Split Packing Algorithm}
\label{section:background}
We briefly describe the original Split Packing Algorithm. Given a square or obtuse triangle as the region, and any set of circles with total area at most the capacity of the region, the Split Packing Algorithm can pack the set of circles into the region.

The algorithm works in a top down, divide-and-conquer fashion. In the simple case, if the Split Packing Algorithm is given a single circle to be packed into a region, the circle is simply placed into the center of the region. If given at least two circles to be packed, the algorithm partitions the circles into two sets, splits the region into two smaller subregions, and recursively packs each set of circles into its corresponding subregion. Using a binary tree analogy for this algorithm, these subregions are the two children of the original region.

\begin{figure}[!h]
  \centering
    \begin{subfigure}[b]{.27\linewidth}
      \centering
      \includegraphics[width=.93\linewidth]{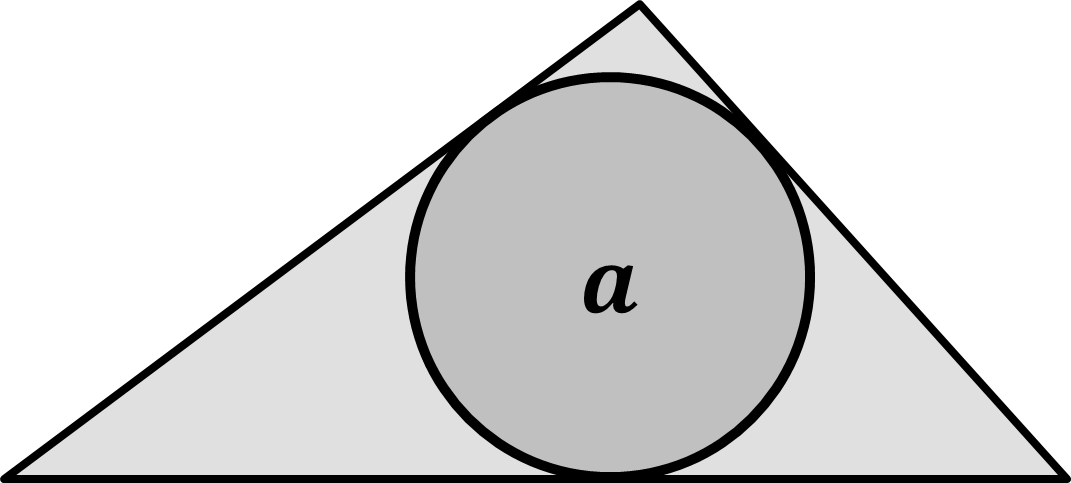}
      \caption{}
      \label{fig:splittri_before}
    \end{subfigure}%
    \begin{subfigure}[b]{.27\linewidth}
      \centering
      \includegraphics[width=.93\linewidth]{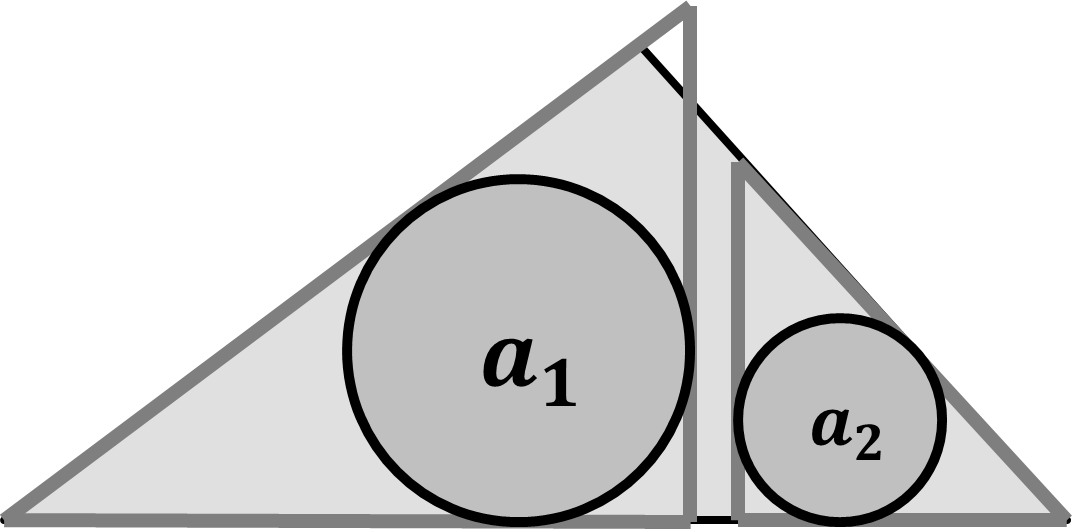}
      \caption{}
      \label{fig:splittri_after}
    \end{subfigure}%
    \begin{subfigure}[b]{.21\linewidth}
      \centering
      \includegraphics[width=.64\linewidth]{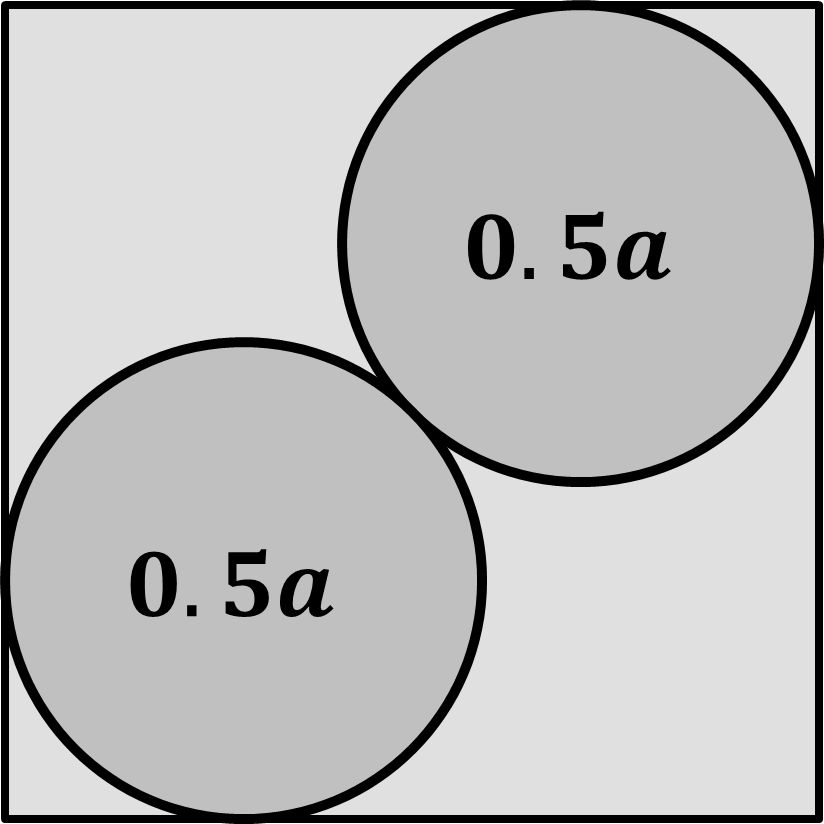}
      \caption{}
      \label{fig:splitsquare_before}
    \end{subfigure}%
    \begin{subfigure}[b]{.25\linewidth}
      \centering
      \includegraphics[width=.64\linewidth]{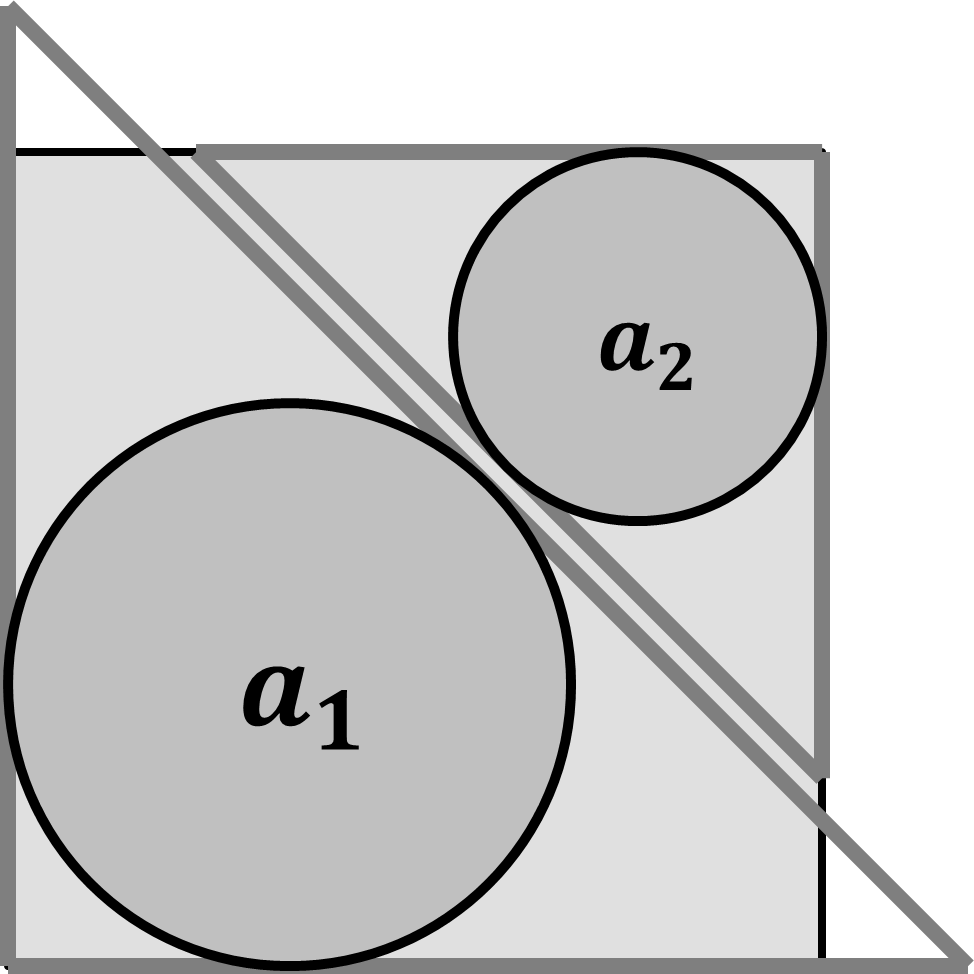}
      \caption{}
      \label{fig:splitsquare_after}
    \end{subfigure}%
  \caption{Splitting an obtuse triangle or square into two right angled triangles.}
  \label{fig:splittrisquare}
\end{figure}

The Split Packing papers \cite{splitpackingsoda, splitpacking} describe how a region of capacity $a$ is to be split into two smaller regions of capacities $a_1$ and $a_2$ respectively, where $a_1 + a_2 \leq a$. If the original region is an obtuse triangle, $a$ is its incircle area. As shown in Figure \ref{fig:splittri_after}, the two subregions will be right angled triangles, defined by squeezing circles of areas $a_1$ and $a_2$ respectively into the left and right corners of the original triangle, which has been oriented such that the obtuse angle is at the top. If the original region is a square, $a$ is as shown in Figure \ref{fig:splitsquare_before}. As shown in Figure \ref{fig:splitsquare_after}, the two subregions will be isosceles right angled triangles, defined by squeezing circles of areas $a_1$ and $a_2$ respectively into opposite corners of the square.

In both cases, the capacities of the two subregions will then be $a_1$ and $a_2$ respectively. The Split Packing papers prove that as long as $a_1 + a_2 \leq a$, the two subregions will not overlap each other. Note that the two subregions may not be contained within the original region, as shown in Figure \ref{fig:splittrisquare}. To account for this, Split Packing rounds the corners of triangles to form hats, which will remain within the bounds of their parent regions.

To decide on the values of $a_1$ and $a_2$ for the split, the Split Packing Algorithm calls \Call{SPLIT}{$C$, $F$} (Algorithm \ref{alg:split}) to partition the set of circles $C$ into two sets, $C_1$ and $C_2$, to be packed into the left and right children respectively. $a_1$ and $a_2$ are then determined to be the sums of the areas of the circles in $C_1$ and $C_2$ respectively.

\Call{SPLIT}{$C$, $F$} has a second parameter, the ideal split key $F = (f_1,f_2)$, which is a function of the shape we are packing the circles into. Intuitively, the ideal split key is represents the ``ideal split'' of the shape into two shapes of smaller capacity. The ratio $f_1 : f_2$ would be the ratio of the capacities of the left and right children in this ideal split.

\begin{algorithm}
\caption{\Call{SPLIT}{}, a greedy algorithm to partition the set $C$}\label{alg:split}
\begin{algorithmic}[1]
\Procedure{Split}{$C,F=(f_1,f_2)$}
    \State $C_1 \gets \emptyset$
    \State $C_2 \gets \emptyset$
    \ForEach {$c \in C$ in decreasing order of area}
        \If {$\frac{\text{sum}(C_1)}{f_1} < \frac{\text{sum}(C_2)}{f_2}$} \Comment{sum($X$) is the total area of the circles in $X$}
    \State  $C_1 \gets C_1 \cup \{c\}$
        \Else
    \State  $C_2 \gets C_2 \cup \{c\}$
        \EndIf
    \EndFor
    \Return $(C_1, C_2)$
\EndProcedure
\end{algorithmic}
\end{algorithm}

As seen in the \Call{SPLIT}{$C$, $F$} algorithm, the largest circle in $C$ will always be placed into $C_1$, the child corresponding to $f_1$ in the split key. Thus, by swapping the two children if needed, we have the freedom to choose which of the two children the largest circle in $C$ will be packed into. We take advantage of this \longshort{in the proofs of Lemmas \ref{lem:splitpackingsemihats} and\ref{lem:repackcorrectnesstriangles} later on}{later to prove some properties of our algorithm}.


The paper goes further to describe how the same method can be used to pack objects of similar shapes into the same regions. However, the Split Packing Algorithm described is strictly offline, and a method to do this Split Packing dynamically has not been explored.

\section{Definitions}

Similar to the original Split Packing Algorithm, the Online Split Packing Algorithm has a binary tree structure, where each region is recursively split into two smaller nonoverlapping subregions. The root node is the original region we are packing circles into, which is either a square or a right angled triangle. A square splits into two right angled triangles, and a right angled triangle splits into two smaller right angled triangles.

\begin{figure}[!h]
  \centering
  \includegraphics[width=.35\linewidth]{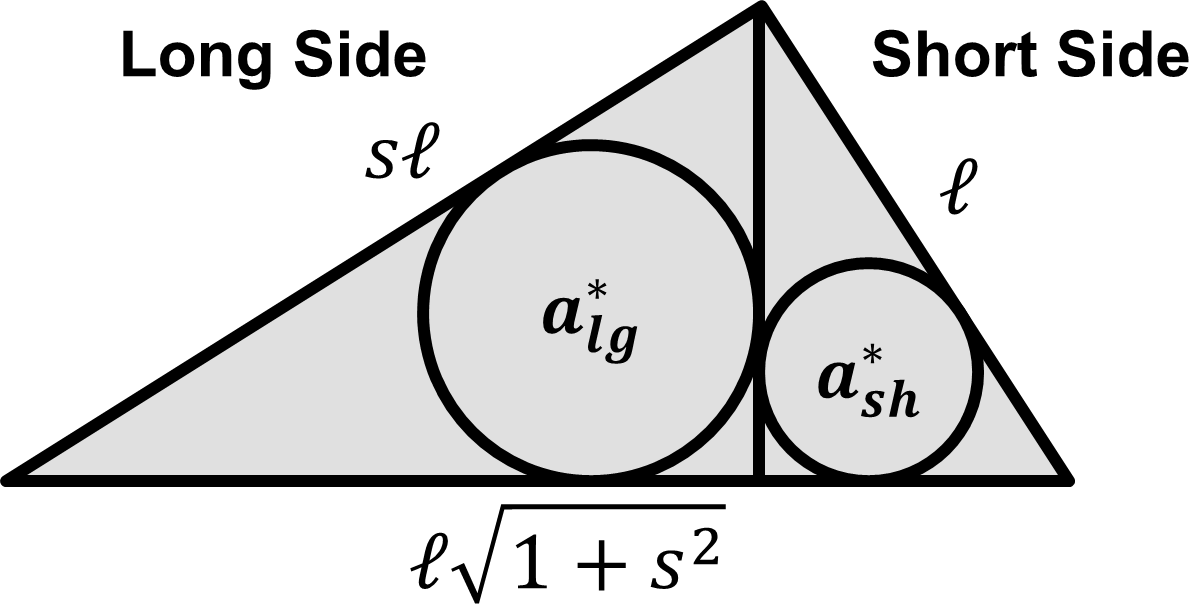}
  \caption{An $s$-right angled triangle, and its ideal split in to two smaller $s$-right angled triangles.}
  \label{fig:idealcapacities}
\end{figure}

\begin{definition}[$s$-Right Angled Triangle]
For any $s \geq 1$, we refer to a right angled triangle with side lengths $\ell$, $s\ell$ and $\ell \sqrt{1+s^2}$ as an $s$-right angled triangle. $s$ is the ratio of the lengths of the two shortest sides. In Figure \ref{fig:idealcapacities}, a right-angled triangle is oriented such that the hypotenuse is at the bottom. We can split the triangle vertically into two sides. The long side is closer to the side with length $s\ell$, and the short side is closer to the side with length $\ell$.
\label{dfn:srightangledtriangle}
\end{definition}

\begin{definition}[$s$-shape]
To simplify our discussion of the amortised reallocation cost for both right angled triangles and squares later on, we define an $s$-shape to refer to:
\begin{enumerate}
\item A $1$-right angled triangle or a square if $s=1$
\item An $s$-right angled triangle if $s>1$
\end{enumerate}
\label{dfn:sshapes}
\end{definition}

Splitting an $s$-shape will form two more $s$-shapes, for the same value of $s$. Notably, a square is a $1$-shape, and all descendant nodes of a square will be $1$-right angled triangles.

\begin{figure}[!h]
  \centering
    \begin{subfigure}[b]{.35\linewidth}
      \centering
      \includegraphics[width=.6\linewidth]{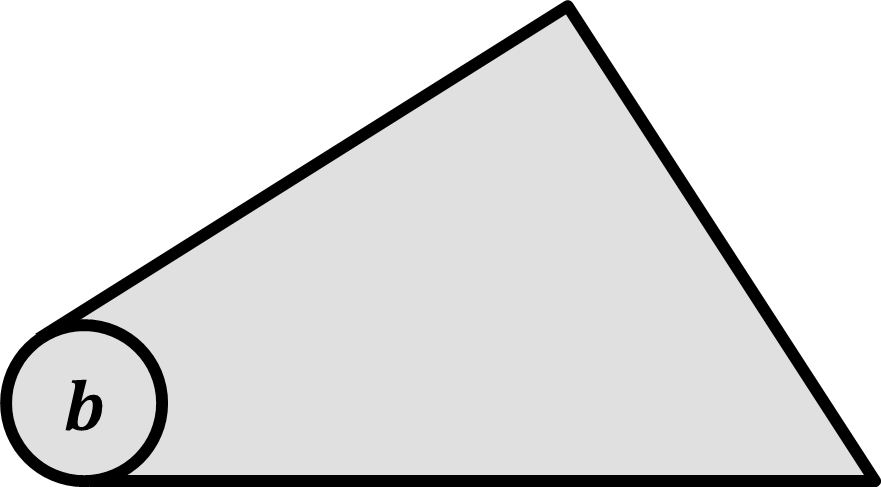}
      \caption{A $b$-semihat.}
      \label{fig:bsemihat}
    \end{subfigure}%
    \begin{subfigure}[b]{.35\linewidth}
      \centering
      \includegraphics[width=.6\linewidth]{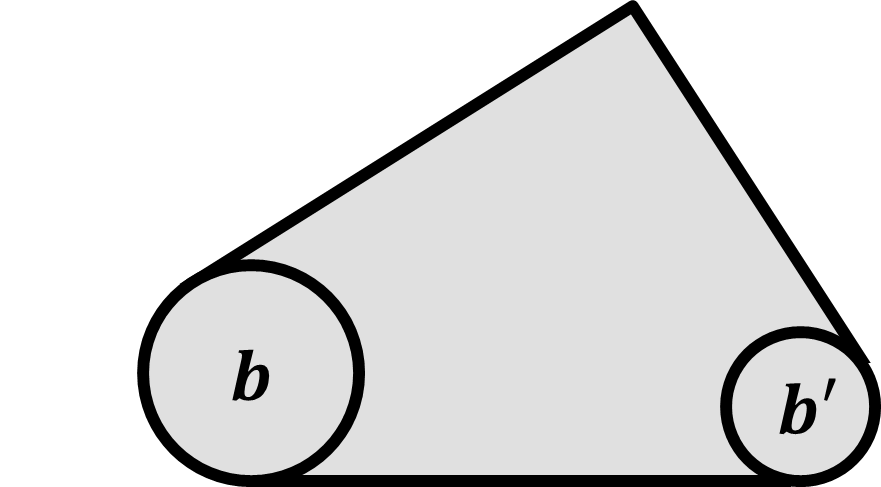}
      \caption{A $b$-$b'$-semihat.}
      \label{fig:bbsemihat}
    \end{subfigure}
  \caption{Semihats from a right angled triangle.}
  \label{fig:semihats}
\end{figure}


\begin{definition}
We define the following terms:
\begin{enumerate}
\item \textbf{($b$-curve)}:
We say the long/short side (Definition \ref{dfn:srightangledtriangle}) of a triangle has a $b$-curve if its corresponding corner has been trimmed to an arc of a circle with area $b$, so that such a circle fits snugly into the corner. Figure \ref{fig:bsemihat} is a triangle with a $b$-curve on the long side.

\item \textbf{($b$-semihat)}:
A triangle with a $b$-curve on the long side.

\item \textbf{($b$-hat)}:
A triangle where both sides have a $b$-curve.

\item \textbf{($b$-$b'$-semihat)}:
The intersection between a $b'$-hat and a $b$-semihat. It is a triangle where the short side has a $b'$-curve, and the long side has a $\max\{b,b'\}$-curve.
A $b'$-hat can also be thought of as a $0$-$b'$-semihat.

\item \textbf{(Full triangle)}:
A $0$-hat. A full triangle has no rounded corners.

\item \textbf{(Capacity)}:
The \textbf{capacity} of a square or a right angled triangle is the largest value $a$ such that any set of circles with total area at most $a$ can be packed within its bounds. (This is the same as the critical density of its shape defined in Section \ref{section:introduction}). The capacity of a hat or semihat is the capacity of its underlying triangle.

\item \textbf{(Total Size)}:
The \textbf{total size} of a shape is the sum of the areas of the circles within the shape's bounds. We write $\Call{totalSize}{C}$ to represent the total area of a set of circles $C$.

\end{enumerate}
\end{definition}

In this paper, we use the term ``triangles'' to refer to full triangles, hats and semihats formed from right angled triangles. We also extend our definition of $s$-shapes (Definition \ref{dfn:sshapes}) to also refer to hats and semihats formed from $s$-right angled triangles.

We also introduce the concept of the ideal capacities $a_{lg}^*$ and $a_{sh}^*$, which represent the ``ideal split'' of a triangle into two smaller triangles. We make reference to these ideal capacities in many parts of our algorithm. Intuitively, we are concerned with the ideal splits as a triangle split by its ideal capacities (Figure \ref{fig:idealcapacities}) divides perfectly into two smaller triangles, with capacities adding up to the original triangle's capacity. In non-ideal splits, subregions may require rounding their corners to remain within the bounds of the original triangle, with the degree of rounding depending on how much their capacities deviate from the ideal.

\begin{definition}[Ideal Capacities of a Triangle]
Consider splitting a triangle by its split ratio as defined in \cite{splitpackingsoda, splitpacking}. This refers to drawing a vertical line from the right-angled corner to perpendicularly intersect the hypotenuse as shown in Figure \ref{fig:idealcapacities}. When an $s$-right angled triangle is split this way, the subregions on the long and short sides have capacities $a_{lg}^*$ and $a_{sh}^*$ respectively. We refer to these as their ideal capacities.
\label{dfn:idealcapacities}
\end{definition}


\section{The Online Split Packing Algorithm}
\label{section:onlinesplitpackingalgorithm}

This section describes the Online Split Packing Algorithm. Our algorithm relies on insights from the Split Packing Algorithm for packing circles into a square. We first discuss a version of the algorithm that only handles insertions. Deletions will be discussed in Section \ref{section:insertionsanddeletions}.

We make use of a binary tree structure to represent our packing. Each node of the binary tree represents a subregion of the original shape, with the root of the binary tree being the original shape we are packing circles into. Every non-root node in the tree is a subregion of its parent node. Nodes of the tree can contain circles. These contained circles are the circles we intend to pack within the bounds of the node. The circles contained in a node includes all circles contained within its descendants. A node is empty if it contains no circles.

\begin{figure}[!h]
  \centering
    \begin{subfigure}[b]{.5\linewidth}
      \centering
      \includegraphics[width=.8\linewidth]{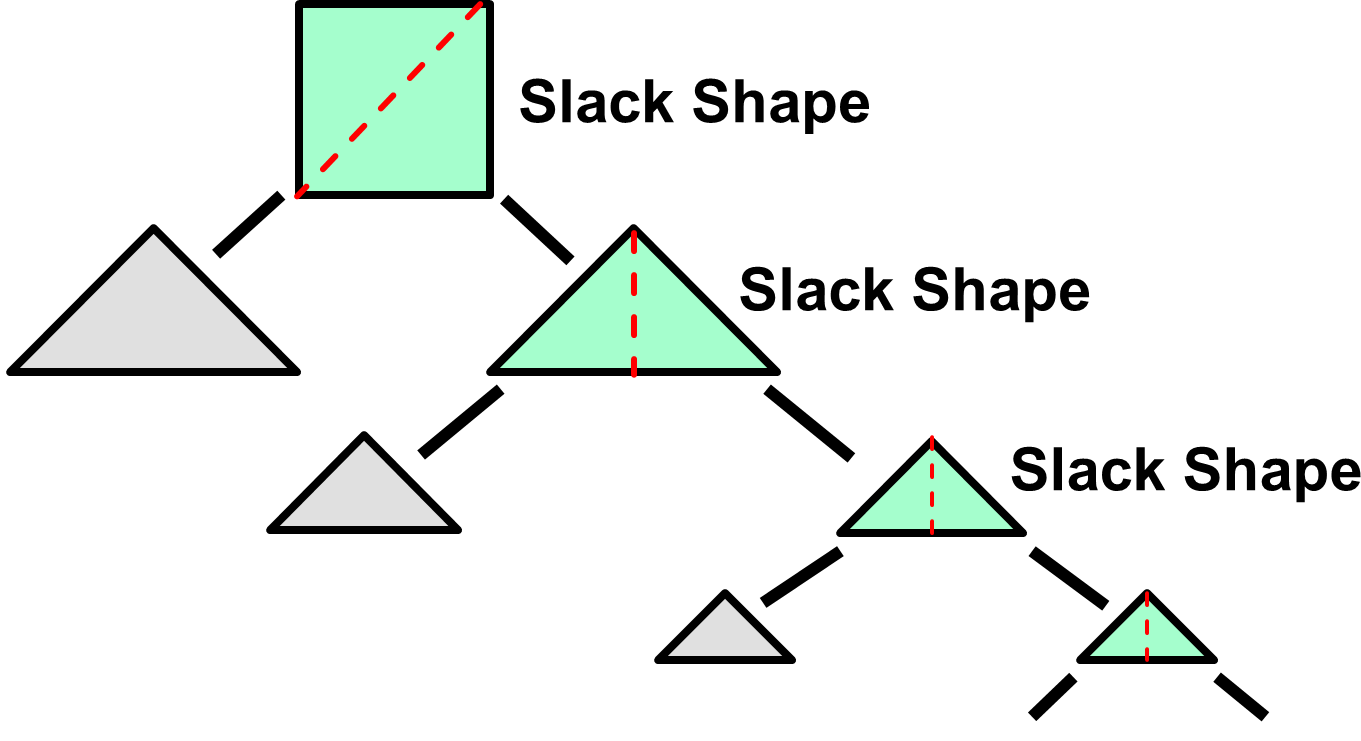}
      \caption{Initial structure containing no circles}
      \label{fig:tree_structure_initial}
    \end{subfigure}%
    \begin{subfigure}[b]{.5\linewidth}
      \centering
      \includegraphics[width=.8\linewidth]{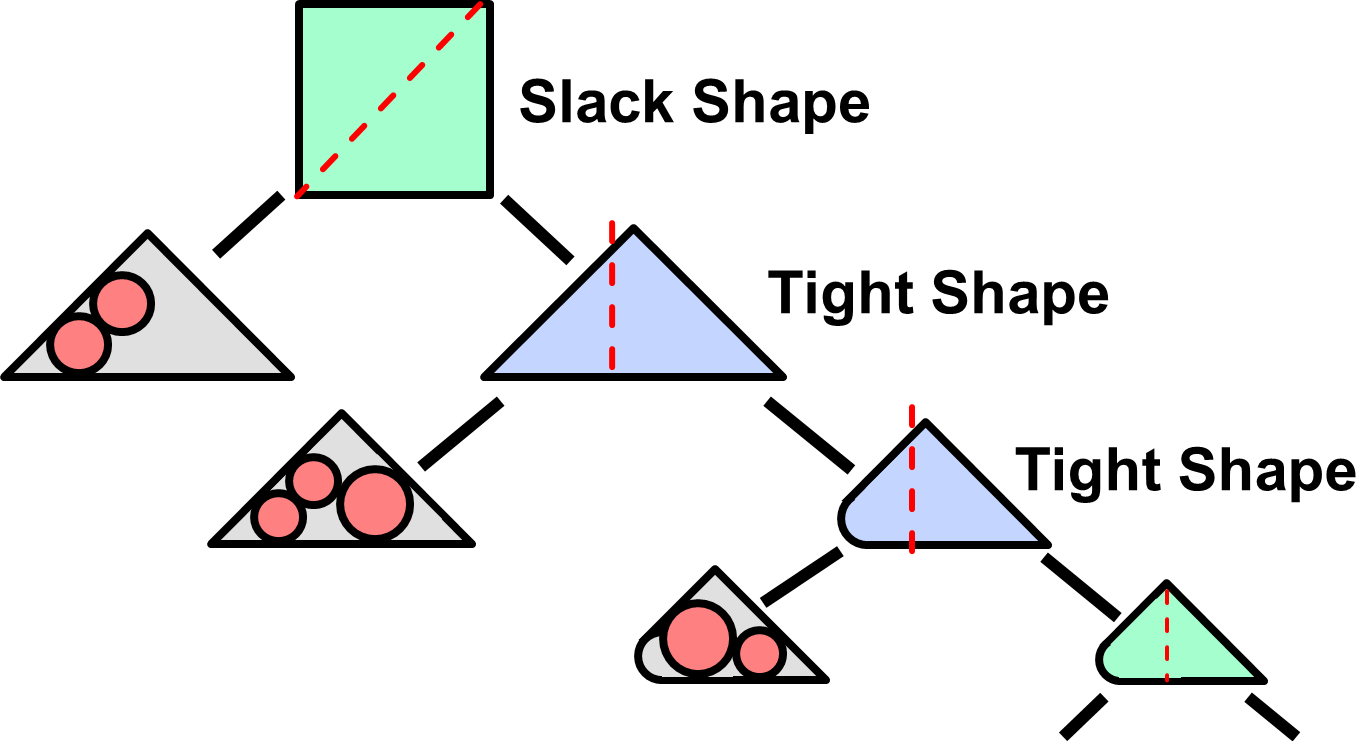}
      \caption{After circles have been inserted}
      \label{fig:tree_structure_partial}
    \end{subfigure}
  \caption{Binary tree structure used by the Online Split Packing Algorithm.}
  \label{fig:tree_structure}
\end{figure}

The root of the binary tree is the original region we are packing circles into. Each level of the binary tree has a rightmost node, which is split into two nonoverlapping subregions that become the left and right children of the node. The left child is always packed with the original Split Packing Algorithm (described later in Lemma \ref{lem:splitpackingsemihats}), while the right child, as the rightmost node of the next level, is again recursively split into two nonoverlapping subregions. The exact details of this split is explained in Section \ref{section:splittingshapes}. We always maintain this structure between insertions. An example of this can be seen in Figure \ref{fig:tree_structure_partial}. Because of this structure, we primarily focus on the rightmost node of each level. (We simply refer to them as ``rightmost nodes'' in the rest of this paper for brevity)


The initial configuration of the binary tree is the result of running the \Call{Repack}{} procedure defined in Section \ref{section:repackalgorithm} on the original region (the root shape) with no circles. This would build a binary tree which is an infinitely long sequence of ``ideal splits'' (Figure \ref{fig:tree_structure_initial}). Each empty square will be split along its diagonal into two right angled triangles, and each empty right angled triangle will be split by its ideal capacities (Definition \ref{dfn:idealcapacities}) into two right angled triangles as seen in Figure \ref{fig:idealcapacities}. As the tree extends indefinitely, in an actual implementation of the algorithm, the data structure for the tree only needs to be built as deep as the deepest inserted circle in the tree.


\begin{definition}[Packing Invariant]
\label{dfn:packinginvariant}
Let $S$ be a rightmost node of capacity $a$, and let $C$ be a set of circles. The packing invariant is said to hold on $S$ with circles $C$ if:
\begin{enumerate}
\item
$S$ is either a square or a $b$-semihat for some $b \geq 0$.
\item
The total size of the circles in $C$ is at most $a$.
\item
If $S$ is a $b$-semihat with $b > 0$, then there exists a circle in $C$ with size at least $b$.
\end{enumerate}
The packing invariant is said to hold on the packing if on each level of the binary tree, the packing invariant holds on its rightmost node $S$ with the circles $C$ contained in $S$.
\end{definition}

\begin{definition}[Tight and Slack Shapes]
A rightmost node $S$ of the binary tree is called a:
\begin{enumerate}
\item
\textbf{Tight Shape} if the total size of the circles contained within the left child of $S$ is equal to the capacity of the left child.
\item
\textbf{Slack Shape} if $S$ is not a tight shape and the right child of $S$ is a full triangle.
\end{enumerate}
\label{dfn:tightandslackshapes}
\end{definition}

\begin{definition}[Shape Invariant]
\label{dfn:shapeinvariant}
The Shape Invariant holds on a rightmost node $S$ if it is either a tight shape or a slack shape. The Shape Invariant holds on a binary tree if it holds on the rightmost node of every level of the tree.
\end{definition}

Both invariants (Definition \ref{dfn:packinginvariant}, \ref{dfn:shapeinvariant}) are maintained between all insertion requests.
\longshort{The Packing Invariant is used for the proof of correctness (that the packing is valid), and will be shown to always hold in Section \ref{section:correctnessofrepack}. The Shape Invariant is used in the proof of the reallocation cost bound and algorithm termination, and will be shown to always hold in Section \ref{section:proofofcostbound}.}{The Packing Invariant is used for the proof of correctness (that the packing is valid). The Shape Invariant is used in the proof of the reallocation cost bound and algorithm termination. They will be shown to always hold in the full version of the paper.}



\subsection{Inserting a Circle}

\begin{algorithm}
\caption{Inserting a circle $c$ into a shape $S$}\label{alg:insertionalgorithm}
\begin{algorithmic}[1]
\Procedure{InsertCircle}{$c,S$}
\State $R \gets S.rightChild$
\If {$S$ is a tight shape or \Call{totalSize}{$R$} + $c.size < R.capacity$}
\State  \Call{InsertCircle}{$c,R$} \Comment{recurse on right child}
\Else 
\State  $C \gets S.containedCircles \cup \{c\}$
\State  \Call{Repack}{$C$, $S$} \Comment{repack current node, end recursion}
\EndIf 
\EndProcedure
\end{algorithmic}
\end{algorithm}

Algorithm \ref{alg:insertionalgorithm} inserts a new circle $c$ into a shape $S$. To insert a new circle $c$, we run Algorithm \ref{alg:insertionalgorithm} on the root of the binary tree. Algorithm \ref{alg:insertionalgorithm} recursively inserts the circle into the right child of each subsequent node, until a single repack is triggered at some level on its rightmost node $S$. \Call{Repack}{$C,S$} completely rebuilds a subtree rooted at $S$ to include the new circle in the packing, and incurrs a reallocation cost proportional to the total size of the circles within the subtree. Details on the \Call{Repack}{} algorithm are given in Section \ref{section:repackalgorithm}.





\begin{lemma}[Insertion Invariant]
\label{thm:insertioninvariant}
If \Call{InsertCircle}{$c,S$} (Algorithm \ref{alg:insertionalgorithm}) is called when the total size of circle $c$ and the circles already in $S$ is at most the capacity of $S$, then Algorithm \ref{alg:insertionalgorithm} will only (recursively) call \Call{InsertCircle}{$c,R$} (on Line 4) when the total size of the circle $c$ and the circles already in $R$ is at most the capacity of $R$.
\end{lemma}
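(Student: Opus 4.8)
The plan is to prove this as a single step of invariant preservation: assuming the hypothesis for $S$, I verify that the same bound holds for $R$ at the instant the recursive call on Line~4 fires. Since the procedure then recurses with $R$ in the role of $S$, this one step, applied repeatedly, yields the bound at every level the recursion reaches. Write $\mathrm{cap}(X)$ for the capacity of a node $X$ and $t(X)$ for its total size (that is, \Call{totalSize}{$X$}), and let $L$ and $R$ be the left and right children of $S$. Two structural facts drive the argument. First, because the circles contained in a node are exactly those of its two children, total size is additive: $t(S) = t(L) + t(R)$. Second, the split the algorithm performs on a rightmost node gives the children capacities that sum to the parent's, $\mathrm{cap}(L) + \mathrm{cap}(R) = \mathrm{cap}(S)$; this is the Split Packing split (which allows any $a_1 + a_2 \le \mathrm{cap}(S)$) taken with $a_1 + a_2 = \mathrm{cap}(S)$, and it is consistent with the initial ideal-split tree.

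The proof then proceeds by cases on why the guard on Line~3 --- a disjunction --- is satisfied. If it holds because $t(R) + c.size < \mathrm{cap}(R)$, the required inequality $t(R) + c.size \le \mathrm{cap}(R)$ is immediate. Otherwise the guard holds because $S$ is a tight shape, so by Definition~\ref{dfn:tightandslackshapes} we have $t(L) = \mathrm{cap}(L)$. Combining additivity, tightness, the hypothesis $t(S) + c.size \le \mathrm{cap}(S)$, and the capacity identity gives
\[
t(R) + c.size = t(S) + c.size - t(L) \le \mathrm{cap}(S) - \mathrm{cap}(L) = \mathrm{cap}(R),
\]
which is exactly the claimed bound for $R$. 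As the conclusion holds in both cases, the lemma follows.

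The step I expect to be the crux is the capacity identity $\mathrm{cap}(L) + \mathrm{cap}(R) = \mathrm{cap}(S)$ used in the tight case. The Split Packing construction only guarantees $\mathrm{cap}(L) + \mathrm{cap}(R) \le \mathrm{cap}(S)$, which is the wrong direction: without equality the chain above fails. I would therefore pin this down from the precise splitting rule described in the later section --- arguing either that the algorithm always splits a rightmost node so that the two capacities sum exactly to the parent's, or, equivalently, that the right child's capacity is set to the residual $\mathrm{cap}(S) - \mathrm{cap}(L)$ so that the final equality holds by construction. Everything else --- the additive decomposition of total size and the trivial first case --- is routine.
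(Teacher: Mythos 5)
Your proof is correct and follows essentially the same route as the paper's: both argue by cases on the disjunctive guard of Line~3, and in the tight case both combine $t(L)=\mathrm{cap}(L)$ with the additivity $t(S)=t(L)+t(R)$ and the capacity identity $\mathrm{cap}(L)+\mathrm{cap}(R)=\mathrm{cap}(S)$ (the paper phrases this as a proof by contradiction, you as a direct computation, and the identity you flag as the crux does hold by construction since \Call{Repack}{} always sets $a_R = a - a_L$).
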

\begin{proof}
Let $C_R$ be the set of circles to be packed into the right child. Let $C_L$ be the set of circles packed into the left child. Let $L$ and $R$ represent the left and right children respectively. There are two possible cases where we use the recursive call \Call{InsertCircle}{$c,R$}. If \Call{totalSize}{$R$} + $c.size < R.capacity$, it is clear the invariant is maintained. On the other hand, if $S$ is a tight shape, suppose that $\Call{totalSize}{C_R} > R.capacity$. As $S$ is a tight shape, $\Call{totalSize}{C_L} = L.capacity$, so $\Call{totalSize}{C = C_L\cup C_R} > R.capacity + L.capacity = S.capacity$, which contradicts the original assumption.
\end{proof}

The only precondition when calling \Call{InsertCircle}{$c,S$} (Algorithm \ref{alg:insertionalgorithm}) is for the total size of the new circle $c$ and the circles already in $S$ to be at most the capacity of $S$.
Lemma \ref{thm:insertioninvariant} states that the precondition will continue to be met when the algorithm recursively calls itself. This is used to show Lemma \ref{lem:insertionrepackcapacity}, which states that whenever \Call{Repack}{$C,S$} is called, the Packing Invariant holds on $S$ and its ancestors when the new circle $c$ is included in their sets of contained circles. This property is used for the algorithm's proof of correctness in \longshort{Section \ref{section:correctnessofrepack}.}{the full version of the paper}



\begin{lemma}
\label{lem:insertionrepackcapacity}
Suppose Algorithm \ref{alg:insertionalgorithm} is called to insert a new circle $c$ into an existing packing where the Packing Invariant (Definition \ref{dfn:packinginvariant}) holds. Assume that the total size of $c$ and the circles already in the packing is at most the capacity of the root node. Suppose Algorithm \ref{alg:insertionalgorithm} calls \Call{Repack}{$C,S$} to repack some shape $S$. Just before \Call{Repack}{$C,S$} is called, for any node $S'$ that will now contain the new circle $c$ ($S'$ can be $S$ or any ancestor of $S$), the Packing Invariant will hold on $S'$ with the circles $C'\cup\{c\}$, where $C'$ is current set of circles contained in $S'$.
\end{lemma}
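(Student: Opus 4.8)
The plan is to verify the three clauses of the Packing Invariant (Definition \ref{dfn:packinginvariant}) separately for each relevant node $S'$. The first step is to pin down which nodes these are. Since Algorithm \ref{alg:insertionalgorithm} only ever descends into right children before the repack branch is taken, the nodes that come to contain $c$ --- namely $S$ together with all its ancestors --- are precisely the nodes on the root-to-$S$ path, i.e.\ the rightmost nodes of levels $0$ up to the level of $S$ (the segment of the right spine from the root down to $S$). On each of these, \Call{InsertCircle}{$c,S'$} was invoked during the recursion, and, by hypothesis, the Packing Invariant held on it with its current circle set $C'$ before the insertion began. I would also record the key timing observation: we evaluate the invariant \emph{just before} \Call{Repack}{$C,S$} executes, at which point no repacking has yet occurred, so the geometric shape of every such $S'$ and its existing circle set $C'$ are identical to their pre-insertion state.

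With the nodes identified, clauses (1) and (3) are inherited. For clause (1), each $S'$ is a rightmost node whose shape has not changed, so the pre-insertion invariant's guarantee that it is a square or a $b$-semihat carries over unchanged. For clause (3), if $S'$ is a $b$-semihat with $b>0$, the pre-insertion invariant supplies a circle of size at least $b$ in $C'$; since $C'\subseteq C'\cup\{c\}$, that same circle still witnesses clause (3) after $c$ is added.

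The substance of the argument is clause (2), the capacity bound, for which I would invoke Lemma \ref{thm:insertioninvariant} (the Insertion Invariant). Its hypothesis holds at the root by the assumption of this lemma, namely that the total size of $c$ and the circles already in the root is at most the root's capacity. Applying the Insertion Invariant one level at a time down the recursion path, each recursive call \Call{InsertCircle}{$c,R$} is made with the same precondition re-established at the child, so at every $S'$ on the path we obtain $\Call{totalSize}{C'} + c.size \le S'.capacity$, equivalently $\Call{totalSize}{C'\cup\{c\}} \le S'.capacity$. This establishes clause (2) for each $S'$, including $S$ itself, since \Call{InsertCircle}{$c,S$} was called with its precondition intact.

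The main obstacle is bookkeeping rather than geometry: one must argue carefully that the set of nodes ``that will now contain $c$'' coincides exactly with the recursion path, so that the one-step Insertion Invariant can be chained across all of them, and that reading the invariant off at the moment just before the repack is what lets clauses (1) and (3) transfer unchanged from the pre-insertion packing. Once this correspondence is made precise, the three clauses follow as above, with only clause (2) requiring the inductive use of Lemma \ref{thm:insertioninvariant}.
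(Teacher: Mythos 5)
Your proof is correct and follows essentially the same route as the paper's: clauses (1) and (3) of the Packing Invariant are inherited unchanged since neither the shapes nor the pre-existing circle sets are altered before the repack, and clause (2) is obtained by chaining Lemma~\ref{thm:insertioninvariant} inductively down the recursion path from the root to $S$. Your version merely spells out the bookkeeping (that the nodes containing $c$ are exactly the rightmost nodes on the recursion path) that the paper leaves implicit.
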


\begin{proof}
Statements 1 and 3 of the Packing Invariant will continue to hold as they previously already held on $S'$ with its original set of circles $C'$. By our assumption and applying Lemma \ref{thm:insertioninvariant} inductively through Algorithm \ref{alg:insertionalgorithm}'s recursive calls, the total size of the circles in $C'\cup\{c\}$ is at most the capacity of $S'$ when Algorithm \ref{alg:insertionalgorithm} calls \Call{Repack}{$C,S$}.
\end{proof}

%


\section{The Repack Operation}
\label{section:repackalgorithm}
The \Call{Repack}{} operation is a recursive algorithm that packs a set of circles $C$ into a single shape $S$, by completely rebuilding the subtree rooted at $S$. Calling \Call{Repack}{$C$,$S$} assumes that the Packing Invariant (Definition \ref{dfn:packinginvariant}) holds on the shape $S$ with the circles $C$. To pack these circles $C$ into $S$ ($S$ is a square or some $b$-semihat of capacity $a$), the algorithm splits the shape $S$ into the left and right children, two smaller shapes of capacities $a_L$ and $a_R$ respectively, where $a_L + a_R = a$. The set of circles $C$ is also partitioned into two sets $C_L$ and $C_R$, to be packed into the left and right children respectively. Before we can split the shape $S$, the \Call{Repack}{} algorithm, given $C$ and $S$, must first decide $a_L$, $a_R$, $C_L$ and $C_R$.

We describe how \Call{Repack}{} works for $b$-semihats and squares. A square splits into two $1$-right angled triangles, and an $s$-right angled triangle splits into two $s$-right angled triangles with the same value of $s$. In Section \ref{section:splittingtriangles} we define the ideal capacities $a_{lg}^*$ and $a_{sh}^*$ (Definition \ref{dfn:idealcapacities}), where $a_{lg}^* + a_{sh}^* = a$, which represents the best possible split of an $s$-right angled triangle.

\subsection{Repacking a Triangle}
\label{section:repackingatriangle}
\noindent
Let $S$ be a $b$-semihat (for some $b \geq 0$) of capacity $a$, formed from an $s$-right angled triangle. Suppose the ideal capacities of its long and short sides are $a_{lg}^*$ and $a_{sh}^*$ respectively. To decide $a_L$, $a_R$, $C_L$ and $C_R$, there are four cases, with Case 1 having the highest priority and Case 4 having the lowest. For Cases 3 and 4, we make use of $\delta_{sh}$ defined by the following expression:
\begin{equation}\delta_{sh} := \Big(1 - \frac{1}{2\sqrt{1+s^2}-1}\Big)^2\end{equation}
Intuitively, $\delta_{sh}$ is the largest value of $\delta$ where Case 3 can pack correctly and maintain the packing invariant. A proper explanation for $\delta_{sh}$ is given in \longshort{Lemma \ref{lem:trianglepackinglemma} and the proof of correctness of Case 3 in Lemma \ref{lem:repackcorrectnesstriangles}}{the proof of correctness of this algorithm in the full version of the paper}.

\begin{description}
\item \textbf{Case 1:} There exists a circle in $C$ of size $> a_{lg}^*$

Place the largest circle in $C_L$, and the remaining circles in $C_R$. The left child will be packed tightly ($a_L = \Call{TotalSize}{C_L}$), while the right child has the remaining space.

\item \textbf{Case 2:} $C$ has total size $\leq a_{lg}^*$

Place all the circles in $C_L$, while $C_R$ remains empty. The capacities $a_L$ and $a_R$ of the left and right children will be their respective ideal capacities, $a_{lg}^*$ and $a_{sh}^*$.

\item \textbf{Case 3:}
We first iterate through the circles $C$ from the largest to the smallest, and greedily pack the circles into the left side while keeping its total size below $a_{lg}^*$, as shown in Algorithm \ref{alg:trianglecase3packing}. This results in the total size of $C_L$ being of the form $a_{lg}^* - \delta a$, for some $\delta \geq 0$. We keep the result of Algorithm \ref{alg:trianglecase3packing} if $\delta < \delta_{sh}$. The left child will be packed tightly ($a_L = \Call{TotalSize}{C_L}$), while the right child has the remaining space. Note that this means $a_L = a_{lg}^* - \delta a$ and $a_R = a_{sh}^* + \delta a$. If $\delta \geq \delta_{sh}$, we move on to Case 4 instead.

\begin{algorithm}
\caption{TriangleCase3Packing}\label{alg:trianglecase3packing}
\begin{algorithmic}[1]
\State $C_L \gets \emptyset$
\State $C_R \gets \emptyset$
\ForEach {circle $c$ in $C$ from largest to smallest}
    \If {\Call{totalSize}{$C_L$} + $c$.size $\leq a_{lg}^*$}
\State add $c$ to $C_L$
    \Else
\State add $c$ to $C_R$
    \EndIf
\EndFor
\end{algorithmic}
\end{algorithm}

\item \textbf{Case 4:} $\delta \geq \delta_{sh}$

We place the two largest circles in $C_L$, and the remaining circles in $C_R$. The left child will be packed tightly ($a_L = \Call{TotalSize}{C_L}$), while the right child has the remaining space.
\end{description}

\noindent
Now that $a_L$ and $a_R$ are determined, the algorithm splits the semihat $S$ into two smaller semihats with capacities $a_L$ and $a_R$. Section \ref{section:splittingtriangles} details how the semihat is divided up, to create two children termed the long child and the short child. The long child, of capacity $a_L$, will be the left child, and the short child, of capacity $a_R$, will be the right child. The children are then packed as follows:
\begin{enumerate}
\item The circles $C_L$ are packed into the left child using the Split Packing Algorithm for Semihats in Lemma \ref{lem:splitpackingsemihats}.
\item The circles $C_R$ are recursively packed into the right child with the \Call{Repack}{} algorithm.
\end{enumerate}

\noindent
The following \longshort{lemma is}{lemmas are} used for the \Call{Repack}{} Algorithm on triangles. Lemma \ref{lem:splitpackingsemihats} describes how the original Split Packing Algorithm can be modified to pack circles into a semihat. (The Split Packing Algorithm works for $b$-hats, but is not defined on semihats)

\begin{lemma}[Split Packing Algorithm for Semihats]
Suppose that the Split Packing Algorithm in \cite{splitpackingsoda, splitpacking} is able to pack a set of circles $C$ into a $b'$-hat with capacity $a$. If at least one circle in $C$ of size at least $b$, then the circles $C$ can be packed into a $b$-$b'$-semihat of capacity $a$.
\label{lem:splitpackingsemihats}
\end{lemma}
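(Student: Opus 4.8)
The plan is to reduce to the genuinely new case and then piggyback on the existing Split Packing guarantee. First I would dispose of the case $b \le b'$: here $\max\{b,b'\}=b'$, so by the definition of a $b$-$b'$-semihat its long side carries a $b'$-curve and its short side a $b'$-curve, making it identical to a $b'$-hat; the packing assumed to exist is then already a packing into the $b$-$b'$-semihat. So assume $b > b'$, where the only difference between the two shapes is that the long-side corner is trimmed more, to an arc fitting a circle of area $b$ rather than $b'$. Passing from the $b'$-hat to the $b$-$b'$-semihat therefore only deletes the thin sliver between the $b'$-arc and the $b$-arc at the long-side corner, so it suffices to produce a Split Packing of $C$ in the $b'$-hat in which this sliver is empty.

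The geometric fact I would isolate first is a monotonicity (``retreat'') property of snug corner circles: among circles inscribed in a fixed corner (tangent to both bounding sides), the point nearest the corner recedes as the area grows. Writing the half-angle of the corner as $\theta$, a snug circle of radius $r$ has its nearest point at distance $r(1/\sin\theta - 1)$ from the corner, which is increasing in $r$. Consequently a circle of area at least $b$ placed snugly in the long-side corner lies entirely on the far side of the $b$-arc, i.e. outside the deleted sliver; and, applied at both corners, the incircle of the underlying triangle (area $a \ge b > b'$) clears both the $b$-curve and the $b'$-curve, so it lies inside the $b$-$b'$-semihat.

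With these in hand I would argue by induction on the Split Packing recursion, using the freedom noted after Algorithm~\ref{alg:split} to route the largest circle. At the top level, \Call{Split}{} (with the children swapped if necessary) puts the largest circle of $C$ --- which has area at least $b$ --- into the long child, and cuts the shape exactly as in the $b'$-hat packing. The cut is determined by the capacities $a_L, a_R$ and by the circles squeezed into the corners, none of which move when only the long-side curve is enlarged, so the short child and the boundary between the two children are identical to the $b'$-hat case; the short child still carries only a $b'$-curve and is packed verbatim, entirely within the region common to both shapes. The long child is the $b'$-hat's long child with its inherited long-side curve raised from $b'$ to $b$, and it still contains a circle of area at least $b$, so I would apply the induction hypothesis to it. The base case is a single circle of area at most $a$, which is placed inside the incircle and hence, by the retreat fact at both corners, inside the $b$-$b'$-semihat.

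The hard part will be making the inductive step fully rigorous, and two points need care. First, the base-case/retreat geometry must be pinned down for the precise arc and corner shapes, including checking that the routed circle really ends up snug in the long-side corner; this should follow because that corner is inherited unchanged down the long spine of the recursion, so the final leaf circle is tangent to the original long leg and hypotenuse. Second, the children of a $b'$-hat are in general semihats rather than $b'$-hats, so to close the induction I expect to prove the slightly more general statement for arbitrary semihats --- raising the curve on one designated side whenever a circle at least as large as the new curve is routed there --- and to verify that the enlarged curve consistently lands on that same designated (long) side of the long child as the recursion descends.
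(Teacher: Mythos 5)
Your proposal is correct and takes essentially the same approach as the paper's proof: treat the $b$-$b'$-semihat as a $b'$-hat, run Split Packing while using the freedom in \Call{Split}{} to route the largest circle (which has area at least $b$) down the long spine so that it ends up snug in the long-side corner, and conclude that a snug circle of area at least $b$ clears the $b$-curve, so the whole packing stays within the semihat. The paper's version is in fact terser than yours, asserting directly that the $b$-curve is always inherited by the long child and that the snugly-placed largest circle respects it, without spelling out the retreat computation or the semihat-strengthened induction you flag as the points needing care.
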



\begin{figure}[!h]
  \centering
  \includegraphics[width=.35\linewidth]{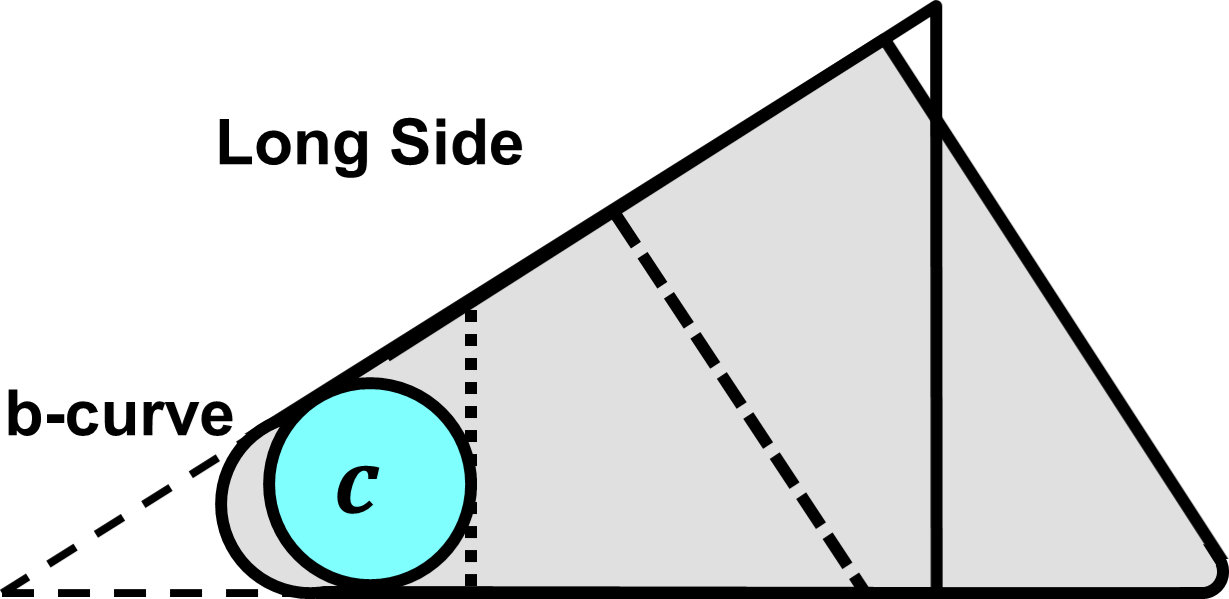}
  \caption{A $b$-$b'$-semihat. The dotted lines show the long child from the split at each level (the $b$-curve always ends up in the long child). The short children are not drawn.}
  \label{fig:splitpackingforsemihats}
\end{figure}

\begin{proof}
A $b$-$b'$-semihat is a $b'$ hat where the $b'$-curve of the long side has been enlarged to form a $b$-curve. We treat the $b$-$b'$-semihat as a $b'$ hat and use the Split Packing Algorithm to pack circles into it. As described in Section \ref{section:background}, the Split Packing Algorithm recursively splits the hat into smaller hats, and terminates with one circle in each leaf node of the resulting binary tree. At each level, we find that the long child will always be the hat containing the $b$-curve (Figure \ref{fig:splitpackingforsemihats}. Due to how the \Call{Split}{} algorithm (Algorithm \ref{alg:split}) works, we can force the Split Packing Algorithm to always assign the largest circle $c$ to the long child. This way, we end up with a packing of the original $b'$ hat, where the largest circle has been packed snugly into the $b$-curve of the original $b$-$b'$-semihat. Thus all circles in $C$ will also be within the bounds of the $b$-$b'$-semihat.
\end{proof}

\subsection{Repacking a Square}
\label{section:repackingasquare}
\noindent
Let $S$ be a square of capacity $a$. To decide $a_L$, $a_R$, $C_L$ and $C_R$, there are two cases.
\begin{description}
\item \textbf{Case 1:} $C$ has total size $> 0.5a$

We start with all the circles in the left side. We iterate through the circles $C$ from the largest to the smallest, and greedily remove circles from the left side while keeping its total size above $0.5a$, as shown in Algorithm \ref{alg:squarecase1packing}. This results in the total size of $C_L$ being of the form $a(0.5 + \delta)$, for some $\delta \geq 0$.
\begin{algorithm}
\caption{SquareCase1Packing}\label{alg:squarecase1packing}
\begin{algorithmic}[1]
\State $C_L \gets C$
\State $C_R \gets \emptyset$
\ForEach {circle $c$ in $C$ from largest to smallest}
    \If {\Call{totalSize}{$C_L$} - $c$.size $\geq 0.5a$}
\State remove $c$ from $C_L$
\State add $c$ to $C_R$
    \EndIf
\EndFor
\end{algorithmic}
\end{algorithm}

The left child will be packed tightly ($a_L = \Call{TotalSize}{C_L}$), while the right child has the remaining space. Note that this means $a_L = a(0.5 + \delta)$ and $a_R = a(0.5 - \delta)$.

\item \textbf{Case 2:} $C$ has total size $\leq 0.5a$

Place all the circles in $C_L$, while $C_R$ remains empty. The capacities $a_L$ and $a_R$ of the left and right children will each be $0.5a$.
\end{description}

\noindent
Section \ref{section:splittingsquares} describes how a square is split into two triangles given $a_L$ and $a_R$. The left and right children have capacities $a_L$ and $a_R$ respectively. They are then packed as follows:
\begin{enumerate}
\item The circles $C_L$ are packed into the left child with the Split Packing Algorithm.
\item The circles $C_R$ are recursively packed into the right child with the \Call{Repack}{} algorithm.
\end{enumerate}


\subsection{Splitting Shapes}
\label{section:splittingshapes}

Splitting shapes refers to subdividing a shape into two nonoverlapping smaller shapes, given target capacities $a_L$ and $a_R$. These smaller shapes will be of capacities $a_L$ and $a_R$ respectively and stay within the bounds of the original shape. These smaller shapes, as the children of the original shape in the binary tree, can then be recursed into in the \Call{Repack}{} algorithm.

\subsubsection{Splitting Triangles}
\label{section:splittingtriangles}

The Split Packing papers \cite{splitpackingsoda, splitpacking} describe how a triangle (hat) splits into two smaller triangles (hats). Details are given in Section \ref{section:background}. We do the same thing with semihats (Figures \ref{fig:semihatsplitleft}, \ref{fig:semihatsplitright}). On an $s$-right angled triangle, we call the two smaller semihats the long and short children, corresponding to whether they came from the long or short side.


\begin{lemma}
\label{lem:idealcapacitysizes}
The ideal capacities (Definition \ref{dfn:idealcapacities}) $a_{lg}^*$ and $a_{sh}^*$ of an $s$-right angled triangle of capacity $a$ obey the following properties:\noindent
\begin{multicols}{3}
\begin{enumerate}
\item $\ds a_{lg}^* = a\frac{s^2}{1+s^2}$ 
\item $\ds a_{sh}^* = a\frac{1}{1+s^2}$ 
\item $\ds a_{lg}^* + a_{sh}^* = a$ 
\end{enumerate}
\end{multicols}
\end{lemma}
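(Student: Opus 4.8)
The plan is to exploit the classical fact that dropping the altitude from the right-angle vertex of a right triangle onto its hypotenuse partitions it into two smaller right triangles, each similar to the original. First I would orient the $s$-right angled triangle with its hypotenuse (of length $\ell\sqrt{1+s^2}$) at the bottom and the right angle at the top, then drop the perpendicular from the right-angle vertex to the hypotenuse; its foot splits the hypotenuse into two segments. This is exactly the ideal split of Definition \ref{dfn:idealcapacities}. I would then observe that the sub-triangle on the long side has the original long leg (length $s\ell$) as its hypotenuse, while the sub-triangle on the short side has the original short leg (length $\ell$) as its hypotenuse.

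Next I would verify that both children are themselves $s$-right angled triangles and record their linear scale factors relative to the parent. For the long child the factor is $\frac{s\ell}{\ell\sqrt{1+s^2}} = \frac{s}{\sqrt{1+s^2}}$, and for the short child it is $\frac{\ell}{\ell\sqrt{1+s^2}} = \frac{1}{\sqrt{1+s^2}}$. A short computation of the leg lengths (the shared altitude $\frac{s\ell}{\sqrt{1+s^2}}$ together with the two hypotenuse segments $\frac{\ell}{\sqrt{1+s^2}}$ and $\frac{s^2\ell}{\sqrt{1+s^2}}$) confirms that each child again has its legs in ratio $s:1$, so both are genuinely $s$-right angled triangles, consistent with the earlier claim that splitting an $s$-shape yields two $s$-shapes.

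Then I would invoke the fact that capacity scales quadratically under similarity: since capacity is the maximal packable total circle area, uniformly scaling a region by a linear factor $k$ scales every packable area by $k^2$, and hence multiplies the capacity by $k^2$. Applying this with the two scale factors gives $a_{lg}^* = a\big(\frac{s}{\sqrt{1+s^2}}\big)^2 = a\frac{s^2}{1+s^2}$ and $a_{sh}^* = a\big(\frac{1}{\sqrt{1+s^2}}\big)^2 = a\frac{1}{1+s^2}$, which are statements 1 and 2. Statement 3 is then immediate by adding these, since $\frac{s^2}{1+s^2} + \frac{1}{1+s^2} = 1$.

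The main obstacle is not the arithmetic but pinning down the geometry cleanly: correctly identifying which original side becomes the hypotenuse of each child (so the scale factors come out right) and justifying the quadratic scaling of capacity. The latter deserves an explicit sentence, as it follows because a packing of the parent can be uniformly rescaled into a packing of a similar child and vice versa, so the supremum of packable total area transforms by exactly the square of the linear ratio.
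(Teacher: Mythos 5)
Your proposal is correct and follows essentially the same route as the paper: both arguments rest on the altitude-to-hypotenuse decomposition into two similar $s$-right angled triangles, extraction of the linear scale factors $\frac{s}{\sqrt{1+s^2}}$ and $\frac{1}{\sqrt{1+s^2}}$, and quadratic scaling of capacity. The only difference is that you spell out the verification that the children are again $s$-right angled triangles and the justification for why capacity scales as the square of the linear ratio, both of which the paper leaves implicit.
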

\begin{proof}
Referring to Figure \ref{fig:idealcapacities}, as the three triangles are similar, the shortest side of the long child will be of length $\ds \frac{s\ell}{\sqrt{1+s^2}}$, so $\ds \frac{a_{lg}^*}{a} = \big(\frac{s}{\sqrt{1+s^2}}\big)^2 = \frac{s^2}{1+s^2}$. Similarly, the shortest side of the short child will be of length $\ds \frac{\ell}{\sqrt{1+s^2}}$, so $\ds \frac{a_{sh}^*}{a} = \big(\frac{1}{\sqrt{1+s^2}}\big)^2 = \frac{1}{1+s^2}$.

Finally, $\ds a_{lg}^* + a_{sh}^* = a\frac{s^2}{1+s^2} + a\frac{1}{1+s^2} = a$.
\end{proof}

Suppose a semihat with capacity $a$ is split into long and short children with capacities $a_{lg}$ and $a_{sh}$ respectively. The Split Packing papers \cite{splitpackingsoda, splitpacking} show that if $a_{lg} + a_{sh} \leq a$, the two children will not overlap. In our algorithm, we always have $a_{lg} + a_{sh} = a$. If $a_{lg} \geq a_{lg}^*$, a $b$-semihat splits into a $b$-$b_{lg}$-semihat on the long side (for some $b_{lg} \geq 0$), and a full triangle on the short side (Figure \ref{fig:semihatsplitleft}). If $a_{lg} \leq a_{lg}^*$, a $b$-semihat splits into a $b$-semihat on the long side, and a $b_{sh}$-semihat on the short side (for some $b_{sh} \geq 0$, Figure \ref{fig:semihatsplitright}). Structurally, this ensures the first property of the Packing Invariant (Definition \ref{dfn:packinginvariant}), as the right child is always a $b$-semihat for some value of $b$ (a full triangle is a $0$-semihat).

\begin{figure}[!h]
  \centering
    \begin{subfigure}[b]{.5\linewidth}
      \centering
      \includegraphics[width=.7\linewidth]{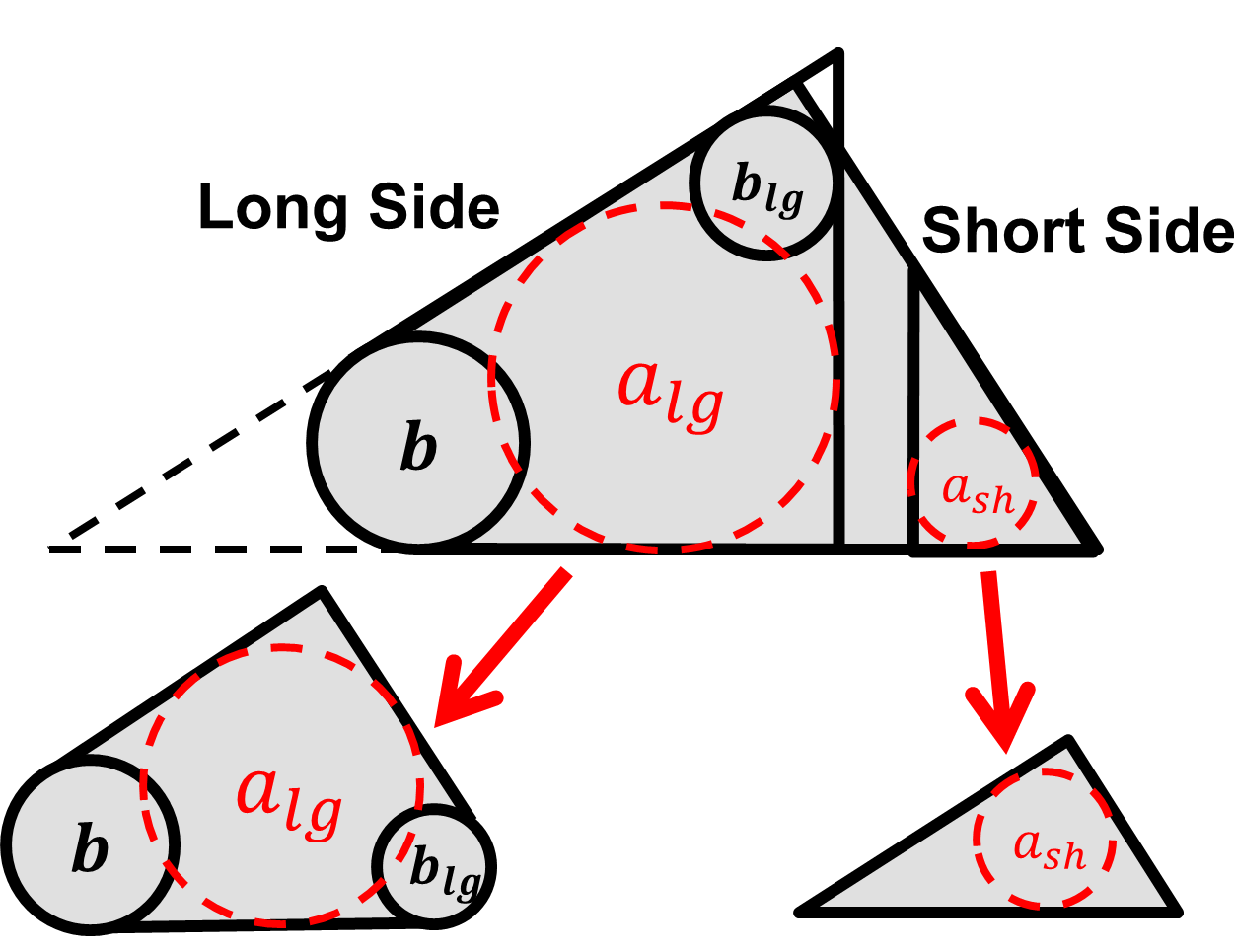}
      \caption{$a_{lg} \geq a_{lg}^*$ (and consequently $a_{sh} \leq a_{sh}^*$)}
      \label{fig:semihatsplitleft}
    \end{subfigure}%
    \begin{subfigure}[b]{.5\linewidth}
      \centering
      \includegraphics[width=.7\linewidth]{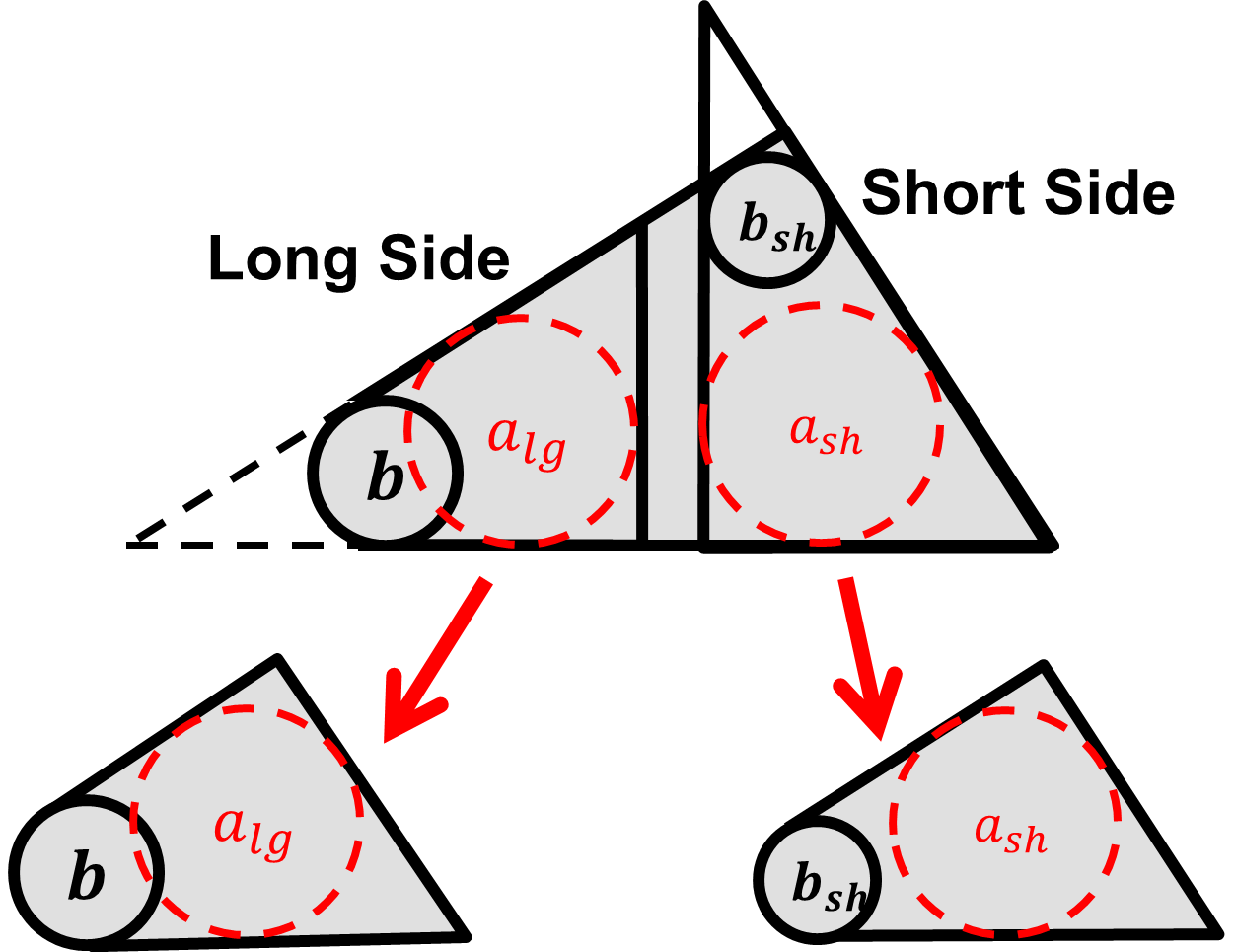}
      \caption{$a_{lg} < a_{lg}^*$ (and consequently $a_{sh} > a_{sh}^*$)}
      \label{fig:semihatsplitright}
    \end{subfigure}
  \caption{The two ways a semihat splits into two semihats of capacities $a_{lg}$ and $a_{sh}$ respectively.}
  \label{fig:semihatsplit}
\end{figure}

\subsubsection{Splitting Squares}
\label{section:splittingsquares}
Our strategy for splitting squares is also similar to Split Packing, which is explained in Section \ref{section:background}. A square splits into two isosceles right angled triangles ($1$-right angled triangles), as shown in Figure \ref{fig:squaresplit}.

\begin{figure}[!h]
  \centering
  \includegraphics[width=.35\linewidth]{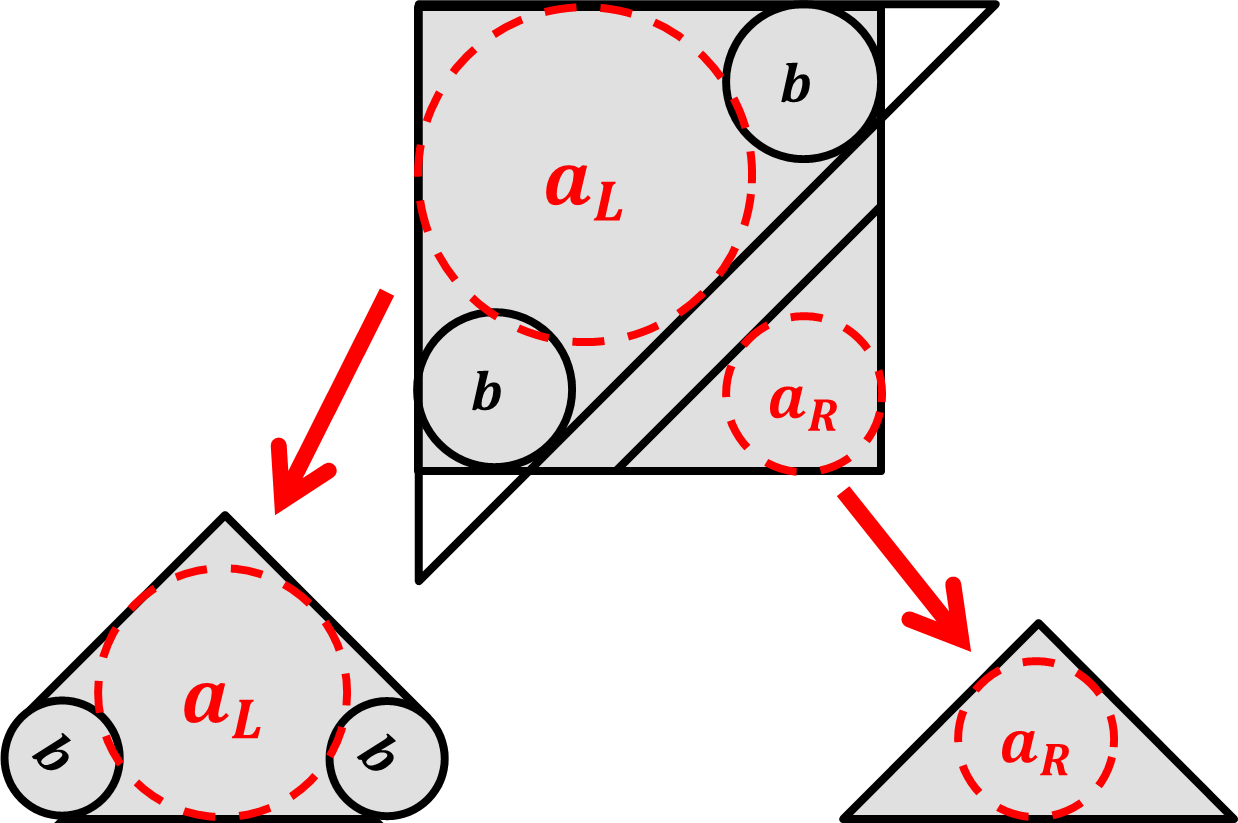}
  \caption{Splitting a square into a $b$-hat and a full triangle of capacities $a_L$ and $a_R$ respectively.}
  \label{fig:squaresplit}
\end{figure}


\section{Proof of Correctness}
To prove the correctness of our algorithm (for inserts only), assuming that the total size of all inserted circles is at most the capacity of the root node, we need to show that:
\begin{enumerate}
\item Every circle will be assigned a position in the packing.
\item All circles are packed within the bounds of the region.
\item No two circles overlap in the packing.
\end{enumerate}

Statement 1 will be proved in Section \ref{section:termination}. The latter two will be proved in Section \ref{section:correctnessofrepack}.

\subsection{Termination}
\label{section:termination}

Algorithm \ref{alg:insertionalgorithm} is a recursive algorithm that only terminates upon calling \Call{Repack}{}. The \Call{Repack}{} is a process which never terminates, instead recursively building an infinite sequence of empty nodes after the last circle has been allocated. To prove that the insertion algorithm will always allocate the new circle (as well as the displaced existing circles) as position in the packing (contained in some node), we show that, whenever we want to insert a new circle,
\begin{enumerate}
\item Algorithm \ref{alg:insertionalgorithm} will always terminate (by calling \Call{Repack}{}) after a finite number of steps.
\item Every circle that is to be repacked when \Call{Repack}{} is called will be assigned a position after a finite number of steps.
\end{enumerate}

These two statements will be proven as Corollary \ref{prop:inserttermination} and Theorem \ref{prop:repacktermination}. Before we prove these properties, we first show that the Shape Invariant, along with some other properties, always hold on the binary tree (Lemma \ref{lem:shapeinvariantholds}).

\begin{lemma}
\label{lem:shapeinvariantholds}
The Shape Invariant (Definition \ref{dfn:shapeinvariant}) will hold on the initial configuration of the binary tree, as well as after each insertion.
\noindent
Furthermore, the following properties will also always hold on all rightmost nodes:
\begin{enumerate}
\item If a node $S$ is a slack shape, the capacity of the right child is $\frac{1}{1+s^2}$ of the capacity of $S$.
\item If a node $S$ is a tight shape, there will be at least one circle in its left child.
\end{enumerate}
\end{lemma}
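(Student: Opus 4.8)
The plan is to argue by induction on the number of insertions, carrying as the inductive hypothesis that the Shape Invariant (Definition~\ref{dfn:shapeinvariant}) together with Properties 1 and 2 holds on every rightmost node. For the base case I would observe that the initial configuration is a pure sequence of ideal splits: every rightmost node is empty, so its left child has total size $0$, strictly below the positive left-child capacity, hence not tight; and each split is an ideal split, so by the splitting rule for $a_{lg}\ge a_{lg}^*$ (applied with equality $a_{lg}=a_{lg}^*$) and by the square split of Figure~\ref{fig:squaresplit}, the right child is a full triangle. Thus every rightmost node is slack, Property 2 is vacuous, and Property 1 follows from Lemma~\ref{lem:idealcapacitysizes}, since the short child has capacity $a_{sh}^*=\frac{1}{1+s^2}a$ (for a square $s=1$ and each child has capacity $\frac12 a$).

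For the inductive step I would trace Algorithm~\ref{alg:insertionalgorithm}: it descends the right spine $S_0,S_1,\dots,S_m$ and triggers a single \Call{Repack}{} at $S_m$. The key structural fact is that this \Call{Repack}{} rebuilds only the interior of $S_m$, so for every strict ancestor $S_k$ ($k<m$) the shape of $S_k$, the shape and capacity of its right child $S_{k+1}$, and the contents and capacity of its untouched left child are all unchanged. Since tight/slack status depends only on the left child and the right-child shape (Definition~\ref{dfn:tightandslackshapes}), each ancestor keeps its classification, and Properties 1 and 2, which depend only on these unchanged quantities, persist by the inductive hypothesis.

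The heart of the argument is the rebuilt subtree rooted at $S_m$, whose new rightmost nodes form the right spine $S_m=R_0,R_1,\dots$ produced by the recursive \Call{Repack}{} calls; I would dispatch on the repack cases. In the ``tight'' cases (triangle Cases 1, 3, 4 and square Case 1) the rule sets $a_L=\Call{TotalSize}{C_L}$, which is exactly the tight condition, and $C_L$ is nonempty (it holds the largest, or two largest, circle), giving Property 2; the leftover $C_R$ is handed to $R_{i+1}$ and the same analysis recurses. In the ``empty'' cases (triangle Case 2 and square Case 2) we have $C_R=\emptyset$ and $a_L$ equal to the ideal long capacity, so the splitting rule makes the right child a full triangle of capacity $a_{sh}^*=\frac{1}{1+s^2}a$ (resp.\ $\frac12 a$), and \Call{Repack}{$\emptyset$,$\cdot$} below it reproduces the all-slack ideal-split tree of the base case; the node itself is then tight (if the total size equals the ideal capacity exactly) or slack satisfying Property 1. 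Since every tight node on the spine removes at least one circle from the pool passed rightward, the spine eventually reaches an empty $C_R$ and becomes all-slack, so every rightmost node of the rebuilt subtree is classified.

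The main obstacle I anticipate is the bookkeeping across the six repack cases: verifying in each case both that the left child is packed tightly exactly when claimed (so the node is genuinely tight rather than in an unclassified state) and that, in the remaining cases, $a_L=a_{lg}^*$ forces the right child to be a full triangle of the stated ideal capacity via the rules of Section~\ref{section:splittingtriangles} and Figure~\ref{fig:squaresplit}. I would also need to confirm that the largest circle enters $C_L$ in Case 3 (so $C_L\neq\emptyset$), which holds because Case 1 failing guarantees every circle has size at most $a_{lg}^*$, so the first, largest circle is accepted on the first iteration of Algorithm~\ref{alg:trianglecase3packing}.
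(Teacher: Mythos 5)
Your proposal is correct and follows essentially the same route as the paper's proof: an induction over insertions in which one observes that \Call{Repack}{} only ever produces tight shapes (triangle Cases 1, 3, 4 and square Case 1, each placing at least one circle in the left child) or slack shapes (the Case 2's, whose right child is a full triangle of capacity $a_{sh}^*=\frac{1}{1+s^2}a$, resp.\ $0.5a$), that the initial configuration is itself the output of \Call{Repack}{} on an empty circle set, and that a repack at $S$ leaves the left children and right-child shapes of all ancestors untouched. Your treatment is, if anything, slightly more careful than the paper's on the edge case where Case 2 happens to fill the left child exactly to capacity (making the node tight rather than slack), but the argument is the same.
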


\begin{proof}
The \Call{Repack}{} algorithm can only produce tight or slack shapes. Only Case 2 of \Call{Repack}{} on both triangles and squares can generate slack shapes, as the right child will be a full triangle. All other cases produce tight shapes. Let $a$ be the capacity of the original shape and $a_R$ be the capacity of the right child. For Case 2 on triangles, we have $a_R = a_{sh}^* = a\frac{1}{1+s^2}$ (Lemma \ref{lem:idealcapacitysizes}). For Case 2 on squares, we have $a_R = 0.5a = a\frac{1}{1+s^2}$. Thus for all cases which produce slack shapes, the capacity of the slack shape's right child is $\frac{1}{1+s^2}$ of the capacity of the original shape. For all cases which produce tight shapes, at least one circle will be placed in the shape's left child.

As stated in Section \ref{section:onlinesplitpackingalgorithm}, the initial configuration of the binary tree is the result of runnning the \Call{Repack}{} algorithm on the root shape with no circles. Thus the Shape Invariant as well as both properties would hold on the initial configuration of the binary tree.

The binary tree can only be modified through calling the \Call{Repack}{} algorithm on some shape $S$, which rebuilds the subtree from $S$ and below. Thus the Shape Invariant, as well as both properties, will hold on $S$ and its descendant rightmost nodes. For each ancestor of $S$, the left child will not be affected by a repack of $S$, and the shape of the right child (which might be $S$) will not be changed by a repack of $S$. The Shape Invariant as well as both properties, which held previously, will continue to hold on them.
\end{proof}

By Lemma \ref{lem:shapeinvariantholds}, the capacity of the right child of a slack $s$-shape is $\frac{1}{1+s^2}$ of its parent's capacity. This allows us to prove Lemma \ref{lem:maxslackshapes} and Corollary \ref{prop:inserttermination}.

\begin{lemma}
When inserting a circle of size $c$ into an $s$-shape of capacity $a$, Algorithm \ref{alg:insertionalgorithm} passes through at most $\floor{\log_{1+s^2} \frac{a}{c}} + 1$ slack shapes before terminating by calling \Call{Repack}{}.
\label{lem:maxslackshapes}
\end{lemma}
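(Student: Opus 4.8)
The plan is to follow the path of rightmost nodes $S_0, S_1, S_2, \dots$ that Algorithm \ref{alg:insertionalgorithm} descends along the right spine, where $S_0$ is the given $s$-shape of capacity $a$ and each $S_{i+1}$ is the right child of $S_i$, and to track how their capacities $a_0 = a \geq a_1 \geq a_2 \geq \cdots$ shrink. These capacities are non-increasing, since the right child of any rightmost node has capacity $a_{i+1} = a_i - a_L \leq a_i$. By Lemma \ref{lem:shapeinvariantholds}, every $S_i$ is either tight or slack, and whenever $S_i$ is slack its right child has capacity exactly $a_{i+1} = \frac{a_i}{1+s^2}$; this factor is the only place the geometry enters.

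First I would pin down where the recursion stops and when it continues. Reading the branch condition of Algorithm \ref{alg:insertionalgorithm}, \Call{Repack}{} is invoked only when $S_i$ is not a tight shape and $\Call{totalSize}{R} + c \geq a_{i+1}$; since $S_i$ is a rightmost node, the Shape Invariant forces it to be slack, so the terminating node is always a slack shape. Conversely, the recursion continues past a slack shape $S_i$ only when $\Call{totalSize}{S_{i+1}} + c < a_{i+1} = \frac{a_i}{1+s^2}$, which in particular yields the clean inequality $c < \frac{a_i}{1+s^2}$, i.e. $a_i > c(1+s^2)$.

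Next I would let $i_1 < i_2 < \cdots < i_k$ index the slack shapes encountered, so that $S_{i_k}$ is the terminating one and the recursion continues past $S_{i_1}, \dots, S_{i_{k-1}}$. Since $i_{j+1} \geq i_j + 1$ and the capacities are non-increasing, the slack-shape rule gives $a_{i_{j+1}} \leq a_{i_j+1} = \frac{a_{i_j}}{1+s^2}$, so telescoping (together with $a_{i_1} \leq a$) yields $a_{i_{k-1}} \leq \frac{a}{(1+s^2)^{k-2}}$. Combining this with the continuation inequality at $S_{i_{k-1}}$, namely $c < \frac{a_{i_{k-1}}}{1+s^2}$, gives $c < \frac{a}{(1+s^2)^{k-1}}$, hence $(1+s^2)^{k-1} < \frac{a}{c}$ and $k - 1 < \log_{1+s^2}\frac{a}{c}$. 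As $k-1$ is a non-negative integer, this is exactly $k \leq \floor{\log_{1+s^2}\frac{a}{c}} + 1$.

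The routine cases $k \leq 1$ are handled separately and hold trivially, because the insertion precondition $c \leq a$ makes the stated bound at least $1$. The main point to get right is the bookkeeping around the terminating slack shape: the continuation inequality is available at every slack shape \emph{except} the last, so the decay argument must be anchored at $S_{i_{k-1}}$ rather than $S_{i_k}$, and one must confirm that capacities genuinely do not increase across the intervening tight shapes (which holds since a child's capacity never exceeds its parent's). I do not anticipate any geometric difficulty, as the entire geometric content is encapsulated in the factor $\frac{1}{1+s^2}$ supplied by Lemma \ref{lem:shapeinvariantholds}; the lemma thus reduces to the geometric-decay counting argument above.
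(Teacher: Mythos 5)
Your proposal is correct and follows essentially the same argument as the paper: each slack shape cuts the capacity by a factor of $1+s^2$, the recursion continues past a slack shape only if $c$ is smaller than the right child's capacity, and telescoping gives the stated bound. You simply carry out the bookkeeping (non-increasing capacities across tight shapes, anchoring at the second-to-last slack shape, the integrality step for the floor) that the paper's two-sentence proof leaves implicit.
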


\begin{proof}
If the current shape is a slack shape, the ratio of the capacity of the right child to its parent's is $\frac{1}{1+s^2}$. As each slack shape reduces the shape size by a factor of $1+s^2$, the circle passes through at most $\floor{\log_{1+s^2} \frac{a}{c}} + 1$  slack shapes before the circle itself must be larger than the right child's capacity.
\end{proof}

\begin{corollary}
Algorithm \ref{alg:insertionalgorithm} always terminates after a finite number of steps.
\label{prop:inserttermination}
\end{corollary}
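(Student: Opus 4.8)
The plan is to show that the right-spine recursion of Algorithm \ref{alg:insertionalgorithm} can pass through only finitely many rightmost nodes before a \Call{Repack}{} call is triggered, which is exactly what terminates the recursion. By the Shape Invariant (Lemma \ref{lem:shapeinvariantholds}), every rightmost node the recursion visits is either a tight shape or a slack shape. So I would bound the number of visits of each type separately and then add the two bounds.

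For the slack shapes, I would simply invoke Lemma \ref{lem:maxslackshapes}: when inserting a circle of size $c$ into an $s$-shape of capacity $a$, the recursion passes through at most $\floor{\log_{1+s^2}\frac{a}{c}}+1$ slack shapes, which is finite. No further work is needed here.

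For the tight shapes, I would use a circle-counting argument. Let $n$ be the (finite) number of circles currently stored in the tree. Whenever the recursion moves from a node $S$ to its right child $R$, the set of circles contained in $R$ is exactly the set contained in $S$ with the circles of $S$'s left child removed (the two children partition the circles of $S$), so the contained-circle count is non-increasing along the right spine. By property 2 of Lemma \ref{lem:shapeinvariantholds}, a tight shape always has at least one circle in its left child; hence each tight shape the recursion passes through strictly decreases this count. Since the count starts at $n$ and cannot become negative, at most $n$ tight shapes can be visited.

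Combining the two bounds, the recursion visits at most $n + \floor{\log_{1+s^2}\frac{a}{c}}+1$ nodes before reaching the node on which \Call{Repack}{} is invoked, a finite quantity; hence Algorithm \ref{alg:insertionalgorithm} terminates after finitely many steps. The main obstacle is the tight-shape case: the slack bound alone is insufficient because a tight shape need not shrink the capacity by a fixed factor, so a separate monotone quantity (the contained-circle count) is required, and I must verify it is genuinely \emph{strictly} decreasing — which rests precisely on the guarantee that a tight shape always has a nonempty left child.
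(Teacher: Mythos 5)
Your proposal is correct and follows essentially the same route as the paper: invoke the Shape Invariant to classify rightmost nodes, bound slack shapes by Lemma \ref{lem:maxslackshapes}, and bound tight shapes by the fact that each has at least one circle in its left child (your strictly-decreasing circle count along the right spine is just a slightly more explicit phrasing of the paper's ``at most $n$ tight shapes'' count).
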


\begin{proof}
Algorithm \ref{alg:insertionalgorithm} terminates when it calls \Call{Repack}{}. By Lemma \ref{lem:shapeinvariantholds}, each tight shape has at least one circle in its left child. Thus, there at most $n$ tight shapes, where $n$ is the amount of circles inserted so far. By Lemma \ref{lem:maxslackshapes}, a circle passes through a finite number of slack shapes. As every rightmost node is either tight or slack, the algorithm will terminate and call \Call{Repack}{} after a finite number of steps.
\end{proof}

\begin{theorem}
Every circle that is to be repacked when \Call{Repack}{} is called will be assigned a position after a finite number of steps.
\label{prop:repacktermination}
\end{theorem}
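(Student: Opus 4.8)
The plan is to show that the right-spine recursion of \Call{Repack}{} cannot carry a nonempty set of circles downward forever: after finitely many recursive calls every circle has been handed off to some left child, where it is packed (and thereby assigned a position) by the terminating offline Split Packing Algorithm. First I would record the two facts that make this work. The left child of each \Call{Repack}{} call is packed by the Split Packing Algorithm of \cite{splitpackingsoda, splitpacking} (via Lemma \ref{lem:splitpackingsemihats} in the semihat case). On a finite set of circles this algorithm builds a finite binary tree with one circle per leaf---a single circle is placed at the center, and any set of at least two circles is partitioned and recursed on two strictly smaller sets---so it terminates and assigns a position to every circle of $C_L$ in finitely many steps. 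The right child, carrying $C_R$, is the \emph{only} place the \Call{Repack}{} recursion continues, so it suffices to bound the number of recursive calls in which the passed set is nonempty.

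The key step is the claim that whenever a \Call{Repack}{} call produces a nonempty $C_R$, it also places at least one circle into $C_L$. I would verify this by inspecting each case. For triangles: Case 2 always leaves $C_R$ empty and so is vacuous; Case 1 puts the largest circle into $C_L$; and Case 4 puts the two largest circles into $C_L$. For Case 3, the point is that reaching Case 3 means Case 1 failed, so every circle---in particular the largest---has size at most $a_{lg}^*$; the greedy loop of Algorithm \ref{alg:trianglecase3packing} therefore admits the largest circle into $C_L$, giving $C_L \neq \emptyset$. For squares: Case 2 again leaves $C_R$ empty; and in Case 1 the loop of Algorithm \ref{alg:squarecase1packing} only removes a circle when the remainder stays above the threshold, so it maintains $\Call{totalSize}{C_L} \geq 0.5a > 0$ and hence $C_L$ retains at least one circle. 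This establishes the claim in all cases.

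Combining these observations, each recursive \Call{Repack}{} call with nonempty $C_R$ moves at least one circle permanently into a left child, so the pool passed to the next level satisfies $|C_R| \leq |C| - 1$ and strictly shrinks. Since the initial $C$ is finite, after at most $|C|$ recursive calls the pool becomes empty, and from then on \Call{Repack}{} only builds empty nodes. Every circle has by that point been assigned to some left child and placed in finitely many steps by Split Packing, which proves the theorem.

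The only real obstacle is the case analysis of the second paragraph---specifically Triangle Case 3 and Square Case 1---where one must invoke the priority ordering of the cases (so that the largest circle fits on the left) and the $0.5a$ threshold to conclude $C_L$ is nonempty. Once the recursion is seen as strictly consuming circles into finitely-packed left children, the finiteness conclusion is immediate.
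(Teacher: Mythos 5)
Your proof is correct and takes essentially the same route as the paper: the paper argues that every recursive \Call{Repack}{} step that does not create a slack shape (whose right child is empty) creates a tight shape, which places at least one circle into a left child packed by the terminating offline Split Packing Algorithm, so the recursion can carry circles for at most $|C|$ levels. Your version simply verifies the "at least one circle goes left" claim explicitly case by case (including the correct observations that Triangle Case 3 admits the largest circle because Case 1 failed, and that Square Case 1 keeps $\Call{totalSize}{C_L}\geq 0.5a>0$), which the paper states more tersely via its tight/slack shape dichotomy.
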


\begin{proof}
The \Call{Repack}{} procedure assigns a circle a position when it allocates the circle to the left child instead of the right. This is as the left child is packed with the original Split Packing Algorithm. Every time the \Call{Repack}{} procedure creates a tight shape, at least one circle will be assigned to the left child. This limits the number of tight shapes created to the number of circles to be packed. Therefore, \Call{Repack}{} will create a slack shape after a finite number of recursions, and in all cases where a slack shape is created, all remaining circles will be packed into the left child. Therefore, all circles will be assigned a position after a finite number of steps.
\end{proof}

\subsection{Properties of Triangle and Square Splitting}
In this section, we show some important relationships relating to the sizes of the $b$-curves generated when our triangles (including semihats) and squares are split in the manner described in Section \ref{section:splittingshapes}. The Lemmas in this section are used to prove the validity of our packing in the following sections.

\subsubsection{Triangles}
\begin{figure}[!h]
  \centering
    \begin{subfigure}[b]{.36\linewidth}
      \centering
      \includegraphics[width=.9\linewidth]{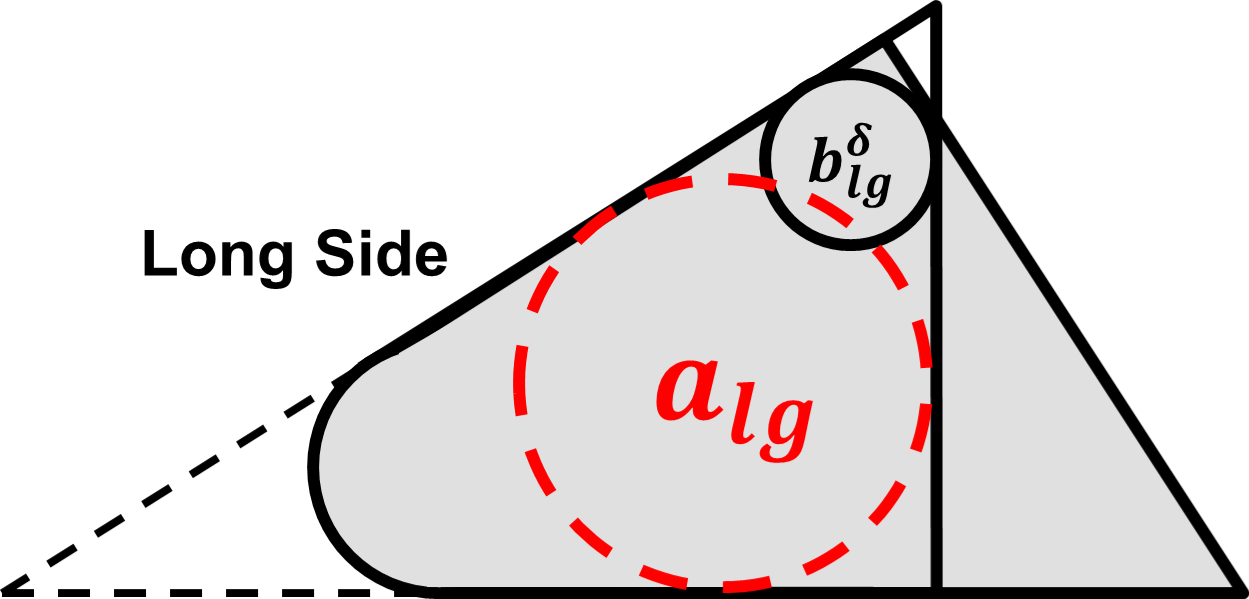}
      \caption{Long side: $a_{lg} = a_{lg}^* + \delta a$}
      \label{fig:splittingproperties_long}
    \end{subfigure}%
    \begin{subfigure}[b]{.36\linewidth}
      \centering
      \includegraphics[width=.9\linewidth]{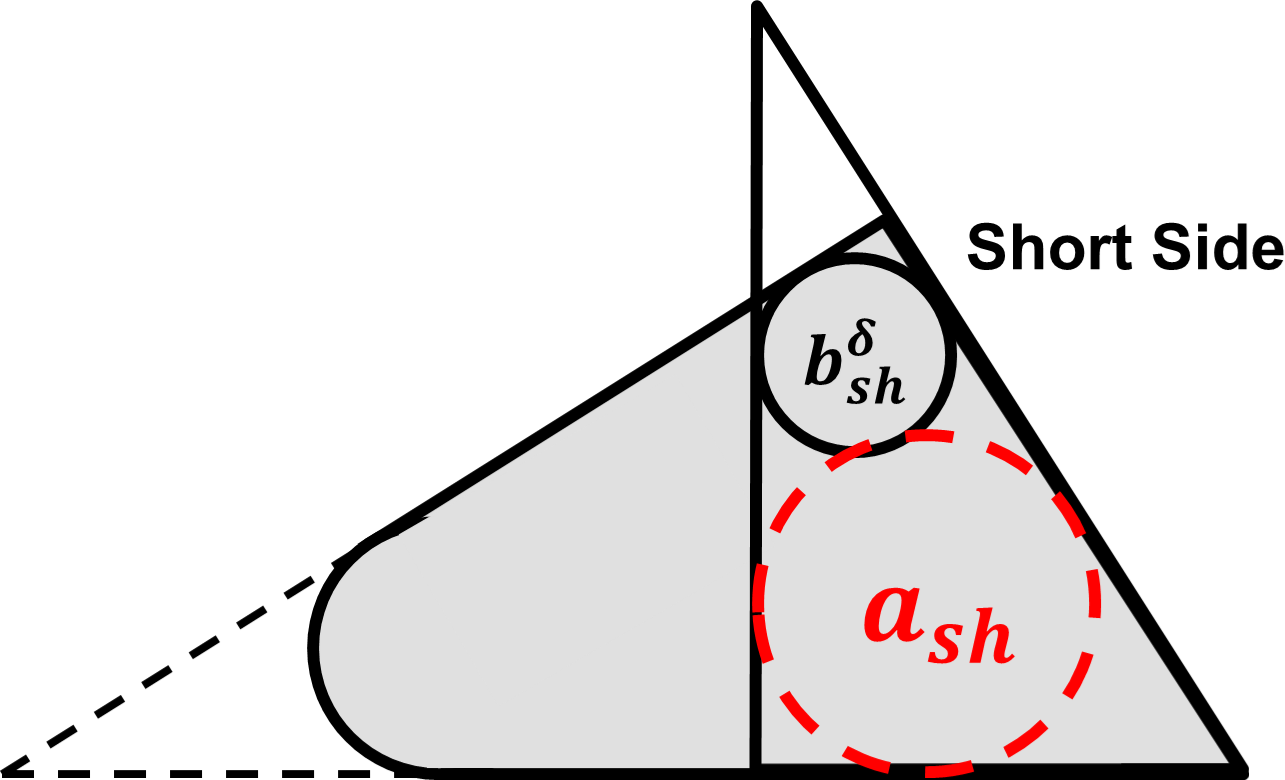}
      \caption{Short side: $a_{sh} = a_{sh}^* + \delta a$}
      \label{fig:splittingproperties_short}
    \end{subfigure}%
    \begin{subfigure}[b]{.28\linewidth}
      \centering
      \includegraphics[width=.8\linewidth]{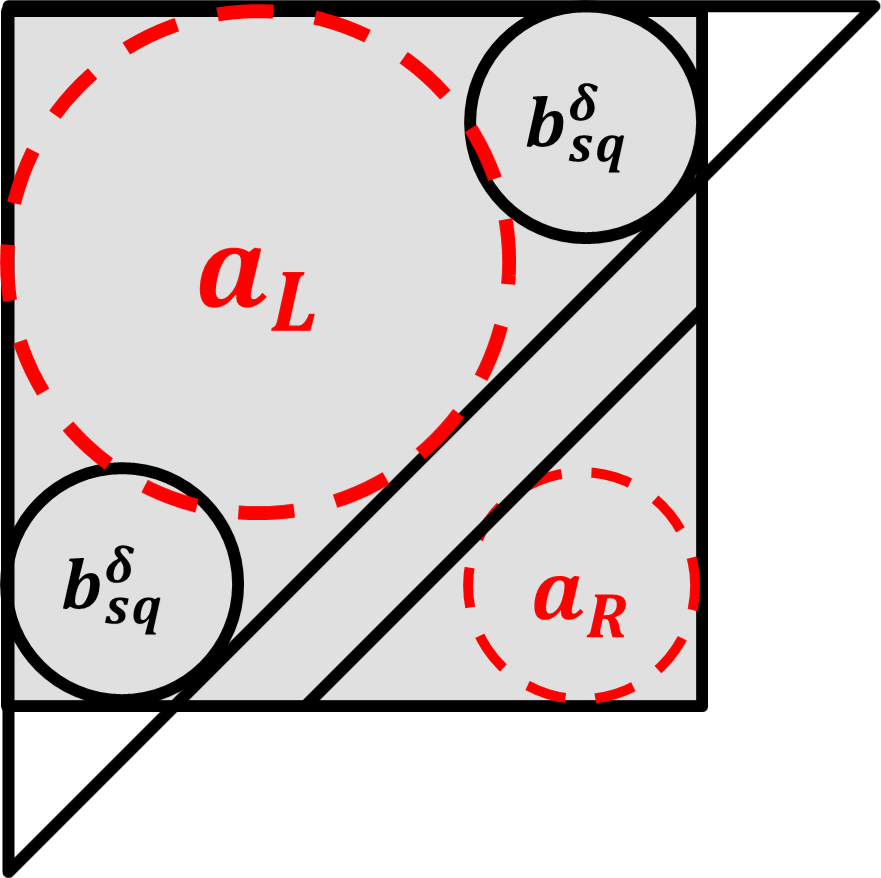}
      \caption{Square: $a_L = 0.5a + \delta a$}
      \label{fig:splittingproperties_square}
    \end{subfigure}
  \caption{Necessary $b$-curves needed for child nodes to fit within their parent shapes.}
  \label{fig:tree_structure}
\end{figure}

\begin{lemma}[Relationship between capacity of a child and its $b$-curve (triangles)]
Consider an $s$-right angled triangle $S$ of capacity $a$ split into long and short children of capacities $a_{lg}$ and $a_{sh}$ respectively. Let $a_{lg}^*$ and $a_{sh}^*$ be the ideal capacities of the triangle (Definition \ref{dfn:idealcapacities}). If $a_{lg} \geq a_{lg}^*$, we write $a_{lg} = a_{lg}^* + \delta a$ for some $\delta \geq 0$. Let $b_{lg}^\delta$ be the smallest $b$ such that a long child of capacity $a_{lg}$ with a $b$-curve on the short side can fit within the bounds of the triangle $S$ (Figure \ref{fig:splittingproperties_long}). Then $b_{lg}^\delta$ has the following expression in terms of $\delta$:
\begin{equation} \displaystyle b_{lg}^\delta = a\Big(\frac{\sqrt{s^2+\delta(1+s^2)} - s}{\sqrt{1+s^2} - s}\Big)^2 \end{equation}
Similarly, if we instead have $a_{sh} \geq a_{sh}^*$, we can write $a_{sh} = a_{sh}^* + \delta a$ for some $\delta \geq 0$. Let $b_{sh}^\delta$ be the smallest $b$ such that a short child of capacity $a_{sh}$ with a $b$-curve on the long side can fit within the bounds of the triangle $S$ (Figure \ref{fig:splittingproperties_short}). Then $b_{sh}^\delta$ has the following expression in terms of $\delta$:
\begin{equation} \displaystyle b_{sh}^\delta = a\Big(\frac{\sqrt{1+\delta(1+s^2)} - 1}{\sqrt{1+s^2} - 1}\Big)^2 \end{equation}
\label{lem:trianglebsize}
\end{lemma}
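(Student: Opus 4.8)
The plan is to reduce the statement to a purely metric computation about a single circle inscribed in a corner. Place the parent $s$-right angled triangle $S$ with its hypotenuse on the $x$-axis, the acute vertex adjacent to the long leg at the origin $A$, the acute vertex adjacent to the short leg at $B=(\ell\sqrt{1+s^2},0)$, and the right-angle vertex $R$ directly above the foot $F$ of the altitude, exactly the configuration used in Lemma \ref{lem:idealcapacitysizes}. The ideal split is the altitude $RF$, which cuts $S$ into the long child (the corner triangle at $A$) and the short child (the corner triangle at $B$). For a general split, the long child is the similar copy of the ideal corner triangle $ARF$ obtained by scaling about $A$. Since capacity scales as the square of length and the capacity of each of these right triangles equals its incircle area, writing $a=\pi r_{\mathrm{in}}^2$ and $b=\pi\rho^2$ reduces the whole problem to finding the ratio $\rho/r_{\mathrm{in}}$, where $r_{\mathrm{in}}=\frac{\ell}{2}(1+s-\sqrt{1+s^2})$ is the inradius of $S$ and $\rho$ is the radius of the $b$-curve circle.

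For the long side, I would parametrise the long child by its linear scale $\kappa$ relative to the ideal long child. From $a_{lg}=a_{lg}^*+\delta a$ and $a_{lg}^*=a\frac{s^2}{1+s^2}$ (Lemma \ref{lem:idealcapacitysizes}) we get $\kappa=\sqrt{a_{lg}/a_{lg}^*}=\frac{\sqrt{s^2+\delta(1+s^2)}}{s}$, so the sharp corner of the long child lies at distance $\kappa s\ell$ from $A$ along the ray $AR$, a distance $(\kappa-1)s\ell$ beyond the apex $R$. One checks that for $\delta>0$ this corner crosses the parent's short leg $BR$, so the binding constraint for fitting is tangency to the line $BR$. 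The minimal $b$-curve circle is therefore the circle simultaneously tangent to the long child's two sides at that corner (its hypotenuse, the line $AR$, and its vertical short leg) and to the line $BR$. Writing these three tangency conditions for a center $(x_O,y_O)$ of radius $\rho$ and eliminating $x_O,y_O$ yields $\rho=\frac{s^2\ell(\kappa-1)}{\sqrt{1+s^2}+1-s}$.

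It then remains to form $\rho/r_{\mathrm{in}}$ and simplify. The product $(\sqrt{1+s^2}+1-s)(1+s-\sqrt{1+s^2})$ appearing in the denominator is a difference of squares equal to $2s(\sqrt{1+s^2}-s)$, which collapses the ratio to $\frac{s(\kappa-1)}{\sqrt{1+s^2}-s}$; substituting $s\kappa=\sqrt{s^2+\delta(1+s^2)}$ gives exactly the claimed $b_{lg}^\delta$. As sanity checks, $\delta=0$ gives $\kappa=1$ and hence $b_{lg}^\delta=0$, while $\delta=\frac{1}{1+s^2}$ (the value forcing $a_{lg}=a$) gives $b_{lg}^\delta=a$. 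The short-side formula follows from the identical argument with the two legs interchanged: the short child is scaled about $B$ by $\kappa'=\sqrt{a_{sh}/a_{sh}^*}=\sqrt{1+\delta(1+s^2)}$, its sharp corner overflows across the parent's long leg $AR$, and the analogous three tangencies give $\rho'=\frac{\ell(\kappa'-1)}{\sqrt{1+s^2}+s-1}$; here the difference of squares is $(\sqrt{1+s^2}+s-1)(1+s-\sqrt{1+s^2})=2(\sqrt{1+s^2}-1)$, producing $b_{sh}^\delta=a\big(\frac{\sqrt{1+\delta(1+s^2)}-1}{\sqrt{1+s^2}-1}\big)^2$.

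The main obstacle is the geometric set-up rather than the algebra: one must correctly argue which parent edge is the binding constraint, namely that the overflowing corner is cut off exactly by tangency to the opposite leg and not by some other part of the boundary, and justify that the three simultaneous tangencies uniquely determine the smallest admissible circle. Establishing that the capacity of each right-angled child equals its incircle area, so that the capacity ratio $b/a$ becomes a clean squared length ratio $(\rho/r_{\mathrm{in}})^2$, is the other point needing care; once both are in place the two formulas drop out of the same difference-of-squares simplification.
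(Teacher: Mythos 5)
Your proposal is correct and follows essentially the same route as the paper's proof: both rely on the similarity of the children to the parent, on capacity scaling as the square of length (so $b/a=(\rho/r_{\mathrm{in}})^2$), and on the minimal $b$-curve being fixed by the overflowing corner being cut off by the parent's opposite leg; your explicit coordinates and three tangency conditions produce the same radius $\rho=\frac{s^2\ell(\kappa-1)}{\sqrt{1+s^2}+1-s}$ that the paper obtains by equating the overhang of the child's hypotenuse with a tangent-length expression read off its figure. The resulting formulas, the difference-of-squares simplifications, and your sanity checks at $\delta=0$ and $\delta=\frac{1}{1+s^2}$ all check out.
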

\begin{proof}
Appendix \ref{proof:trianglebsize}
\end{proof}

\begin{lemma}[Triangle Packing Lemma (Short Child)]
Let $b_{sh}^\delta$ be the expression defined in terms of $\delta \geq 0$ (short child) in Lemma \ref{lem:trianglebsize} (where $s \geq 1$, $a > 0$).  Define $\delta_{sh}$ as follows:
\begin{equation} \displaystyle \delta_{sh} := \Big(1 - \frac{1}{2\sqrt{1+s^2}-1}\Big)^2 \end{equation}
Then for all $\delta \in [0,\delta_{sh}]$, we have $\delta \geq b_{sh}^\delta/a$.
\label{lem:trianglepackinglemma}
\end{lemma}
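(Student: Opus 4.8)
The plan is to reduce the stated inequality $\delta \ge b_{sh}^\delta/a$ to a single affine inequality in a well-chosen variable, and then win by evaluating at one endpoint. First I would clean up the notation by writing $t := \sqrt{1+s^2}$, noting $t > 1$ since $s \ge 1$. With this, Lemma~\ref{lem:trianglebsize} gives $b_{sh}^\delta/a = \big((\sqrt{1+\delta t^2}-1)/(t-1)\big)^2$, and the threshold is $\delta_{sh} = \big(2(t-1)/(2t-1)\big)^2$. The goal becomes showing $\delta \ge \big((\sqrt{1+\delta t^2}-1)/(t-1)\big)^2$ for all $\delta \in [0,\delta_{sh}]$.

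Next I would substitute $u := \sqrt{1+\delta t^2}$, so that $u$ ranges over $[1, u_{\max}]$ as $\delta$ ranges over $[0,\delta_{sh}]$, with $\delta = (u^2-1)/t^2$. The target inequality becomes
\[
\frac{u^2-1}{t^2} \;\ge\; \frac{(u-1)^2}{(t-1)^2}.
\]
At $u=1$ (i.e.\ $\delta=0$) both sides vanish, so equality holds there. For $u>1$ I would factor $u^2-1 = (u-1)(u+1)$, divide through by the positive quantity $u-1$, and clear the (positive) denominators $t^2$ and $(t-1)^2$. After expanding $(t-1)^2$ and cancelling the $u t^2$ terms, this collapses to the affine-in-$u$ inequality
\[
h(u) \;:=\; 2t^2 - 2(u+1)\,t + (u+1) \;\ge\; 0.
\]

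The crux is then a monotonicity observation: $h$ is affine in $u$ with slope $1-2t < 0$ (since $t>1$), so $h$ is decreasing and attains its minimum on $[1,u_{\max}]$ at the right endpoint $u_{\max}$. Hence it suffices to verify $h(u_{\max}) \ge 0$, where $u_{\max} = \sqrt{1+\delta_{sh} t^2}$. The one genuinely computational step — and the part I expect to be the main obstacle — is evaluating this endpoint cleanly. Here I would verify the perfect-square identity $1 + \delta_{sh} t^2 = \big((2t^2-2t+1)/(2t-1)\big)^2$, equivalently that $(2t-1)^2 + 4t^2(t-1)^2 = (2t^2-2t+1)^2$, which gives $u_{\max} = (2t^2-2t+1)/(2t-1)$ and hence $u_{\max}+1 = 2t^2/(2t-1)$. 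Substituting into $h$ yields $h(u_{\max}) = 2t^2 - 2t^2 = 0$ after the $(2t-1)$ factors cancel. Since $h$ is decreasing and $h(u_{\max})=0$, we get $h(u) \ge 0$ throughout $[1,u_{\max}]$, which is exactly $\delta \ge b_{sh}^\delta/a$ on $[0,\delta_{sh}]$. Notably this computation also confirms that $\delta_{sh}$ is precisely the largest threshold for which the inequality holds, since equality is attained at $\delta = \delta_{sh}$, matching the intuition stated for $\delta_{sh}$ in the algorithm.
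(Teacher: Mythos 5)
Your proposal is correct and follows essentially the same route as the paper's proof: the same substitution ($t=\sqrt{1+s^2}$, $u=\sqrt{1+\delta(1+s^2)}$, which the paper calls $y$ and $h$) reducing the claim to the same quadratic inequality in $u$, with $\delta_{sh}$ emerging as the exact threshold where equality holds. The only difference is cosmetic: the paper applies the quadratic formula to locate both roots of that quadratic, whereas you factor out the known root at $u=1$ and check nonnegativity of the resulting affine function at the endpoint $u_{\max}$, which is a slightly cleaner way to dispatch the same computation.
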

\begin{proof}
Appendix \ref{proof:trianglepackinglemma}
\end{proof}


\subsubsection{Squares}
\begin{lemma}[Relationship between capacity of a child and its $b$-curve (squares)]
Consider a square $S$ of capacity $a$ with a child of capacity $a_L$. If $a_L \geq 0.5a$, we write $a_L = 0.5a + \delta a$ for some $\delta \geq 0$. Let $b_{sq}^\delta$ be the smallest $b$ such that a child of capacity $a_L$ and a $b$-curve on both sides can fit within the bounds of the square $S$ (Figure \ref{fig:splittingproperties_square}). Then $b_{sq}^\delta$ has the following expression in terms of $\delta$:
\begin{equation} \displaystyle b_{sq}^\delta = a\Big(\frac{\sqrt{1+2\delta} - 1}{\sqrt{2} - 2} \Big)^2 \end{equation}
\label{lem:squarebsize}
\end{lemma}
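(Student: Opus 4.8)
The plan is to argue directly from the Euclidean geometry of the square split, in the same spirit as Lemma \ref{lem:trianglebsize}, and to reduce the whole statement to a single one-dimensional ``how far does the enlarged child protrude'' computation.

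First I would pin down the side length of the square in terms of its capacity. By the definition of the capacity of a square (the two-circle configuration of Figure \ref{fig:splitsquare_before}), placing two equal, mutually tangent circles in opposite corners of a square of side $w$ forces each radius to be $r = w/(\sqrt{2}+2)$, so that $a = 2\pi r^2 = \pi w^2/(3+2\sqrt{2})$, i.e. $w^2 = (3+2\sqrt{2})a/\pi = (1+\sqrt{2})^2 a/\pi$. I keep this conversion factor for the last step.

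Next I would set the square as $[0,w]^2$ and take the child to be the isosceles right triangle with its right angle at the origin and legs of length $\ell_L$ along the axes, the circle of area $a_L$ squeezed into the corner being precisely its incircle. The incircle-radius identity for an isosceles right triangle, $\rho_L = \ell_L(2-\sqrt{2})/2$, together with $\rho_L = \sqrt{a_L/\pi}$ and $a_L = (0.5+\delta)a$, collapses (using the conversion factor above, since $(2+\sqrt{2})/(1+\sqrt{2}) = \sqrt{2}$) to the clean relation $\ell_L = w\sqrt{1+2\delta}$. Hence for $\delta>0$ the two acute ($45^\circ$) corners protrude past the sides $x=w$ and $y=w$, and by reflective symmetry across the diagonal it suffices to control one of them. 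The core computation is then how much a $b$-curve retracts the protruding corner: inscribing the circle of radius $r_b = \sqrt{b/\pi}$ in the corner at $(\ell_L,0)$, tangent to the leg $y=0$ and to the hypotenuse $x+y=\ell_L$, the two tangency conditions place its center at $x$-coordinate $\ell_L - (1+\sqrt{2})r_b$, so the rounded corner reaches only to $x = \ell_L - \sqrt{2}\,r_b$. The fit condition is therefore $\ell_L - \sqrt{2}\,r_b \le w$, and the smallest admissible $b$-curve attains equality, giving $r_b = (\ell_L - w)/\sqrt{2} = w(\sqrt{1+2\delta}-1)/\sqrt{2}$. Converting to area by $b = \pi r_b^2$ and substituting $w^2 = (3+2\sqrt{2})a/\pi$ gives $b_{sq}^\delta = \frac{1}{2}(3+2\sqrt{2})\,a\,(\sqrt{1+2\delta}-1)^2$, which is exactly the claimed $a\big(\frac{\sqrt{1+2\delta}-1}{\sqrt{2}-2}\big)^2$ because $(\sqrt{2}-2)^2 = 6-4\sqrt{2} = 2/(3+2\sqrt{2})$.

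The main obstacle I expect is the corner-retraction step and confirming it is the binding constraint. One must check that the inscribed circle's extreme point in the protrusion direction actually lies on the arc that is retained, and that no other part of the enlarged triangle leaves the square over the relevant range $\delta\in[0,0.5]$ (in particular the hypotenuse, whose foot sits on the diagonal at distance $w\sqrt{1+2\delta}/\sqrt{2} \le w\sqrt{2}$, stays inside). Both are routine once coordinates are in place, and the isosceles symmetry lets me handle the two acute corners with a single computation.
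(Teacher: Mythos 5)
Your proposal is correct and follows essentially the same route as the paper's proof: both reduce the lemma to the one-dimensional fact that the acute corner of the enlarged child protrudes past the square's side by exactly $\sqrt{2}\,r_b$, and then convert lengths to areas (your key relations $\ell_L = w\sqrt{1+2\delta}$ and $r_b = (\ell_L - w)/\sqrt{2}$ are precisely the paper's $\ell_A/\ell^* = \sqrt{A/a^*}$ and $\ell_A - \ell^* = \ell_b - 2r_b = \sqrt{2}\,r_b$). The only difference is cosmetic: you work in explicit coordinates with the explicit capacity-to-side conversion $w^2 = (3+2\sqrt{2})a/\pi$, whereas the paper avoids that conversion by phrasing everything as ratios of similar triangles.
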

\begin{proof}
Appendix \ref{proof:squarebsize}
\end{proof}

\begin{lemma}[Square Packing Lemma]
Let $b_{sq}^\delta$ be the expression defined in terms of $\delta \geq 0$ in Lemma \ref{lem:squarebsize} (where $a > 0$). We then have $\delta \geq b_{sq}^\delta/a$ for $\delta \in [0,0.5]$.
\label{lem:squarepackinglemma}
\end{lemma}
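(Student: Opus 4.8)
The plan is to clear the radical by a substitution that turns the claim into a linear comparison. Writing the statement out explicitly via Lemma \ref{lem:squarebsize}, we must show
$$\delta \geq \Big(\frac{\sqrt{1+2\delta} - 1}{\sqrt{2} - 2}\Big)^2$$
for all $\delta \in [0,0.5]$. The term $\sqrt{1+2\delta}$ is the only obstruction to a purely polynomial argument, so I would set $u = \sqrt{1+2\delta}$, equivalently $\delta = \frac{u^2 - 1}{2}$. As $\delta$ ranges over $[0,0.5]$, the variable $u$ ranges over $[1,\sqrt{2}]$.

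After this substitution the left side becomes $\delta = \frac{(u-1)(u+1)}{2}$ and the numerator on the right becomes $(u-1)^2$, so the inequality reads
$$\frac{(u-1)(u+1)}{2} \geq \frac{(u-1)^2}{(\sqrt{2}-2)^2}.$$
At $u = 1$ (that is, $\delta = 0$) both sides vanish and we have equality. For $u \in (1,\sqrt{2}]$ the factor $(u-1)$ is strictly positive and cancels from both sides, leaving the linear inequality $\frac{u+1}{2} \geq \frac{u-1}{(\sqrt{2}-2)^2}$.

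The remaining step is to verify this linear inequality on $(1,\sqrt{2}]$. Using $(\sqrt{2}-2)^2 = 6 - 4\sqrt{2}$ and cross-multiplying by the two positive denominators, I would collect terms to reach $(4 - 4\sqrt{2})u \geq 4\sqrt{2} - 8$. Since $4 - 4\sqrt{2} < 0$, dividing reverses the inequality, and the resulting constant simplifies (after rationalising $\frac{\sqrt{2}-2}{1-\sqrt{2}}$) to exactly $\sqrt{2}$, so the condition becomes $u \leq \sqrt{2}$. This is precisely the upper endpoint of the range of $u$, hence holds throughout, with equality at $u = \sqrt{2}$ (that is, $\delta = 0.5$).

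I expect no genuine difficulty here: the entire content is the choice of substitution $u = \sqrt{1+2\delta}$, which simultaneously removes the radical and exposes the common factor $(u-1)$ in the difference of the two sides. The only point needing care is sign bookkeeping, since $\sqrt{2}-2$ and $1-\sqrt{2}$ are negative and their squares and ratios must be tracked so that the direction of the inequality is preserved both when cancelling $(u-1)$ and when cross-multiplying. A feature worth recording in the write-up is that the bound is tight at both endpoints $\delta = 0$ and $\delta = 0.5$, which is exactly why it collapses to the domain constraint $u \leq \sqrt{2}$.
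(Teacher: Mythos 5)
Your proof is correct and uses the same key substitution as the paper ($u=\sqrt{1+2\delta}$, which the paper calls $h$), arriving at the same polynomial inequality. The only difference is in the endgame: the paper applies the quadratic formula and locates both roots of the resulting concave quadratic, whereas you spot the root at $u=1$, cancel the common factor $(u-1)$, and reduce to a linear comparison that pins the other endpoint at $u=\sqrt{2}$ --- a slightly cleaner finish, but not a different argument.
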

\begin{proof}
Appendix \ref{proof:squarepackinglemma}
\end{proof}


\subsection{Correctness of Repack}
\label{section:correctnessofrepack}

We prove Theorem \ref{thm:correctness} to hold after every step of the packing. Theorem \ref{thm:correctness} necessarily implies that the root node of the tree is validly packed. As we have already shown that every circle will be allocated to some node, this would conclude our proof of correctness.

\begin{definition}[Valid Packing within Node $S$]
\label{dfn:validpacking}
A node $S$ is said to be validly packed if:
\begin{enumerate}
\item All circles contained in $S$ are placed within the bounds of the node.
\item No two circles contained in $S$ overlap.
\end{enumerate}
\end{definition}

\begin{theorem}
\label{thm:correctness}
Assume that the total size of all inserted circles is within the root node's capacity. Then on the binary tree's initial configuration, and after every subsequent insertion,
\begin{enumerate}
\item Every node in the tree is validly packed (Definition \ref{dfn:validpacking}).
\item The Packing Invariant (Definition \ref{dfn:packinginvariant}) holds on the packing.
\end{enumerate}
\end{theorem}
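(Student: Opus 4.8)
The plan is to prove both statements simultaneously by induction on the number of insertions performed so far. For the base case, the initial configuration is an infinite sequence of empty ideal splits containing no circles, so every node is vacuously validly packed (Definition \ref{dfn:validpacking}), and the Packing Invariant (Definition \ref{dfn:packinginvariant}) holds on each rightmost node: every such node is a full triangle (a $0$-semihat) or a square, its total size $0$ is at most its capacity, and condition~3 is vacuous since every curve $b$ equals $0$. For the inductive step I would assume both statements hold before an insertion. By Corollary \ref{prop:inserttermination}, \Call{InsertCircle}{} terminates by triggering exactly one call \Call{Repack}{$C,S$} at some rightmost node $S$, and I would split the argument into two parts: the nodes strictly above $S$ together with their untouched left subtrees, and the subtree rooted at $S$ that \Call{Repack}{} rebuilds.

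For the nodes above $S$, a repack of $S$ alters neither the shape of any ancestor's right child nor any ancestor's left subtree, so their interiors remain validly packed and their children remain disjoint; the only change is that the contained-circle sets gain the new circle $c$. Lemma \ref{lem:insertionrepackcapacity} then supplies the Packing Invariant on every ancestor that now contains $c$: conditions~1 and~3 persist from the previous step, and condition~2 is exactly the capacity bound that lemma provides. It therefore remains to show that \Call{Repack}{$C,S$} produces a validly packed subtree on which the Packing Invariant holds at every rightmost node, starting from the fact that the invariant holds on $S$ with $C$ (the precondition of \Call{Repack}{}, again guaranteed by Lemma \ref{lem:insertionrepackcapacity}).

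I would establish this by a secondary induction on the \Call{Repack}{} recursion, treating each case of Sections \ref{section:repackingatriangle} and \ref{section:repackingasquare}. In every case the chosen capacities satisfy $a_L + a_R = a$, so the Split Packing splitting rule makes the long and short children disjoint, and the $b$-curve constructions of Lemmas \ref{lem:trianglebsize} and \ref{lem:squarebsize} keep both children inside the bounds of $S$; together with the recursive hypothesis on the right child this yields statement~1. The left child is packed by the Split Packing Algorithm for Semihats (Lemma \ref{lem:splitpackingsemihats}), whose precondition---a circle of size at least the inherited long-side curve $b$---is met because the largest circle of $C$ is always routed to $C_L$ (by Case~1, Case~2, the first step of Algorithm \ref{alg:trianglecase3packing}, or Case~4), and by condition~3 of the invariant on $S$ this largest circle already has size at least $b$. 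The recursion then applies to the right child, which the structural split rules of Section \ref{section:splittingtriangles} guarantee is again a square or a $b'$-semihat, so the inductive hypothesis can close the loop.

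The main obstacle is verifying condition~3 of the Packing Invariant on the new rightmost node, together with the curve preconditions of the packing subroutines, in every case. For triangles this is non-trivial only when the short child receives more than its ideal capacity, $a_R = a_{sh}^* + \delta a$, so that the right child is a genuine $b_{sh}^\delta$-semihat (Cases~3 and~4); for squares the right child is always a full triangle, and the curved object is instead the left $b_{sq}^\delta$-hat. In both situations the argument couples a $b$-curve lemma with a greedy-overflow bound: the Triangle and Square Packing Lemmas (Lemmas \ref{lem:trianglepackinglemma} and \ref{lem:squarepackinglemma}) give $b_{sh}^\delta \le \delta a$ on $\delta \in [0,\delta_{sh}]$ and $b_{sq}^\delta \le \delta a$ on $\delta \in [0,0.5]$, while the greedy rules of Algorithms \ref{alg:trianglecase3packing} and \ref{alg:squarecase1packing} force the curved child to retain or receive a circle of size exceeding $\delta a$, hence a circle of size $> \delta a \ge b$. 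The delicate point is exactly the threshold $\delta_{sh}$: Case~3 is abandoned once $\delta \ge \delta_{sh}$, so I expect the bulk of the work to be confirming that Case~4 (two largest circles to the left) either makes $a_L \ge a_{lg}^*$, forcing the right child to be a full triangle, or keeps the short-side excess within the admissible range so that the same overflow bound re-establishes condition~3. Closing this case analysis completes the inner induction, and hence the outer one; validity at the root then gives the full correctness claim.
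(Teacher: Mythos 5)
Your proposal takes essentially the same route as the paper: an outer induction on insertions, with Lemma \ref{lem:insertionrepackcapacity} handling the ancestors of the repacked node and an inner case-by-case analysis of \Call{Repack}{} (which the paper factors out into Lemmas \ref{lem:repackcorrectnesstriangles} and \ref{lem:repackcorrectnesssquares}) combining the greedy overflow bounds with the Triangle and Square Packing Lemmas. The only imprecisions are minor: in Case 4 the right child is always a full triangle (since $a_L = c_1.size + c_2.size > a_{lg}^*$), so the delicate work is actually showing the two largest circles fit the left child's enlarged $b$-curve, which needs Lemmas \ref{lem:threedeltash} and \ref{lem:case4inequality}; and your inner induction on the (non-terminating) \Call{Repack}{} recursion needs the finite-height base case supplied by Theorem \ref{prop:repacktermination}.
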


The key idea behind the proof for Theorem \ref{thm:correctness} is that our binary tree is only modified through calls to \Call{Repack}{$C$,$S$}, and only when the Packing Invariant (Definition \ref{dfn:packinginvariant}) holds the shape $S$ to be repacked on the circles $C$ to be packed into it. Before we prove Theorem \ref{thm:correctness}, we first prove some properties of a binary tree that is only modified through \Call{Repack}{} calls, in the form of Lemmas \ref{lem:childrenwithinbounds}, \ref{lem:repackcorrectnesstriangles} and \ref{lem:repackcorrectnesssquares}. We prove Theorem \ref{thm:correctness} at the end of this section.

\begin{lemma}
\label{lem:childrenwithinbounds}
If \Call{Repack}{$C$,$S$} is only called when the Packing Invariant holds on $S$ with the circles $C$, then for each node in the binary tree, the children of the node are non-overlapping and within the bounds of the node.
\end{lemma}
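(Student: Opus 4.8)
The plan is to reduce the claim to the analysis of a single split. The binary tree is built up entirely out of \Call{Repack}{} calls, and each such call splits a shape $S$ (a square or a $b$-semihat of capacity $a$) into a left child of capacity $a_L$ and a right child of capacity $a_R$ with $a_L + a_R = a$. Since the hypothesis guarantees that every $S$ on which \Call{Repack}{} is invoked satisfies the Packing Invariant (Definition \ref{dfn:packinginvariant}), and in particular is a genuine square or $b$-semihat, it suffices to verify for each of these two shape types that the two children produced by one split are (i) non-overlapping and (ii) contained in $S$. The statement for the whole tree follows by applying this to every internal node; in the initial configuration all splits are ideal ($\delta = 0$), so no rounding is needed and containment is immediate.

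For non-overlap I would appeal directly to the Split Packing papers \cite{splitpackingsoda, splitpacking}, which establish that when an $s$-right angled triangle or a square of capacity $a$ is split into subregions of capacities $a_L$ and $a_R$ with $a_L + a_R \le a$, the two underlying full triangles do not overlap. Since \Call{Repack}{} always sets $a_L + a_R = a$, this applies verbatim, and since forming a hat or semihat only trims area from the corners, the rounding cannot re-introduce an overlap. This disposes of (i) for both shape types.

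For containment I would case on the shape of $S$ and on which child exceeds its ideal capacity, invoking the $b$-curve formulas. For a $b$-semihat write the deviation from the ideal split as $\delta$. If $a_{lg} \ge a_{lg}^*$, the long child overhangs toward the short side, and Lemma \ref{lem:trianglebsize} states that a $b_{lg}^\delta$-curve on its short side is exactly what is needed for it to fit inside the underlying triangle; the short child then has $a_{sh} = a - a_{lg} \le a_{sh}^*$, so it is no larger than ideal and sits inside as a full triangle. The symmetric subcase $a_{lg} < a_{lg}^*$ uses the $b_{sh}^\delta$ formula of the same lemma for the short child, with the long child keeping only its inherited curve. For a square, the larger child has capacity $0.5a + \delta a$ and Lemma \ref{lem:squarebsize} supplies the $b_{sq}^\delta$-curve on both of its sides that makes it fit, while the smaller child needs no rounding. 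In each case containment holds by the very definition of the curve radii in Lemmas \ref{lem:trianglebsize} and \ref{lem:squarebsize}.

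The step I expect to be the main obstacle is reconciling the fact that Lemma \ref{lem:trianglebsize} is stated for a full $s$-right angled triangle, whereas the $S$ handed to \Call{Repack}{} may already be a $b$-semihat with $b > 0$. The key observation I would use to bridge this gap is that the inherited $b$-curve lies entirely on the long side of $S$: it is therefore carried intact by the long child as part of its own long-side boundary and is simply irrelevant to the placement of the short child, whose boundary touches only the split line and the short and hypotenuse sides. Hence the full-triangle computation of Lemma \ref{lem:trianglebsize} correctly determines the additional short-side rounding required, and the long child ends up as a $b$-$b_{lg}^\delta$-semihat contained in $S$ exactly as claimed. Finally, I would note that the Packing Invariant enters here only to guarantee that $S$ is a valid square or $b$-semihat at each call, so that the split and these formulas are defined, with the deviations $\delta$ automatically lying in the ranges $[0, a_{sh}^*/a]$ and $[0, 0.5]$ on which the curve formulas are valid.
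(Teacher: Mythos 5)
Your proposal is correct and follows essentially the same route as the paper: the paper's proof likewise observes that the tree is only modified by \Call{Repack}{}, that the splitting procedures of Section \ref{section:splittingshapes} guarantee non-overlap and containment whenever $a_L + a_R \leq a$ with both capacities in $[0,a]$, and that every case of \Call{Repack}{} satisfies this. You simply spell out the containment half in more detail (via the $b$-curve constructions of Lemmas \ref{lem:trianglebsize} and \ref{lem:squarebsize}) than the paper, which delegates it to the description of the splitting procedure.
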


\begin{proof}
The binary tree is only modified through calls to \Call{Repack}{}. Calling \Call{Repack}{$C$,$S$} does not change the shape of $S$, and only rebuilds $S$'s descendants. From the description of the triangle/square splitting procedures in Section \ref{section:splittingshapes}, the only requirement for the shapes to be nonoverlapping and within the bounds of the original shape is for $a_L + a_R \leq a$ and for $a_L$ and $a_R$ to both be nonnegative and no larger than $a$.
These two requirements can be easily seen to be met in each of the cases given in Sections \ref{section:repackingatriangle} and \ref{section:repackingasquare}. (in fact, in most of the cases, we define $a_R$ to be $a - a_L$.)
\end{proof}

\begin{lemma} [Properties of \Call{Repack}{} on Triangles ($b$-semihats)]
Assume that the Packing Invariant (Definition \ref{dfn:packinginvariant}) holds on a $b$-semihat $S$ with a set of circles $C$. Suppose that we call \Call{Repack}{$C$,$S$}. Let $C_R \subseteq C$ be the set of circles that \Call{Repack}{$C$,$S$} will pack into the right child of $S$. Then,
\begin{enumerate}
\item All circles in the left child of $S$ will be packed within the bounds of the left child, and no two of these circles will overlap.
\item The packing invariant holds on the right child of $S$ with the circles $C_R$ (i.e. it holds when \Call{Repack}{} recurses into the right child)

\end{enumerate}
\label{lem:repackcorrectnesstriangles}
\end{lemma}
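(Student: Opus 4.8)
The plan is a case analysis over the four cases by which \Call{Repack}{} chooses $a_L$, $a_R$, $C_L$, $C_R$, checking both conclusions in each. First I would extract three facts that hold uniformly from the Packing Invariant on $S$ with $C$: the total size of $C$ is at most $a$; if $b>0$ then $C$ contains a circle of size at least $b$; and in every case the largest circle of $C$ lands in $C_L$ (explicitly in Cases 1 and 4, as the first greedy pick in Case 3 since outside Case 1 no circle exceeds $a_{lg}^*$, and trivially in Case 2). Together these give that $C_L$ always holds a circle of size at least $b$. I would also note $\Call{totalSize}{C_L}\le a_L$ and $\Call{totalSize}{C_R}\le a_R$: in the three tight cases $a_L=\Call{totalSize}{C_L}$ and then $\Call{totalSize}{C_R}=\Call{totalSize}{C}-a_L\le a-a_L=a_R$, while in Case 2 the set $C_R$ is empty.

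For the first conclusion, the splitting rules of Section \ref{section:splittingtriangles} make the left (long) child a $b$-semihat when $a_L\le a_{lg}^*$ and a $b$-$b_{lg}$-semihat when $a_L\ge a_{lg}^*$; in either case its short-side curve is some $b'\in\{0,b_{lg}\}$ and its long-side curve is $\max\{b,b'\}$. Since $\Call{totalSize}{C_L}\le a_L$, the Split Packing Algorithm packs $C_L$ into the corresponding $b'$-hat of capacity $a_L$; since $C_L$ contains a circle of size at least $b$, Lemma \ref{lem:splitpackingsemihats} promotes this to a valid packing of $C_L$ inside the left child, giving both parts of the first conclusion at once.

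For the second conclusion, statement~1 of the Packing Invariant is structural (the right, short, child is always a full triangle or a $b_{sh}$-semihat, i.e. a $b$-semihat with $b\ge 0$), and statement~2 is the bound $\Call{totalSize}{C_R}\le a_R$ above. The content is statement~3: if the short child is a $b_{sh}$-semihat with $b_{sh}>0$ I must produce a circle of $C_R$ of size at least $b_{sh}$. Cases 1 and 2 give a full triangle, so there is nothing to show. In Case 3, where $a_R=a_{sh}^*+\delta a$ with $0\le\delta<\delta_{sh}$ and the short child is a $b_{sh}^\delta$-semihat, I would track the first circle $c$ that Algorithm \ref{alg:trianglecase3packing} rejects from $C_L$: at that step the partial left total $T$ obeys $T+c.size>a_{lg}^*$, and since $T$ is at most the final left total $a_{lg}^*-\delta a$ we get $c.size>\delta a$. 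Lemma \ref{lem:trianglepackinglemma} supplies $b_{sh}^\delta\le\delta a$ for $\delta\le\delta_{sh}$, so $c\in C_R$ has size $>\delta a\ge b_{sh}^\delta$; this is exactly what the threshold $\delta_{sh}$ is engineered to guarantee.

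I expect Case 4 to be the main obstacle. Here the greedy defect satisfies $\delta\ge\delta_{sh}$ and $C_L=\{c_1,c_2\}$ are the two largest circles, and the goal is to show the short child is a full triangle, i.e. $a_L=c_1+c_2>a_{lg}^*$, which makes statement~3 vacuous. My approach is to bound how many large circles the greedy could have accepted: if $c_k$ is its first rejected circle then $c_k>\delta a\ge\delta_{sh}a$ by the argument above, and the earlier circles $c_1,\dots,c_{k-1}$, all accepted into $C_L$, each exceed $\delta_{sh}a$, so their total exceeds $(k-1)\delta_{sh}a$ yet is at most the left total $a_{lg}^*-\delta_{sh}a$; this forces $k<a_{lg}^*/(a\delta_{sh})$. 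The crux is then a short computation, using $a_{lg}^*=a\,s^2/(1+s^2)$ from Lemma \ref{lem:idealcapacitysizes} and the definition of $\delta_{sh}$, that $a_{lg}^*/(a\delta_{sh})<3$ for all $s\ge 1$; this yields $k=2$, so the greedy rejected $c_2$ and hence $c_1+c_2>a_{lg}^*$. Verifying this inequality cleanly across the whole range $s\ge 1$ is the delicate point; once it holds, the left child is a $b$-$b_{lg}$-semihat and the first conclusion follows as in the paragraph above.
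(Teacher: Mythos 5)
There is a genuine gap in your treatment of the first conclusion. You assert that ``since $\Call{totalSize}{C_L}\le a_L$, the Split Packing Algorithm packs $C_L$ into the corresponding $b'$-hat of capacity $a_L$,'' but for a $b'$-hat with $b'>0$ the Split Packing guarantee also requires a size condition on the circles: every circle in $C_L$ must have size at least $b'$ (this is exactly the hypothesis the paper invokes in the square case, and the reason Lemma \ref{lem:splitpackingsemihats} is only a \emph{semihat} extension that excuses the long-side curve, not the short-side one). The condition is nontrivial precisely in Cases 1 and 4, where $a_L>a_{lg}^*$ forces a $b_{lg}$-curve of positive area $b_L=b_{lg}^{\delta}$ on the left child's short side. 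In Case 1 this is harmless only because $C_L$ is a single circle whose area equals the child's capacity (the paper disposes of it via the imaginary-circle reduction to an offline Split Packing of $C\cup\{c'\}$). In Case 4 it is the technical heart of the lemma: one must show $c_1.size,\ c_2.size\ge b_L$. The paper does this by first establishing the \emph{upper} bound $c_1.size,\ c_2.size\le a_{lg}^*-\dsh a$ (otherwise Case 1 or Case 3 would have applied), deducing $\delta\le a_{lg}^*/a-2\dsh$, and then invoking the monotonicity of $b_{lg}^\delta$ together with Lemma \ref{lem:case4inequality} to get $b_L\le b_{lg}^{a_{lg}^*/a-2\dsh}\le\dsh a\le c_i.size$. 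Your proposal never states the upper bound on $c_1,c_2$, never bounds $\delta$ for the long child, and never cites Lemma \ref{lem:case4inequality}, so the step ``the first conclusion follows as in the paragraph above'' does not go through. Indeed, the entire reason Case 4 places \emph{exactly} the two largest circles into $C_L$ is to make this size-versus-curve verification possible; omitting it removes the content of the case.

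The rest of the proposal is sound and close to the paper: the uniform extraction of ``largest circle lands in $C_L$,'' the capacity bookkeeping, and the Case 3 argument that the first rejected circle has size $>\delta a\ge b_{sh}^\delta$ via Lemma \ref{lem:trianglepackinglemma} all match. Your Case 4 derivation of $c_1.size+c_2.size>a_{lg}^*$ by counting greedily accepted circles (each accepted circle exceeds $\dsh a$, the accepted total is below $a_{lg}^*-\dsh a$, and $3\dsh\ge a_{lg}^*/a$ forces the second circle to be the first rejection) is a valid and somewhat cleaner alternative to the paper's contrapositive ``else Case 3 would apply'' argument, and it uses the same key inequality (Lemma \ref{lem:threedeltash}); but it only establishes that the right child is a full triangle, not that the two circles fit in the left one.
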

\begin{proof}
The algorithm packs a set of circles $C$ into a $b$-semihat with capacity $a$. The left child will be a $b$-$b_L$-semihat, for some $b_L \geq 0$, with capacity $a_L$. The right child will be a $b_R$-semihat, for some $b_R \geq 0$, with capacity $a_R$. We do the proof for each of the four cases of the repacking algorithm.

\noindent
To prove statement (1), by Lemma \ref{lem:splitpackingsemihats}, we only need to show that:
\begin{itemize}
\item If the $b$-semihat was instead a full triangle, the circles $C_L$ can be packed into the left child with the Split Packing Algorithm.
\item The largest circle in $C_L$ has size at least $b$.
\end{itemize}
\noindent To prove statement (2), we show that \Call{totalSize}{$C_R$}$\leq a_R$ and that there exists a circle in $C_R$ with size at least $b_R$.
\begin{description}
\item \textbf{Case 1:}

Consider an imaginary circle $c'$ of size $a -$\Call{totalSize}{$C$}. Running \Call{Repack}{} on $C$ is equivalent to a Split Packing of $C \cup \{c'\}$ into a $b$-hat (see Algorithm \ref{alg:split}), where the largest circle is placed into the left child.
\begin{enumerate}
\item 
By the Split Packing analogy, the largest circle alone can be packed into the left child using Split Packing. By the invariant, the largest circle also has size of at least $b$.
\item 
It is clear that the circles in $C_R$ have total size at most $a_R$. The right child is a full triangle (as $a_R \leq a_{sh}^*$), so $b_R = 0$.
\end{enumerate}
\item \textbf{Case 2:}
\begin{enumerate}
\item 
If the original $b$-semihat was a full triangle, then the left child is also a full triangle with capacity $a_{lg}^*$. As \Call{totalSize}{$C_L$}$\leq a_{lg}^*$, the Split Packing Algorithm can pack them into the left child. By the invariant, the largest circle, which will be in $C_L$, also has size of at least $b$.
\item 
There are no circles in $C_R$, and the right child is a full triangle ($b_R=0$), so the invariant trivially holds.
\end{enumerate}
\item \textbf{Case 3:}
\begin{enumerate}
\item 
If the original $b$-semihat was a full triangle, then the left child is also a full triangle with capacity \Call{totalSize}{$C_L$}. Thus the Split Packing Algorithm can pack them into the left child.

Due to Case 1, the largest circle will have size $\leq a_{lg}^*$, and so will be the first circle to be placed in $C_L$. By the invariant, this circle has size at least $b$.
\item 
It is clear that \Call{totalSize}{$C_R$}$\leq a_R^* + \delta a$, the capacity of the right child.

Take any circle $c$ in $C_R$. If $c.size \leq \delta a$, then at the point circle $c$ was visited by the greedy algorithm, it would have been added to $C_L$. Thus we must have $c.size > \delta a$.

For $\delta \geq 0$, let $b_{sh}^\delta$ be the expression in terms of $\delta$ defined in Lemma \ref{lem:trianglebsize}.

The capacity of the right child is $a_R^* + \delta a$, so $b_R = b_{sh}^\delta$. By Lemma \ref{lem:trianglepackinglemma}, as $\delta \leq \delta_{sh}$, we have $c.size > \delta a \geq b_R$, so the invariant holds.
\end{enumerate}
\item \textbf{Case 4:}
($\delta_{sh}$ is defined in Lemma \ref{lem:trianglepackinglemma})

We first show that the two largest circles have sizes in $[\delta_{sh}a, a_{lg}^*-\delta_{sh}a]$.
\begin{description}
\item
Take the largest circle $c \in C$. $c.size \leq a_{lg}^*$ or Case 1 would apply. So $c.size \leq a_{lg}^*-\delta_{sh}a$ otherwise it would be greedily packed into $L$, and Case 3 would apply. Thus all circles in $C$ have sizes at most $a_{lg}^*-\delta_{sh}a$.

$C_R$ has at least one circle, or Case 2 would apply. Any circle in $C_R$ must have size at least $\delta a$ or the greedy algorithm would have packed it into $L$. As the largest circle is in $L$ by the greedy algorithm, there are at least two circles with size at least $\delta a \geq \delta_{sh} a$. Thus the two largest circles must have sizes in $[\delta_{sh}a, a_{lg}^*-\delta_{sh}a]$.
\end{description}
Now let the largest two circles be $c_1, c_2$. We show that $c_1.size + c_2.size > a_{lg}^*$.
\begin{description}
\item
By Lemma \ref{lem:threedeltash}, $3\delta_{sh} \geq a_{lg}^*/a$ for all $s \geq 1$. Thus $c_1.size + c_2.size \geq 2_{sh}a \geq a_{lg}^* - \delta_{sh} a$.
If $c_1.size + c_2.size \leq a_{lg}^*$, then the greedy algorithm would place these two circles into $L$, and $L$ would then have total size at least $a_{lg}^* - \delta_{sh}a$, so Case 3 would apply.
Thus we must have $c_1.size + c_2.size > a_{lg}^*$.
\end{description}
\begin{enumerate}
\item 
Because $a_L = c_1.size + c_2.size > a_{lg}^*$, the left child will not be a full triangle. We show that these two circles fit into the left child (which is the long child).

As both $c_1.size$, $c_2.size \leq a_{lg}^* - \delta_{sh}a$, we have $c_1.size + c_2.size \leq 2a_{lg}^* - 2\delta_{sh}a = a_{lg}^* + (a_{lg}^*/a - 2\delta_{sh})a$.

Let $b_{lg}^\delta$ be the expression in terms of $\delta \geq 0$ defined in Lemma \ref{lem:trianglebsize}. As $b_{lg}^\delta$ is monotonically increasing with $\delta$, we have $b_L \leq b_{lg}^{a_{lg}^*/a-2\delta_{sh}}$. (subbing $\delta := a_{lg}^*/a - 2\delta_{sh}$).

As $c_1.size \in [\delta_{sh}a, a_{lg}^*-\delta_{sh}a]$, we must have $\delta_{sh}a \leq a_{lg}^*-\delta_{sh}a$. Thus $2\delta_{sh} \leq a_{lg}^*/a$, so by Lemma \ref{lem:case4inequality}, we have $\delta_{sh} \geq b_{lg}^{a_{lg}^*/a-2\delta_{sh}}/a$.

Thus $c_1.size$, $c_2.size \geq \delta_{sh}a \geq b_{lg}^{a_{lg}^*/a-2\delta_{sh}} \geq b_L$, so the Split Packing Algorithm can pack them into the left child.

By the invariant, the largest circle, which will be in $C_L$, has size of at least $b$.
\item 
It is clear that the circles in $C_R$ have total size at most $a_R$. As $a_L = c_1.size + c_2.size > a_{lg}^*$, we must have $a_R < a_{sh}^*$, the right child is a full triangle, so $b_R = 0$.
\end{enumerate}
\end{description}
\end{proof}
\noindent The following lemmas were used in the proof in Case 4. Their proofs are in Appendix A.
\begin{lemma}
$3\delta_{sh} \geq a_{lg}^*/a$ for all $s \geq 1$.
\label{lem:threedeltash}
\end{lemma}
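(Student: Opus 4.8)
The plan is to collapse the two appearances of $\sqrt{1+s^2}$ by substituting $t = \sqrt{1+s^2}$, which ranges over $t \geq \sqrt{2}$ since $s \geq 1$, and thereby reduce the claim to an elementary one-variable polynomial inequality. Using Lemma \ref{lem:idealcapacitysizes}, $a_{lg}^*/a = s^2/(1+s^2) = (t^2-1)/t^2$, while the definition of $\delta_{sh}$ rewrites as
\[
\delta_{sh} = \left(1 - \frac{1}{2t-1}\right)^2 = \frac{4(t-1)^2}{(2t-1)^2}.
\]
The target inequality $3\delta_{sh} \geq a_{lg}^*/a$ then reads $\frac{12(t-1)^2}{(2t-1)^2} \geq \frac{(t-1)(t+1)}{t^2}$. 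Since $t > 1$, the factor $(t-1)$ is strictly positive and may be cancelled without reversing the inequality; clearing the remaining (positive) denominators $t^2$ and $(2t-1)^2$ yields, after expansion, the cubic inequality $g(t) := 8t^3 - 12t^2 + 3t - 1 \geq 0$.

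The second step is to establish $g(t) \geq 0$ on $[\sqrt{2}, \infty)$ by monotonicity. I would compute $g'(t) = 3(8t^2 - 8t + 1)$, whose roots are $t = (2 \pm \sqrt{2})/4$, both strictly below $\sqrt{2}$. Hence $g$ is strictly increasing on $[\sqrt{2}, \infty)$, so it suffices to check the left endpoint: $g(\sqrt{2}) = 19\sqrt{2} - 25 > 0$, since $19\sqrt{2} \approx 26.9$. Therefore $g(t) \geq g(\sqrt{2}) > 0$ for all $t \geq \sqrt{2}$, which unwinds through the substitution to give $3\delta_{sh} \geq a_{lg}^*/a$ for every $s \geq 1$.

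I do not anticipate a genuine obstacle: the lemma is purely an inequality in one variable once the substitution $t = \sqrt{1+s^2}$ merges the radical terms. The only points requiring care are bookkeeping — verifying that cancelling $(t-1)$ preserves the inequality direction (it does, as $t - 1 > 0$), and confirming that both roots of $g'$ lie below $\sqrt{2}$, so that evaluating $g$ at the single endpoint $t = \sqrt{2}$ certifies nonnegativity over the whole range.
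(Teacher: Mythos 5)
Your proof is correct and follows essentially the same route as the paper's: the substitution $y=\sqrt{1+s^2}$, cancellation of the common factor $(y-1)$, and reduction to the cubic inequality $8y^3-12y^2+3y-1\geq 0$ on $[\sqrt{2},\infty)$. The only difference is that you justify the final cubic inequality explicitly (via the derivative and the endpoint value $19\sqrt{2}-25>0$), whereas the paper merely asserts it holds for $y\geq\sqrt{2}$, so your write-up is if anything slightly more complete.
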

\begin{proof}
Appendix \ref{proof:threedeltash}
\end{proof}

\begin{lemma}
If $2\delta_{sh} \leq a_{lg}^*/a$, then we have $\delta_{sh} > b_{lg}^{a_{lg}^*/a - 2\delta_{sh}}/a$, where $b_{lg}^{a_{lg}^*/a - 2\delta_{sh}}$ is defined in Lemma \ref{lem:trianglebsize} for the long child (by letting $\delta := a_{lg}^*/a - 2\delta_{sh}$).
\label{lem:case4inequality}
\end{lemma}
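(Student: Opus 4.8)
The plan is to substitute the explicit formulas from Lemmas \ref{lem:idealcapacitysizes} and \ref{lem:trianglebsize} and reduce the claim to the positivity of a single univariate polynomial. Writing $t := \sqrt{1+s^2}$, so that $s = \sqrt{t^2-1}$ and $t \ge \sqrt2$ (since $s \ge 1$), the relevant quantities become compact: $a_{lg}^*/a = 1 - 1/t^2$, while $\delta_{sh} = \big(\frac{2(t-1)}{2t-1}\big)^2$, so that $\sqrt{\delta_{sh}} = \frac{2(t-1)}{2t-1}$. The hypothesis $2\delta_{sh} \le a_{lg}^*/a$ is exactly the statement $\delta := a_{lg}^*/a - 2\delta_{sh} \ge 0$, which is what makes the long-child expression of Lemma \ref{lem:trianglebsize} applicable in the first place.

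The crucial simplification is the identity $t - s = 1/(t+s)$ (from $t^2 - s^2 = 1$), which clears the awkward denominator $\sqrt{1+s^2}-s$ appearing in $b_{lg}^\delta$. Setting $v := \sqrt{(t^2-1)+\delta t^2}$, it gives $\sqrt{b_{lg}^\delta/a} = (t+s)(v-s)$, and a short computation yields $v^2 = \frac{2(t-1)(4t^2-3t+1)}{(2t-1)^2}$. Since $\delta \ge 0$ forces $v \ge s$, both $\sqrt{\delta_{sh}}$ and $(t+s)(v-s)$ are nonnegative, so the target $\delta_{sh} > b_{lg}^\delta/a$ is equivalent to $(t+s)v < \frac{2(t-1)}{2t-1} + ts + (t^2-1)$, where I have expanded $(t+s)(v-s)$ and used $s^2 = t^2-1$. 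The right-hand side is manifestly positive, so I may square.

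Squaring (using $(t+s)^2 = 2t^2-1+2ts$, the value of $v^2$, $s^2 = t^2 - 1$, and $A := \frac{2(t-1)}{2t-1}+t^2-1 = \frac{(t-1)(2t^2+t+1)}{2t-1}$) turns the target into $(P_R - P_L) + (Q_R - Q_L)\,s > 0$ for explicit rational functions $P_L,P_R,Q_L,Q_R$ of $t$, where $s$ enters only linearly. The whole point of the reduction is that both coefficients turn out to be individually positive for $t \ge \sqrt2$, so that no delicate cancellation or second squaring is needed. Indeed $Q_R - Q_L = \frac{2t(t-1)^2(4t^2-4t+3)}{(2t-1)^2}$, and $4t^2-4t+3$ has negative discriminant, so this is positive; and $(2t-1)^2(P_R-P_L) = (t-1)\,M(t)$ with $M(t) = 8t^5 - 16t^4 + 10t^3 + 2t^2 - 7t + 1$.

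It remains to show $M(t) > 0$ for $t \ge \sqrt2$, which is the one genuinely nontrivial step and the main obstacle. I would handle it through the decomposition $M(t) = 8t^3(t-1)^2 + (2t^3 + 2t^2 - 7t + 1)$: the first summand is nonnegative, and the cubic $h(t) = 2t^3+2t^2-7t+1$ satisfies $h(\sqrt2) = 5 - 3\sqrt2 > 0$ and $h'(t) = 6t^2+4t-7 > 0$ for $t \ge \sqrt2$, hence $h > 0$ on $[\sqrt2,\infty)$. Therefore $M > 0$, so $P_R - P_L > 0$, and combined with $Q_R - Q_L > 0$ this gives the squared inequality and hence the lemma. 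Apart from this final positivity check, the difficulty is purely the bookkeeping of pushing the two nested radicals $s$ and $v$ through one squaring step; the two enabling observations are the identity $t-s = 1/(t+s)$ and the fortunate fact that the constant and $s$-linear parts of the squared difference are separately positive.
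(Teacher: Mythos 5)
Your proof is correct; I verified the key identities ($v^2 = \frac{2(t-1)(4t^2-3t+1)}{(2t-1)^2}$, $A-v^2=\frac{(t-1)^2(4t^2-4t+3)}{(2t-1)^2}$, and $(2t-1)^2(P_R-P_L)=(t-1)M(t)$ with $M(t)=8t^5-16t^4+10t^3+2t^2-7t+1$) and they all hold. The broad strategy is the same as the paper's---substitute the closed forms from Lemmas \ref{lem:idealcapacitysizes} and \ref{lem:trianglebsize}, clear radicals, and reduce to polynomial positivity---but the endgame differs genuinely. The paper takes square roots of both sides, multiplies through by $(y-s)(2y-1)$, reaches $2y^2-2y+s > \sqrt{8y^3-14y^2+8y-2}$, discards the dependence on $s$ via the crude bound $s\ge 1$, and squares a \emph{second} time to arrive at the quartic $4y^4-16y^3+22y^2-12y+3>0$, whose positivity it merely asserts (``this polynomial has no roots''). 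You instead use $t-s=1/(t+s)$ to absorb the denominator, square only once, and exploit the fact that the squared difference is \emph{linear} in $s$ with both coefficients separately positive; this avoids the second squaring at the price of a quintic $M(t)$ rather than a quartic, but you pay that price in full with an elementary decomposition $M(t)=8t^3(t-1)^2+h(t)$, $h$ increasing on $[\sqrt2,\infty)$ with $h(\sqrt2)=5-3\sqrt2>0$. (The restriction $t\ge\sqrt2$, i.e.\ $s\ge 1$, is genuinely needed here, since $M(1)=-2<0$.) Your write-up is thus somewhat more self-contained than the paper's at the final positivity step, while the paper's double-squaring keeps the terminal polynomial at lower degree; both establish the strict inequality.
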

\begin{proof}
Appendix \ref{proof:case4inequality}
\end{proof}

\begin{lemma} [Properties of \Call{Repack}{} on Squares]
Assume that the Packing Invariant (Definition \ref{dfn:packinginvariant}) holds on a square with a set of circles $C$. Suppose that we call \Call{Repack}{$C$,$S$}. Let $C_R \subseteq C$ be the set of circles that \Call{Repack}{$C$,$S$} will pack into the right child of $S$. Then,
\begin{enumerate}
\item All circles in the left child of $S$ will be packed within the bounds of the left child, and no two of these circles will overlap.
\item The packing invariant holds on the right child of $S$ with the circles $C_R$ (i.e. it holds when \Call{Repack}{} recurses in to the right child)

\end{enumerate}
\label{lem:repackcorrectnesssquares}
\end{lemma}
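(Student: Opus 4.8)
The plan is to follow the structure of the proof of Lemma~\ref{lem:repackcorrectnesstriangles}, treating the two cases of the square repacking separately and, in each, establishing statement~(1) (the left child correctly packs $C_L$) and statement~(2) (the Packing Invariant passes to the right child with $C_R$). Write $a$ for the capacity of the square $S$. By the splitting rule of Section~\ref{section:splittingsquares}, $S$ divides into a \emph{$b$-hat} on the left of capacity $a_L$ and a \emph{full triangle} on the right of capacity $a_R = a - a_L$, with $b = b_{sq}^\delta$ given by Lemma~\ref{lem:squarebsize}. The one structural difference from the triangle case governs the whole argument: the left child is rounded on \emph{both} sides, so to keep $C_L$ inside it I must fill \emph{two} $b$-curves with sufficiently large circles, rather than the single curve handled by Lemma~\ref{lem:splitpackingsemihats}. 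Accordingly, I reduce statement~(1) to showing (i) that the circles $C_L$ can be packed into a full triangle of capacity $a_L$ by the Split Packing Algorithm, and (ii) that $C_L$ contains two circles of size at least $b$, one for each corner. Since the Split Packing Algorithm is defined directly on $b$-hats and \Call{Split}{} (Algorithm~\ref{alg:split}) lets me route the largest circle of each subproblem to either child, (i) and (ii) together place a circle of size $\ge b$ snugly in each rounded corner, so nothing protrudes past either arc.

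For statement~(1), Case~2 (total size at most $0.5a$) is immediate: here $\delta = 0$, so $b = b_{sq}^0 = 0$ and the left child is a full triangle, into which $C_L$ (of total size at most $0.5a = a_L$) packs directly. In Case~1 the key sub-claim is that \emph{every} circle remaining in $C_L$ has size strictly greater than $\delta a$. This follows from the greedy rule of Algorithm~\ref{alg:squarecase1packing}: a circle $c$ is retained only when removing it would drop the running total of $C_L$ below $0.5a$, and because that running total is non-increasing and ends at $a(0.5+\delta)$, at the moment $c$ is examined it is at least $a(0.5+\delta)$; hence the size of $c$ exceeds $a(0.5+\delta) - 0.5a = \delta a$. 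Since $C_L \subseteq C$ has total size at most $a$, we have $\delta \in [0,0.5]$, so Lemma~\ref{lem:squarepackinglemma} gives $b_{sq}^\delta \le \delta a$, and therefore every circle of $C_L$ exceeds $b$. When $C_L$ has at least two circles this immediately supplies the two corner circles required by~(ii); the degenerate case of a single circle is handled separately, since a lone circle of area $a_L$ is placed centrally as the incircle of the underlying triangle, which stays clear of the two removed corner regions and hence remains inside the hat.

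Statement~(2) is short. In both cases the right child is a full triangle, so $b_R = 0$ and the third clause of the Packing Invariant is vacuous, while the first clause holds because a full triangle is a $0$-semihat. It remains to bound the total size of $C_R$: in Case~2 we have $C_R = \emptyset$, and in Case~1 the total size of $C_R$ equals the total size of $C$ minus $a_L$, which is at most $a - a(0.5+\delta) = a_R$ by the Packing Invariant on $S$. Thus the second clause holds and the invariant passes to the right child.

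The step I expect to be the main obstacle is reduction~(ii) and its geometric justification, precisely because this is where the square departs from the triangle: the left child is a genuine hat with two curves, so I cannot simply reuse Lemma~\ref{lem:splitpackingsemihats}, which rounds a single side. I anticipate needing either a hat analogue of that lemma (obtained, for the isosceles right triangle with $s=1$, by applying the one-sided rounding argument symmetrically to each corner) or a direct appeal to the original Split Packing Algorithm on $b$-hats, together with a careful check that the two largest circles of $C_L$ --- both guaranteed to exceed $b$ by the greedy bound --- can be routed to the two corners via the freedom in \Call{Split}{}. The single-circle boundary case, where no second large circle exists, must then be argued separately through the central incircle placement noted above.
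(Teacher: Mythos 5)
Your proposal is correct and follows essentially the same route as the paper: the greedy bound showing every circle retained in $C_L$ has size exceeding $\delta a$, the appeal to Lemma~\ref{lem:squarepackinglemma} with $\delta \in [0,0.5]$ to get $\delta a \geq b_{sq}^\delta$, and the observation that the right child is always a full triangle so the invariant passes trivially. The one obstacle you anticipate --- needing a two-sided analogue of Lemma~\ref{lem:splitpackingsemihats} to route two large circles into the two corners --- does not actually arise: the original Split Packing result is stated directly for $b$-hats and guarantees a valid packing whenever the total size is at most the capacity and \emph{every} circle has size at least $b$, which is exactly the (stronger) condition you have already established; this also makes your separate single-circle degenerate case unnecessary.
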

\begin{proof}
The algorithm packs a set of circles $C$ into a square with capacity $a$. The left child will be a $b_L$-hat of capacity $a_L$, for some $b_L \geq 0$, while the right child will be a full triangle of capacity $a_R$. We do the proof for each of the two cases of the packing algorithm.

\noindent
To prove (1), we note that the Split Packing Algorithm can pack a set of circles $C$ into a $b_L$-hat of capacity $a_L$ if $\Call{totalSize}{C_L} \leq a_L$ and every circle in $C_L$ has size at least $b_L$.

\noindent
To prove (2), we show that $\Call{totalSize}{C_R} \leq a_R$ and that there exists a circle in $C_R$ with size at least $b_R$.

\begin{description}
\item \textbf{Case 1:}
\begin{enumerate}
\item 
We have $\Call{totalSize}{C_L} = a_L$.
Take any $c \in C_L$. If $c.size \leq \delta a$, then as $\Call{totalSize}{C_L} = a(0.5+\delta)$, at the point $c$ is visited in Algorithm \ref{alg:squarecase1packing}, $c$ would have been moved to $C_R$. Thus we can say that $c.size > \delta a$. Let $b_{sq}^\delta$ be the expression defined in terms of $\delta \geq 0$ in Lemma \ref{lem:squarebsize}. We have $b_L = b_{sq}^\delta$. $\Call{totalSize}{C_L}$ cannot exceed $a$, so we always have $\delta \leq 0.5$. Thus by the Square Packing Lemma (Lemma \ref{lem:squarepackinglemma}), we have $\delta a \geq b_L$, so $c.size > b_L$.
\item 
It is clear that $\Call{totalSize}{C_R} \leq a_R$. The right child is a full triangle, so $b_R = 0$.
\end{enumerate}
\item \textbf{Case 2:}
\begin{enumerate}
\item 
It is clear that $\Call{totalSize}{C_L} \leq a_L$. The left child is a full triangle, so $b_L = 0$.
\item 
There are no circles in $C_R$, and the right child is a full triangle ($b_R=0$), so the invariant trivially holds.
\end{enumerate}
\end{description}
\end{proof}

\begin{proof}[Proof of Theorem \ref{thm:correctness}]
In the binary tree's initial configuration, every node in the tree is validly packed (Definition \ref{dfn:validpacking}) as there are no circles in any of the nodes. Similarly, the Packing Invariant (Definition \ref{dfn:packinginvariant}) holds on the initial empty packing described in Section \ref{section:onlinesplitpackingalgorithm}.

The binary tree can only be modified through calls to the \Call{Repack}{} Algorithm. \Call{Repack}{$C$,$S$} can only be called once per insertion, through Algorithm \ref{alg:insertionalgorithm}. Inductively assuming that the Packing Invariant held on the packing after the previous insertion, by Lemma \ref{lem:insertionrepackcapacity}, \Call{Repack}{$C$,$S$} will only be called when the Packing Invariant (Definition \ref{dfn:packinginvariant}) holds on $S$ with the circles $C$. By then applying Lemmas \ref{lem:repackcorrectnesstriangles} and \ref{lem:repackcorrectnesssquares} inductively, the statements outlined in these two theorems will hold on all rightmost nodes $S$ and below. 

One consequence of this is that the Packing Invariant will now hold on $S$ and all its descendant rightmost nodes. For each ancestor $S'$ of $S$, statement 1 of the Packing Invariant continues to hold as $S'$ is not changed, statement 2 holds by Lemma \ref{lem:insertionrepackcapacity}, and statement 3 continues to hold as no circles are removed from its set of contained circles. Thus, the Packing Invariant continues to hold throughout the packing.

We can then show inductively that every node is validly packed. By Theorem \ref{prop:repacktermination}, every circle is packed at some finite height. Let $H$ be the largest such height over all circles currently packed. All nodes at heights greater than $H$ are validly packed as there are no circles contained in these nodes.
Assume inductively that all nodes at heights greater than $h$ are validly packed, and consider any node $N$ at height $h$.

If $N$ is not a rightmost node, then $N$ will be the left child of some node. $N$ cannot be $S$.
If $N$ is a descendant of $S$, then $N$ is validly packed by Lemmas \ref{lem:repackcorrectnesstriangles} and \ref{lem:repackcorrectnesssquares}.
If $N$ is not a descendant of $S$, then $N$ will be unchanged by the repack, and thus remains validly packed.

If $N$ is a rightmost node,
By Lemma \ref{lem:childrenwithinbounds}, the left and right children are within the bounds of N and their bounds do not overlap.
Both children are packed validly by the induction hypothesis, thus $N$ is also validly packed.

Inductively, this shows that all nodes $N$, up to and including the root, are validly packed.




    

\end{proof}


\section{Proof of Cost Bound}
\label{section:proofofcostbound}
In this section, we show an amortised reallocation cost of $O(c(1+s^2)\log_{1+s^2}(\frac{1}{c}))$ when inserting a circle of size $c$ into an $s$-shape. If the region is a square, the cost bound becomes $O(c\log_2(\frac{1}{c}))$. Note that $s$ depends only on the shape of the root node, as the children of $s$-shapes are also $s$-shapes. \longshort{}{We first state some results that are shown in the full version of the paper.}



\begin{lemma}
Whenever the \Call{Repack}{} algorithm instantiates a new slack shape $S$, its right child will be empty (contains no circles within its bounds).
\label{lem:emptyrightchilds}
\end{lemma}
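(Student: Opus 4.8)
The plan is to argue directly from the case structure of the \Call{Repack}{} algorithm, so this is a short proof rather than a computation. The key observation, which is already isolated inside the proof of Lemma \ref{lem:shapeinvariantholds}, is that among the four triangle cases of Section \ref{section:repackingatriangle} and the two square cases of Section \ref{section:repackingasquare}, the only one that can produce a slack shape is \textbf{Case 2}. Every other case sets $a_L = \Call{TotalSize}{C_L}$ (``the left child will be packed tightly''), which makes the total size of the circles in the left child equal to its capacity; by Definition \ref{dfn:tightandslackshapes} the resulting rightmost node is then a \emph{tight} shape, and a tight shape is by definition not slack. Hence any newly instantiated slack shape $S$ must have arisen through Case 2 of \Call{Repack}{} run on a $b$-semihat or on a square.

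First I would recall precisely what Case 2 does in both settings: for triangles (total size $\leq a_{lg}^*$) and for squares (total size $\leq 0.5a$), the algorithm places \emph{all} of the circles of $C$ into $C_L$ and leaves $C_R = \emptyset$. Since the circles contained in the right child of $S$ are exactly the circles of $C_R$, it follows immediately that the right child contains no circles, i.e. it is empty, which is exactly the assertion of the lemma. The recursion into the right child in this case therefore proceeds with an empty circle set, consistent with the claim.

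There is essentially no calculation to carry out here, so I would not expect a genuine obstacle. The one point that deserves explicit confirmation — and which I would treat as the crux of the write-up — is the claim that none of the tight-producing cases can ever satisfy the ``not a tight shape'' requirement of a slack shape. This is dispatched by noting that Cases 1, 3, and 4 on triangles and Case 1 on squares all set $a_L = \Call{TotalSize}{C_L}$, so the left child is saturated and the node is tight; thus these cases are disqualified from producing a slack shape and cannot furnish a counterexample. With that confirmed, the lemma follows directly from the construction of $C_R$ in Case 2.
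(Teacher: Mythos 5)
Your argument is correct and is exactly the paper's proof: the paper likewise observes that only Case 2 (for both triangles and squares) can produce a slack shape, since all other cases pack the left child tightly, and that Case 2 leaves $C_R = \emptyset$. You have simply spelled out the same reasoning in more detail.
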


\begin{proof}
Slack shapes are only produced by Case 2 for \Call{Repack}{} on both triangles and squares. Both cases produce empty right children.
\end{proof}

Algorithm \ref{alg:insertionalgorithm} only repacks slack shapes. Thus, we define a potential function that allocates potential only to slack shapes. As we use circle areas as our cost metric, the reallocation cost of repacking a shape $S$ is at most the capacity of $S$. When a slack shape is newly instantiated after a repack, the right child is initialised as empty (Lemma \ref{lem:emptyrightchilds}). When the shape is to be repacked (when \Call{Repack}{$C$,$S$} is called), the right child, including the newly-inserted circle, would be over capacity. As the capacity of the right child of a slack shape is always exactly $\frac{1}{1+s^2}$ of its parent's capacity (Lemma \ref{lem:shapeinvariantholds}), we define the potential allocated to a slack shape as $(1+s^2)\times\Call{totalSize}{rightChild}$.

With this amount of potential, the shape only needs to draw potential from itself to repack itself. Immediately after a repack, the shape, as well as all its descendants, will store $0$ potential. This is as all newly instantiated slack shapes have empty right children (Lemma \ref{lem:emptyrightchilds}). A newly-inserted circle only contributes to the potential of the slack shapes it passes through, including the one it eventually repacks. As a newly-inserted circle of size $c$ only passes through at most $\floor{\log_{1+s^2} \frac{a}{c}} + 1$ slack shapes before a repack is called (Lemma \ref{lem:maxslackshapes}), the amortised cost of inserting a circle of size $c$ is $c(1+s^2)(\floor{\log_{1+s^2} \frac{a}{c}} + 1)$. Thus we have the following result (Theorem \ref{thm:insertionsamortisedcost}):

\begin{theorem}
\label{thm:insertionsamortisedcost}
In the case of insertions only, using the Online Split Packing Algorithm, we obtain an amortised cost of $O(c(1+s^2)\log_{1+s^2} \frac{1}{c})$ to insert a circle of size $c$ into an $s$-shape.
\end{theorem}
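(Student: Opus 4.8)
The plan is to run the standard amortised (potential-function) analysis using exactly the potential introduced just before the statement: assign to each slack rightmost node $S'$ of capacity $a'$ a potential $(1+s^2)\cdot\Call{totalSize}{$R'$}$, where $R'$ is its right child, and let $\Phi\ge 0$ be the sum of these over all slack shapes in the tree (tight shapes and non-rightmost nodes carry no potential, and $\Phi=0$ initially since the tree contains no circles). I would bound the amortised cost of a single insertion by $(\text{actual reallocation cost})+\Delta\Phi$ and show it is at most $(1+s^2)c\bigl(\floor{\log_{1+s^2}\frac{a}{c}}+1\bigr)$, where $a$ is the fixed capacity of the root $s$-shape; summing over all insertions and using $\Phi\ge 0$ then converts this per-step bound into a valid aggregate bound.

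First I would account for the descent of Algorithm \ref{alg:insertionalgorithm}. Since each recursive call moves into the right child of a rightmost node, the new circle $c$ becomes a contained circle of every node on the root-to-repack path, so $\Call{totalSize}{$R'$}$ grows by $c$ — and hence the potential grows by $(1+s^2)c$ — for each slack shape $S'$ that $c$ passes through; tight shapes change no potential. By Lemma \ref{lem:maxslackshapes} the circle passes through at most $\floor{\log_{1+s^2}\frac{a}{c}}+1$ slack shapes, so the total potential injected by the descent is at most $(1+s^2)c\bigl(\floor{\log_{1+s^2}\frac{a}{c}}+1\bigr)$.

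Next I would show the repack contributes nothing positive beyond this injection. Let $S$, of capacity $a_S$, be the slack shape on which \Call{Repack}{} fires. The \textbf{else} branch of Algorithm \ref{alg:insertionalgorithm} is reached only when $\Call{totalSize}{$R$}+c\ge a_S/(1+s^2)$ (the right child's capacity, by Lemma \ref{lem:shapeinvariantholds}), so once the injected $(1+s^2)c$ is credited to $R$ the potential stored at $S$ is at least $(1+s^2)\cdot\frac{a_S}{1+s^2}=a_S$. The actual reallocation cost is the combined area of the circles in the subtree of $S$, which is at most $a_S$ by statement 2 of the Packing Invariant (available through Lemma \ref{lem:insertionrepackcapacity}). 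By Lemma \ref{lem:emptyrightchilds} every slack shape created by the repack has an empty right child, so the potential of $S$ and all its descendants drops to $0$; the released potential (at least $a_S$) therefore cancels the actual cost (at most $a_S$), and the repack adds nothing to $\Delta\Phi$ beyond the injection already counted above.

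Combining the two steps, the amortised cost of inserting $c$ is at most the potential injected along its path, $(1+s^2)c\bigl(\floor{\log_{1+s^2}\frac{a}{c}}+1\bigr)$; treating the root capacity $a$ as a constant absorbs both the additive ``$+1$'' and the change of base from $\frac ac$ to $\frac 1c$ into the $O(\cdot)$, giving $O\bigl(c(1+s^2)\log_{1+s^2}\frac1c\bigr)$, which specialises to $O(c\log_2\frac1c)$ for a square ($s=1$). I expect the main obstacle to be the second step's bookkeeping: one must verify simultaneously that the potential released by a repack is at least its true reallocation cost \emph{and} that the rebuilt subtree is left with zero potential, so that the expensive repack never surfaces in the amortised bound. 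This is precisely where the two structural facts — that a slack shape's right child has capacity $\frac{1}{1+s^2}$ of its parent (Lemma \ref{lem:shapeinvariantholds}) and that freshly instantiated slack shapes have empty right children (Lemma \ref{lem:emptyrightchilds}) — become indispensable.
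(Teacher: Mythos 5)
Your proposal is correct and follows essentially the same route as the paper: the identical potential function $(1+s^2)\times$ (total size of the right child) on slack shapes, with Lemma \ref{lem:emptyrightchilds} zeroing the rebuilt subtree, Lemma \ref{lem:shapeinvariantholds} guaranteeing the released potential covers the repack cost, and Lemma \ref{lem:maxslackshapes} bounding the potential injected along the descent. Your write-up is in fact slightly more explicit than the paper's (e.g., invoking the Packing Invariant via Lemma \ref{lem:insertionrepackcapacity} to bound the actual repack cost by the capacity of $S$, and noting $\Phi\ge 0$ to pass from the per-step to the aggregate bound), but the argument is the same.
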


\section{Online Split Packing for Insertions and Deletions}
\label{section:insertionsanddeletions}
Given any tight online packing algorithm $A$ for insertions only, there is a simple way to extend it to a packing algorithm where both insertions and arbitrary deletions are allowed, by allowing an arbitrarily small amount of slack space. For any fixed $\epsilon \in (0,1)$, the algorithm will achieve a packing density of $a(1-\epsilon)$ on a shape of capacity $a$.

\begin{algorithm}
\caption{Insertions and Deletions with Slack}\label{alg:insertanddeletewithslack}
\begin{algorithmic}[1]
\Procedure{Insert}{$x,S$}
    \If {$x.size$ + \Call{totalSize}{$S$} $\geq S.capacity$}
\State  Remove all inactive objects in $S$, and repack all the active objects into the positions they would be in if they were inserted one-by-one via algorithm $A$'s \Call{Insert}{} operation.
    \Else
\State  Use algorithm $A$'s \Call{Insert}{} operation to insert $x$ into $S$, and mark $x$ as active.
    \EndIf
\EndProcedure
\Procedure{Delete}{$x,S$}
\State  Mark $x$ as inactive.
\EndProcedure
\end{algorithmic}
\end{algorithm}

The basic idea is to perform deletions lazily, only actually removing deleted items through repacking when we run out of space. For the \Call{Insert}{} operation in Algorithm \ref{alg:insertanddeletewithslack}, removing and repacking all the active objects in $S$ has a reallocation cost at most the capacity of $S$. After the repack, the total size of $S$ is at most $(1-\epsilon)\times S.capacity$, so between repacks, at least $\epsilon \times S.capacity$ of insertions must have been done. Thus we need an additional amortised cost of $c \times \frac{1}{\epsilon}$ when inserting an object of size $c$. Suppose that the original insertion algorithm has an amortised reallocation cost of $O(f(c))$ when inserting an object of size $c$. When we allow deletions using Algorithm \ref{alg:insertanddeletewithslack}, we then have an amortised cost of $O(f(c) + c\frac{1}{\epsilon})$ per insertion. We note that Algorithm \ref{alg:insertanddeletewithslack} does not need to know the value of $\epsilon$ being used.

Applying this method in the context of circle packing with the earlier described Online Split Packing Algorithm for insertions only, we obtain the following result (Theorem \ref{thm:insertdeleteamortisedcost}):

\begin{theorem}
\label{thm:insertdeleteamortisedcost}
When allowing both insertions and deletions into an $s$-shape of capacity $a$, for any fixed $\epsilon > 0$, the Online Split Packing Algorithm achieves a packing density of $(1-\epsilon)a$, with an amortised reallocation cost of $O(c((1+s^2)\log_{1+s^2} \frac{1}{c} + \frac{1}{\epsilon}))$ for inserting a circle of area $c$. More specifically, for the insertion and deletion of circles into a square, we obtain an amortised reallocation cost of $O(c(\log_2 \frac{1}{c} + \frac{1}{\epsilon}))$.
\end{theorem}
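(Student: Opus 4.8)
The plan is to obtain Theorem~\ref{thm:insertdeleteamortisedcost} as a direct instantiation of the generic reduction in Algorithm~\ref{alg:insertanddeletewithslack}, taking the underlying insertions-only algorithm $A$ to be the Online Split Packing Algorithm, whose amortised insertion cost is $f(c) = c(1+s^2)\log_{1+s^2}\frac{1}{c}$ by Theorem~\ref{thm:insertionsamortisedcost} and which packs any $s$-shape to its critical density $a$. First I would establish the density claim. Deletions are handled lazily, so at any moment the region physically holds both active and inactive circles; a direct $A$-insertion is performed only when the new circle plus the physical total stays strictly below the capacity $a$, so $A$'s critical-density guarantee applies and the packing remains valid. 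When instead the physical total would reach $a$, the cleanup branch discards all inactive circles and replays the active ones through $A$. Since the adversary is constrained so that the active total never exceeds $(1-\epsilon)a < a$, the set of active circles that must be repacked is always within $A$'s packing density and can be packed; hence the algorithm attains packing density $(1-\epsilon)a$.

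For the cost I would split the reallocation charge into two parts. The first is the cost of the direct $A$-insertions, which is amortised $O(f(c))$ per circle by Theorem~\ref{thm:insertionsamortisedcost}. The second is the cost of cleanups. Each cleanup repacks volume at most the capacity $a$ (our cost metric is area), so it costs at most $a$. Immediately after a cleanup only active circles remain, so the physical total equals the active total and is at most $(1-\epsilon)a$; a subsequent cleanup is triggered only once the physical total grows back up to $a$, which---because lazy deletions never shrink the physical footprint---requires at least $\epsilon a$ of newly inserted area. Charging each cleanup's cost of $a$ against the $\ge \epsilon a$ of insertion volume since the previous cleanup yields an amortised cleanup cost of $\frac{c}{\epsilon}$ for a circle of area $c$. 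Combining the two parts gives $O(f(c) + \frac{c}{\epsilon}) = O\big(c((1+s^2)\log_{1+s^2}\frac{1}{c} + \frac{1}{\epsilon})\big)$. Specialising to a square, where $s = 1$, gives $(1+s^2) = 2$ and $\log_{1+s^2} = \log_2$, producing the stated bound $O(c(\log_2\frac{1}{c} + \frac{1}{\epsilon}))$.

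The step I expect to require the most care is reconciling $A$'s amortised guarantee with the fact that each cleanup resets $A$'s internal configuration. Theorem~\ref{thm:insertionsamortisedcost} is proved by a potential argument, so applying its amortised bound afresh after every cleanup is only legitimate if the potential ``reinjected'' by each replay is controlled. I would handle this by noting that the potential assigned in Section~\ref{section:proofofcostbound} is non-negative and, being $(1+s^2)$ times the total area held in right children of slack shapes, never exceeds $(1+s^2)a$ on a region of capacity $a$. Since the number of cleanups is at most the total inserted area divided by $\epsilon a$, the extra potential introduced across all cleanups contributes at most a further $O(\frac{1+s^2}{\epsilon})$ per unit of inserted area, which (as $s$ is fixed by the root shape) is already absorbed into the $O(\frac{c}{\epsilon})$ term. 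Everything else is routine substitution into the generic bound.
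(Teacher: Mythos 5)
Your argument is essentially the paper's own: Section~\ref{section:insertionsanddeletions} proves the theorem exactly by this generic lazy-deletion reduction, with the cleanup cost bounded by the capacity $a$ and charged against the $\geq \epsilon a$ of insertion volume that must occur between cleanups, combined with Theorem~\ref{thm:insertionsamortisedcost}. Your third paragraph goes beyond the paper, which never discusses how the cleanup interacts with the potential function of Section~\ref{section:proofofcostbound}; raising this is a legitimate and worthwhile point, and your resolution is correct in substance, but the justification as written is too quick. The total potential is a \emph{sum} over all slack shapes of $(1+s^2)\cdot\Call{totalSize}{rightChild}$, and a single circle contributes to every slack shape whose right child contains it (up to $\Theta(\log_{1+s^2}\frac{1}{c})$ of them), so the bound does not follow merely from the total area being at most $a$. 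What saves it is Lemma~\ref{lem:shapeinvariantholds}: each slack shape's right child has capacity exactly $\frac{1}{1+s^2}$ of its parent's, so the capacities of successive slack rightmost nodes decay geometrically and the total potential telescopes to at most $a\sum_{j\geq 0}(1+s^2)^{-j} = \frac{1+s^2}{s^2}a \leq 2a$. Note also that if you only use the bound $(1+s^2)a$ as stated, the reinjected potential charges $O(c\frac{1+s^2}{\epsilon})$ per unit of inserted area, which is weaker than the claimed $O(c\frac{1}{\epsilon})$ term; the sharper bound $2a$ is what recovers the theorem exactly as stated.
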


\section{Conclusion} 
We have adapted the Split Packing Algorithm to handle an online sequence of insertions and deletions and pack arbitrarily close to critical density by allowing reallocations. Our cost bound is asymptotically equal across $s$-shapes with differing values of $s$. 

The Split Packing algorithm has also been shown to work for packing shapes other than circles, like squares and octagons, into triangles. The paper goes on to define a new type of shape, a ``Gem'', which represents the most general type of shape that the Split Packing Algorithm can handle. Due to the close relationship between the algorithms, it is likely that a similar of generalisation would apply to the Online Split Packing Algorithm. 

We note that the deletion procedure defined in Section \ref{section:insertionsanddeletions} works for any tight packing algorithm for insertions only into a fixed space. The deletion procedure can also be shown to work not only for tight packing algorithms, but also for packing algorithms that achieve packing densities that are arbitrarily close to the critical density. 

The current reallocation cost bounds apply only to volume costs. A possible direction of future work would be to understand how the cost bounds differ for the other cost models like unit cost (constant cost for each circle reallocation). 

\appendix
\section{Appendix: Proofs of Lemmas}

This appendix contains the proofs which have been omitted from the main description of the Online Split Packing Algorithm in the paper.

\begin{figure}[!h]
  \centering
    \begin{subfigure}[b]{.5\linewidth}
      \centering
      \includegraphics[width=.9\linewidth]{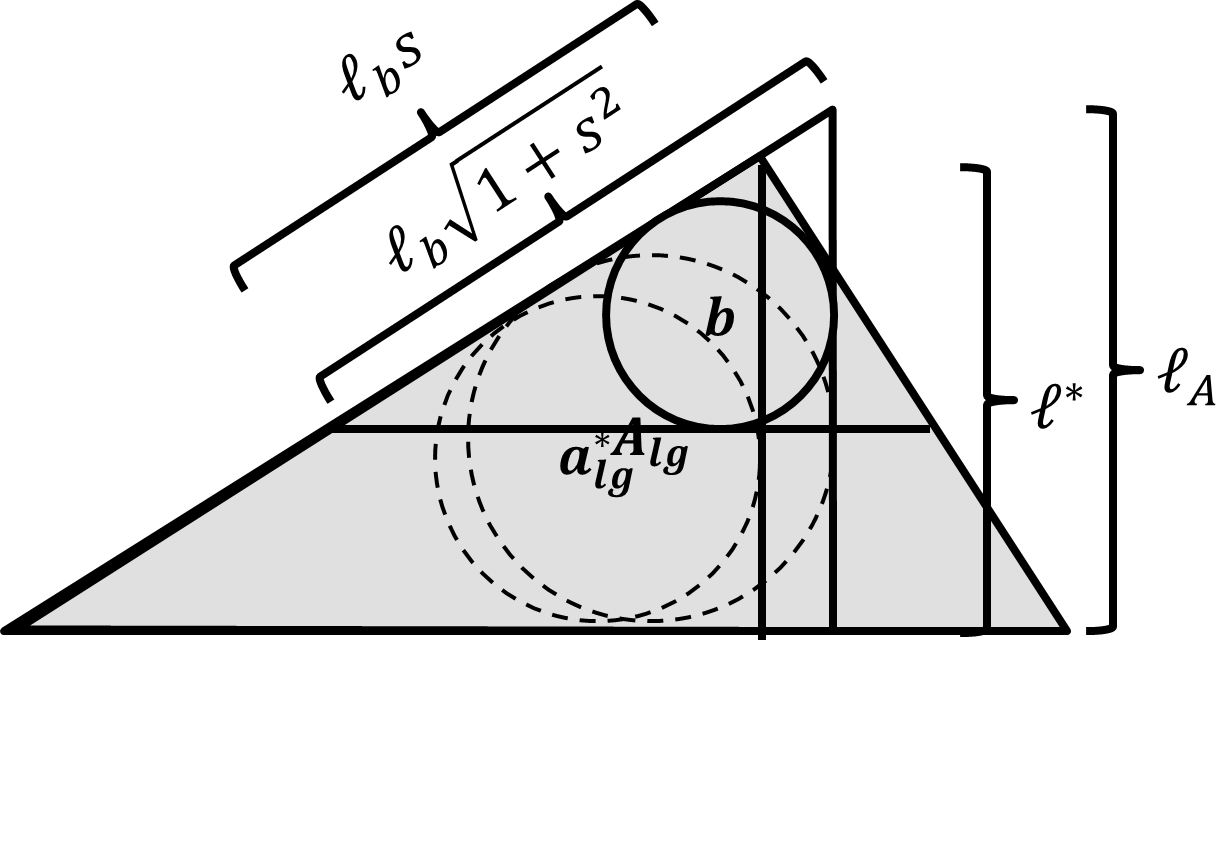}
      \caption{Long Side}
      \label{fig:trianglelongsidemeasurements}
    \end{subfigure}%
    \begin{subfigure}[b]{.5\linewidth}
      \centering
      \includegraphics[width=.9\linewidth]{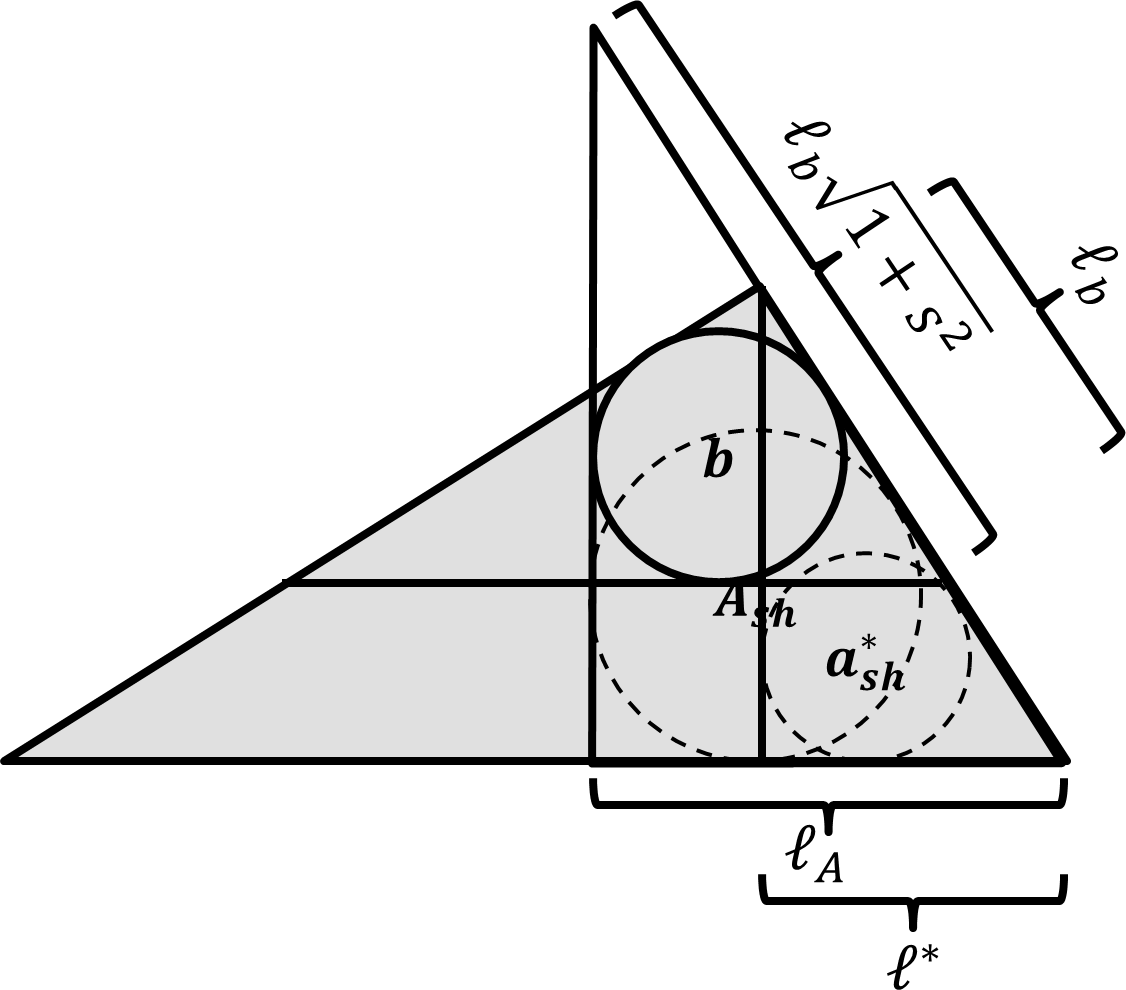}
      \caption{Short Side}
      \label{fig:triangleshortsidemeasurements}
    \end{subfigure}%
  \caption{Measurements for computing a triangle's $b$-curve.}
  \label{fig:measurementsfortriangles}
\end{figure}

\subsection{Lemma \protect\ref{lem:trianglebsize}: Triangle $b$-curve size}
\label{proof:trianglebsize}
\begin{proof}
~\\ Suppose that the bounds causes a $b$-curve to be formed for some $b \geq 0$.

\noindent For the case of the long child, we have three similar triangles, corresponding to the incircle areas of $a_{lg}^*$, $a_{lg}$ and $b$ respectively. In the case of the short child, the incircles are of areas $a_{sh}^*$, $a_{sh}$ and $b$ instead.

\noindent Let the length of the shortest sides of these triangles be $\ell^*$, $\ell_A$ and $\ell_b$ respectively.

\noindent \textbf{Long Child:} (Note $a_{lg} := a_{lg}^* + \delta a$, for some $\delta \geq 0$)
\\ From Figure \ref{fig:trianglelongsidemeasurements}, by computing the length of the part of the shape extending out of the original triangle, we get:
\\ $\ell_A\sqrt{1+s^2} - \ell^*\sqrt{1+s^2} = \ell_b\sqrt{1+s^2}-\ell_b\times s$
\\ $\ds (\frac{\ell_A}{\ell^*} - 1)\Big(\frac{\sqrt{1+s^2}}{\sqrt{1+s^2} - s}\Big) = \frac{\ell_b}{\ell_*}$
\\ $\ds \Big(\sqrt{\frac{a_{lg}}{a_{lg}^*}} - 1\Big)\Big(\frac{\sqrt{1+s^2}}{\sqrt{1+s^2} - s}\Big) = \sqrt{\frac{b}{a_{lg}^*}}$
\\ $\ds b = \Big(\frac{(\sqrt{1+s^2})(\sqrt{a_{lg}} - \sqrt{a_{lg}^*})}{\sqrt{1+s^2} - s}\Big)^2$
\\ $\ds b = a\Big(\frac{(\sqrt{1+s^2})\big(\sqrt{a_{lg}/a} - \sqrt{a_{lg}^*/a}\big)}{\sqrt{1+s^2} - s}\Big)^2$
\\ $\ds b = a\Big(\frac{(\sqrt{1+s^2})\big(\sqrt{a_{lg}^*/a + \delta} - \sqrt{a_{lg}^*/a}\big)}{\sqrt{1+s^2} - s}\Big)^2$
\\ $\ds b = a\Big(\frac{(\sqrt{1+s^2})\big(\sqrt{\frac{s^2}{1+s^2} + \delta} - \frac{s}{\sqrt{1+s^2}}\big)}{\sqrt{1+s^2} - s}\Big)^2$
\\ $\ds b = a\Big(\frac{\sqrt{s^2 + \delta(1+s^2)} - s}{\sqrt{1+s^2} - s}\Big)^2$
\\ 
\\ \textbf{Short Child:} (Note $a_{sh} := a_{sh}^* + \delta a$, for some $\delta \geq 0$)
\\ From Figure \ref{fig:triangleshortsidemeasurements}, by computing the length of the part of the shape extending out of the original triangle, we get:
\\ $\ell_A\sqrt{1+s^2} - \ell^*\sqrt{1+s^2} = \ell_b\sqrt{1+s^2}-\ell_b$
\\ $\ds (\frac{\ell_A}{\ell^*} - 1)\Big(\frac{\sqrt{1+s^2}}{\sqrt{1+s^2} - 1}\Big) = \frac{\ell_b}{\ell_*}$
\\ $\ds \Big(\sqrt{\frac{a_{sh}}{a_{sh}^*}} - 1\Big)\Big(\frac{\sqrt{1+s^2}}{\sqrt{1+s^2} - 1}\Big) = \sqrt{\frac{b}{a_{sh}^*}}$
\\ $\ds b = \Big(\frac{(\sqrt{1+s^2})(\sqrt{a_{sh}} - \sqrt{a_{sh}^*})}{\sqrt{1+s^2} - 1}\Big)^2$
\\ $\ds b = a\Big(\frac{(\sqrt{1+s^2})\big(\sqrt{a_{sh}/a} - \sqrt{a_{sh}^*/a}\big)}{\sqrt{1+s^2} - 1}\Big)^2$
\\ $\ds b = a\Big(\frac{(\sqrt{1+s^2})\big(\sqrt{a_{sh}^*/a + \delta} - \sqrt{a_{sh}^*/a}\big)}{\sqrt{1+s^2} - 1}\Big)^2$
\\ $\ds b = a\Big(\frac{(\sqrt{1+s^2})\big(\sqrt{\frac{1}{1+s^2} + \delta} - \frac{1}{\sqrt{1+s^2}}\big)}{\sqrt{1+s^2} - 1}\Big)^2$
\\ $\ds b = a\Big(\frac{\sqrt{1 + \delta(1+s^2)} - 1}{\sqrt{1+s^2} - 1}\Big)^2$
\end{proof}

\subsection{Lemma \protect\ref{lem:trianglepackinglemma}: Triangle Packing Lemma}
\label{proof:trianglepackinglemma}
\begin{proof}
~\\ \textbf{Short Child:}
\\ $\displaystyle b_{sh}^\delta/a = \Big(\frac{\sqrt{1+\delta(1+s^2)} - 1}{\sqrt{1+s^2} - 1}\Big)^2$
\\ Define $y:=\sqrt{1+s^2}$, and $h:= \sqrt{1+\delta(1+s^2)}$, we have $\delta = \frac{h^2-1}{1+s^2} = \frac{h^2-1}{y^2}$
\\ For $\delta > b_{sh}^\delta/a$, we need $\delta > \Big(\frac{\sqrt{1+\delta(1+s^2)} - 1}{\sqrt{1+s^2} - 1}\Big)^2$
\\ $\iff \frac{h^2-1}{y^2} > \Big(\frac{h-1}{y-1}\Big)^2$
\\ $\iff (h^2-1)(y-1)^2 > (h-1)^2y^2$ as $y > 1$
\\ $\iff (h^2-1)(y^2-2y+1) > (h^2-2h+1)y^2$ as $y > 1$
\\ $\iff h^2(y^2-2y+1) + (-y^2+2y-1) > h^2y^2 - 2hy^2 + y^2$
\\ $\iff h^2(-2y+1) + h(2y^2) + (-2y^2+2y-1) > 0$
\\ As $y>1$, we have $-2y+1 < 0$, so the quadratic curve is concave downwards.
\\ Using the quadratic formula, we obtain the following expression for the roots:
\\ $\ds h = \frac{-2y^2 \pm \sqrt{4y^4 - 4(-2y+1)(-2y^2+2y-1)}}{2(-2y+1)}$
\\ $\ds = \frac{-2y^2 \pm 2\sqrt{y^4 - 4y^3 + 6y^2 - 4y + 1}}{2(-2y+1)}$
\\ $\ds = \frac{-y^2 \pm \sqrt{(y-1)^4}}{-2y+1}$
\\ $\ds = \frac{-y^2 \pm (y-1)^2}{-2y+1}$
\\ $\ds = \frac{-y^2 \pm (y^2-2y+1)}{-2y+1}$
\\ 
\\ As $(y-1)^2 \geq 0$ and $-2y+1 < 0$, the expression is positive for $\ds \frac{-2y+1}{-2y+1} \leq h \leq \frac{-2y^2 + 2y - 1}{-2y+1}$
\\ which is equivalent to:
\\ $\ds 1 \leq h \leq \frac{1}{2}\big(2y-1 + \frac{1}{2y-1}\big)$
\\ $\iff 1 \leq \sqrt{1+\delta y^2} \leq \frac{1}{2}\big(2y-1 + \frac{1}{2y-1}\big)$
\\ $\iff \frac{0}{y^2} \leq \delta \leq \frac{1}{y^2}\big(\frac{1}{4}\big((2y-1) + \frac{1}{2y-1}\big)^2 - 1\big)$
\\ $\iff 0 \leq \delta \leq \frac{1}{y^2}\big(\frac{1}{4}\big((2y-1)^2 + 2 + \frac{1}{(2y-1)^2}\big) - 1\big)$
\\ $\iff 0 \leq \delta \leq \frac{1}{y^2}\big(\frac{1}{4}\big((2y-1)^2 - 2 + \frac{1}{(2y-1)^2}\big)\big)$
\\ $\iff 0 \leq \delta \leq \frac{1}{4y^2}\big((2y-1) - \frac{1}{2y-1}\big)^2$
\\ $\iff 0 \leq \delta \leq \big(\frac{(2y-1)^2 - 1}{2y(2y-1)}\big)^2$
\\ $\iff 0 \leq \delta \leq \big(\frac{4y^2-4y}{4y^2-2y}\big)^2$
\\ $\iff 0 \leq \delta \leq \big(\frac{2y-2}{2y-1}\big)^2$
\\ $\iff 0 \leq \delta \leq \big(1 - \frac{1}{2y-1}\big)^2$
\\ $\iff 0 \leq \delta \leq \big(1 - \frac{1}{2\sqrt{1+s^2}-1}\big)^2$
\end{proof}

\begin{figure}[!h]
  \centering
    \begin{subfigure}[b]{.4\linewidth}
      \centering
      \includegraphics[width=.8\linewidth]{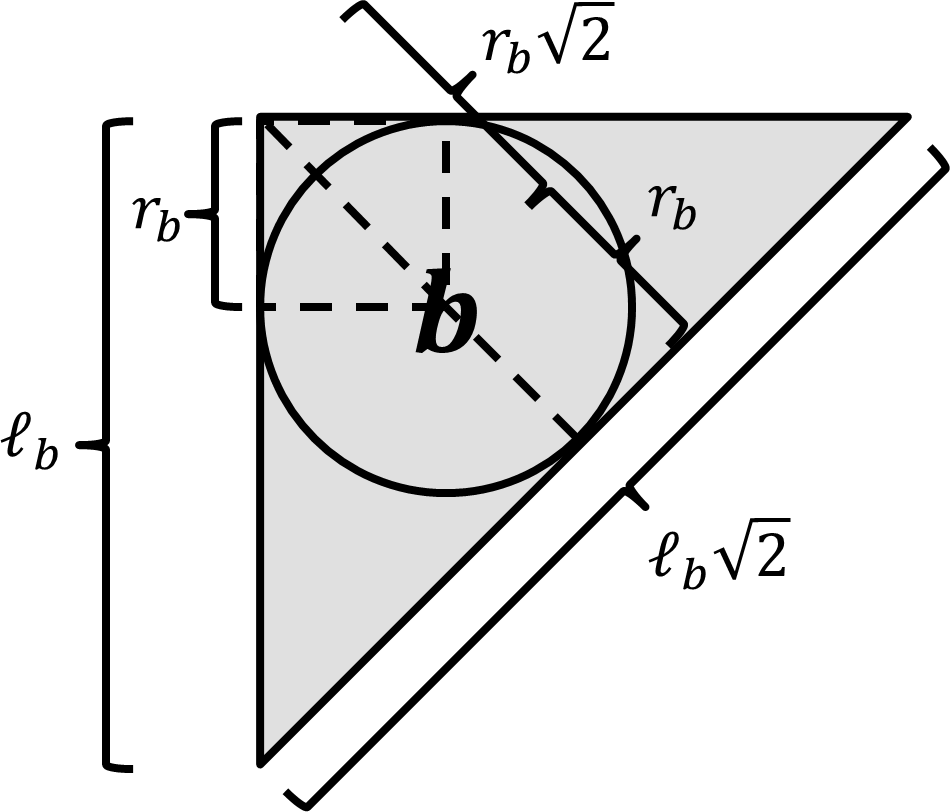}
      \caption{Computing $r_b$ in terms of $\ell_b$}
      \label{fig:squareincirclemeasurements}
    \end{subfigure}%
    \begin{subfigure}[b]{.6\linewidth}
      \centering
      \includegraphics[width=.8\linewidth]{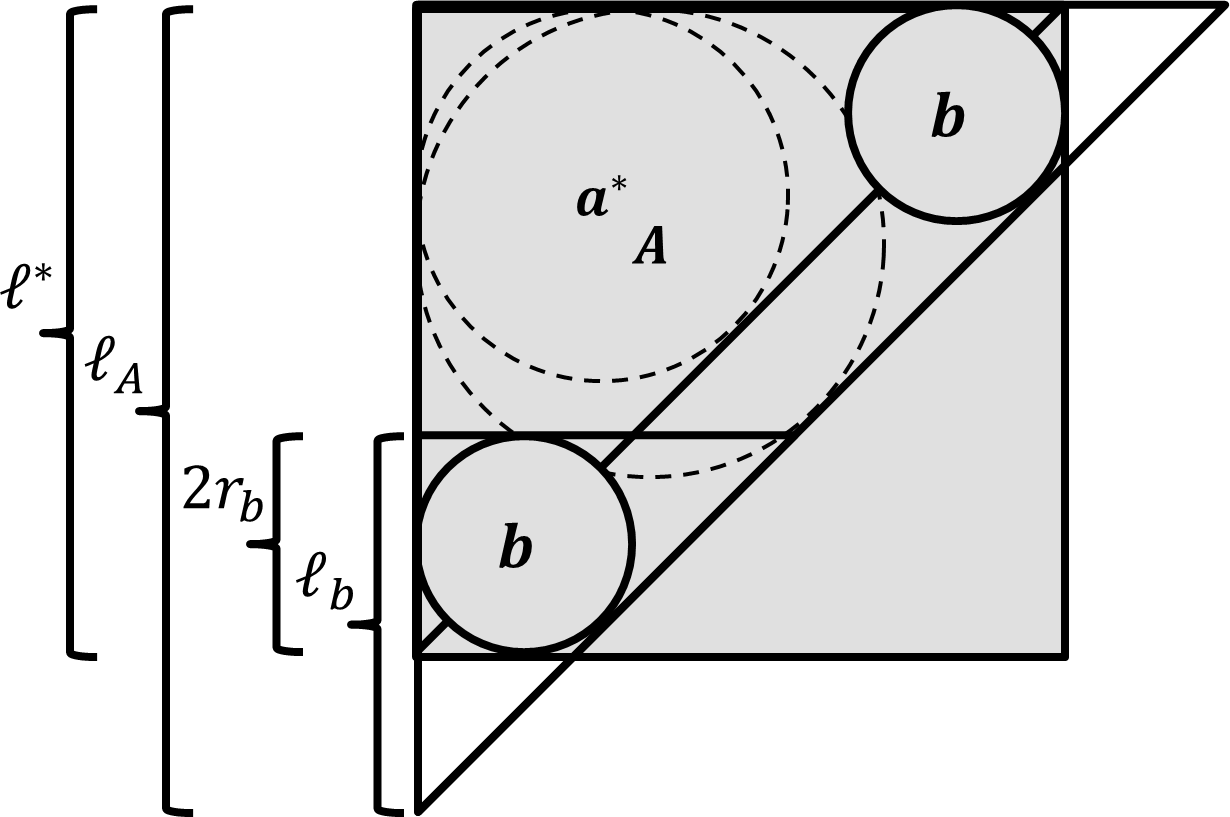}
      \caption{Splitting a Square}
      \label{fig:squaremeasurements}
    \end{subfigure}%
  \caption{Measurements for computing a square's $b$-curve.}
  \label{fig:measurementsforsquares}
\end{figure}

\subsection{Lemma \protect\ref{lem:squarebsize}: Square $b$-curve size}
\label{proof:squarebsize}
\begin{proof}
~\\ Suppose that the bounds causes a $b$-curve to be formed for some $b \geq 0$.

\noindent We have three similar triangles, corresponding to the incircle areas of $a^* := a/2$, $A$ and $b$ respectively. Let the length of the shortest sides of these triangles be $\ell^*$, $\ell_A$ and $\ell_b$ respectively. Let $r_b$ be the radius of a circle of area $b$.

\noindent First, we compute $r_b$ in terms of $\ell_b$ for a square. From Figure \ref{fig:squareincirclemeasurements},
\\ $r_b\sqrt{2} + r_b = \frac{1}{2}\ell_b \sqrt{2}$
\\ $r_b = \frac{1}{\sqrt{2}}\times \frac{1}{\sqrt{2}+1} \ell_b$
\\ $= (1 - \frac{1}{\sqrt{2}})\ell_b$

\noindent (Note $A := a^* + \delta a$, for some $\delta \geq 0$)
\\ From Figure \ref{fig:squaremeasurements}, by computing the length of the part of the shape extending out of the original square, we get:
\\ $\ell_A - \ell^* = \ell_b - 2r_b$
\\ $\ell_A - \ell^* = \ell_b - 2(1 - \frac{1}{\sqrt{2}})\ell_b$
\\ $\ell_A - \ell^* = \ell_b(\sqrt{2}-1)$
\\ $\ds \frac{\ell_A}{\ell^*} - 1 = \frac{\ell_b}{\ell^*}(\sqrt{2}-1)$
\\ $\ds \sqrt{\frac{A}{a^*}} - 1 = \sqrt{\frac{b}{a^*}}(\sqrt{2}-1)$
\\ $\ds \sqrt{\frac{A}{a^*}} - 1 = \sqrt{\frac{b}{a^*}}(\sqrt{2}-1)$
\\ $\ds b = \Big(\frac{\sqrt{A} - \sqrt{a^*}}{\sqrt{2} - 1}\Big)^2$
\\ $\ds b = a\Big(\frac{\sqrt{A/a} - \sqrt{a^*/a}}{\sqrt{2} - 1}\Big)^2$
\\ $\ds b = a\Big(\frac{\sqrt{0.5 + \delta} - \sqrt{0.5}}{\sqrt{2} - 1}\Big)^2$
\\ $\ds b = a\Big(\frac{\sqrt{1 + 2\delta} - 1}{2 - \sqrt{2}}\Big)^2$
\end{proof}

\subsection{Lemma \protect\ref{lem:squarepackinglemma}: Square Packing Lemma}
\label{proof:squarepackinglemma}
\begin{proof}
~\\ Letting $h := \sqrt{1+2\delta}$, we have $\delta = \frac{h^2-1}{2}$
\\ $\ds \delta \geq b_{sq}^\delta/a = \Big(\frac{\sqrt{1+2\delta} - 1}{\sqrt{2} - 2} \Big)^2$
\\ $\ds \iff \frac{h^2-1}{2} \geq \Big(\frac{h-1}{\sqrt{2}-2}\Big)^2$
\\ $\ds \iff (h^2-1)(\sqrt{2}-2)^2 \geq 2(h-1)^2$
\\ $\ds \iff (h^2-1)(6-4\sqrt{2}) \geq 2h^2-4h+2$
\\ $\ds \iff h^2(4-4\sqrt{2}) + 4h + (4\sqrt{2} - 8) \geq 0$
\\ $\ds \iff h^2(1-\sqrt{2}) + h + (\sqrt{2} - 2) \geq 0$

\noindent As $4 - 4\sqrt{2} < 0$, the quadratic curve is concave downwards.
\\ Using the quadratic formula, the roots are:
\\ $\ds h = \frac{-1 \pm \sqrt{1 - 4(1-\sqrt{2})(\sqrt{2}-2)}}{2(1-\sqrt{2})}$
\\ $\ds = \frac{-1 \pm \sqrt{17 - 12\sqrt{2}}}{2(1-\sqrt{2})} = \frac{-1 \pm \sqrt{(3 - 2\sqrt{2})^2}}{2(1-\sqrt{2})} = \frac{-1 \pm (3 - 2\sqrt{2})}{2-2\sqrt{2}}$
\\ Thus the above inequality holds if and only if
\\ $\ds \frac{2 - 2\sqrt{2}}{2-2\sqrt{2}} \leq h \leq \frac{2\sqrt{2} - 4}{2 - 2\sqrt{2}}$
\\ $\iff \ds 1 \leq \sqrt{1+2\delta} \leq \sqrt{2}$
\\ $\iff \ds 0 \leq \delta \leq 0.5$
\end{proof}

\subsection{Lemma \protect\ref{lem:threedeltash}}
\label{proof:threedeltash}
\begin{proof}
~\\ Let $y = \sqrt{1+s^2}$
\\ Thus we have:
\\ $\ds \dsh = \Big(\frac{2y-2}{2y-1}\Big)^2 = 4\Big(\frac{y-1}{2y-1}\Big)^2$
\\ $\ds \alg/a = \frac{y^2-1}{y^2}$

\noindent Therefore,
\\ $\ds 3\delta_{sh} > \alg/a \iff 3\times 4\Big(\frac{y-1}{2y-1}\Big)^2 > \frac{(y+1)(y-1)}{y^2}$
\\ $\iff 12\frac{y-1}{(2y-1)^2} > \frac{y+1}{y^2}$
\\ $\iff 12(y^3-y^2) > (y+1)(2y-1)^2$
\\ $\iff 12(y^3-y^2) > (y+1)(4y^2-4y+1)$
\\ $\iff 12y^3- 12y^2 > 4y^3 - 3y + 1$
\\ $\iff 8y^3- 12y^2 + 3y - 1 > 0$
\\ Which is true for all $y \geq \sqrt{2}$, i.e. for all $s \geq 1$.
\end{proof}

\subsection{Lemma \protect\ref{lem:case4inequality}}
\label{proof:case4inequality}
\begin{proof}
~\\ Suppose $2\dsh \leq \alg/a$

\noindent Then we have $(\alg/a - 2\dsh)(1+s^2) \geq 0$, so
\\ $\sqrt{s^2 + (\alg/a - 2\dsh)(1+s^2)} - s \geq 0$.

\noindent We want to show that $\dsh > b_{lg}^{\alg/a - 2\dsh} / a$ on the long child.
\\ We have:
\\ $\dsh > b_{lg}^{\alg/a - 2\dsh} / a$
\\ $\iff \ds \dsh > \Big(\frac{\sqrt{s^2 + (\alg/a - 2\dsh)(1+s^2)}-s}{\sqrt{1+s^2}-s}\Big)^2$
\\ (by Lemma \ref{lem:trianglebsize}, where $b$ is defined on the long child)
\\ $\iff \ds \Big(1 - \frac{1}{2\sqrt{1+s^2}-1}\Big)^2 > \Big(\frac{\sqrt{s^2 + (\alg/a - 2\dsh)(1+s^2)}-s}{\sqrt{1+s^2}-s}\Big)^2$
\\ 
\\ $\iff \ds \Big(\frac{2\sqrt{1+s^2}-2}{2\sqrt{1+s^2}-1}\Big)^2 > \Big(\frac{\sqrt{s^2 + (\alg/a - 2\dsh)(1+s^2)}-s}{\sqrt{1+s^2}-s}\Big)^2$
\\ 
\\ $\iff \ds \frac{2\sqrt{1+s^2}-2}{2\sqrt{1+s^2}-1} > \frac{\sqrt{s^2 + (\alg/a - 2\dsh)(1+s^2)}-s}{\sqrt{1+s^2}-s}$
\\ (As $s \geq 1$, the numerators and denominators of both sides are nonnegative)

\noindent Letting $y = \sqrt{1+s^2}$, we have

\noindent $\iff \ds 2\frac{y-1}{2y-1} > \frac{\sqrt{2(y^2-1) - 8\big(\frac{y-1}{2y-1}\big)^2 y^2}-s}{y-s}$
\\ $\iff 2(y-1)(y-s) > \Big(\sqrt{2(y^2-1) - 8\big(\frac{y-1}{2y-1}\big)^2 y^2}-s\Big)(2y-1)$
\\ (as the denominators are always positive)

\noindent $\iff 2(y-1)(y-s) > \sqrt{2(2y-1)^2(y^2-1) - 8(y-1)^2 y^2}-s(2y-1)$
\\ $\iff 2(y^2-y-ys+s) + s(2y-1) > \sqrt{8y^3 - 14y^2 + 8y - 2}$
\\ $\iff 2y^2 - 2y + s > \sqrt{8y^3 - 14y^2 + 8y - 2}$

\noindent As $s \geq 1$, it suffices to show that:
\\ $2y^2 - 2y + 1 > \sqrt{8y^3 - 14y^2 + 8y - 2}$
\\ As $2y^2 - 2y + 1$ is always nonnegative, we have:
\\ $\iff (2y^2 - 2y + 1)^2 > 8y^3 - 14y^2 + 8y - 2$
\\ $\iff 4y^4 - 16y^3 + 22y^2 - 12y + 3 > 0$
\\ Which is always true as this polynomial has no roots.

\noindent Therefore $2\dsh \leq \alg/a$ implies that $\dsh > b_{lg}^{\alg/a - 2\dsh} / a$ on the long child.
\end{proof}


\tiny
\bibliography{report}  

\end{document}